\newcommand{\wt}{\mathop{\hbox{\rm wt}}\nolimits}
\newcommand{\img}{\mathop{\hbox{\rm Im}}\nolimits}
\newcommand{\cA}{\mathcal{A}}
\newcommand{\ddt}{\partial_\tau}
\newcommand{\Cl}{\mathop{\hbox{\rm Cl}}\nolimits}
\newcommand{\cM}{{\mathcal{M}}}
\newcommand{\sM}{{\mathscr{M}}}
\newcommand{\cY}{\mathcal{Y}}
\newcommand{\gen}{\mathbin{\,:\,}}
\newtheorem{theorem}{Theorem}[section]
\newtheorem{definition}{Definition}[section]
\newtheorem{proposition}[theorem]{Proposition}
\newtheorem{lemma}[theorem]{Lemma}
\long\def\oMit#1{}
\def\Here.{\smash{\colorbox{yellow}{\color{red}\Large$\boldsymbol*$} }}
\def\Cd#1{\left[\begin{smallmatrix}#1\end{smallmatrix}\right]}
\def\4{\hphantom{\hbox{$-$}}}
\def\vC#1{\vcenter{\hbox{\hss#1\hss}}}
\definecolor{Hey}{rgb}{1,0,.4}
\definecolor{orange}{rgb}{.8,.4,0}
\definecolor{plum}{rgb}{.4,0,.6}
\definecolor{gray}{rgb}{.6,.6,.6}
\definecolor{Green}{rgb}{0,.6,0}  \def\cG#1{{\color{Green}#1}}
\definecolor{Red}{rgb}{.8,0,0}    \def\cR#1{{\color{Red}#1}}
\definecolor{Blue}{rgb}{0,0,.8}   \def\cB#1{{\color{Blue}#1}}
\def\Ic{\hbox{\sf c\kern-3pt\rule[.2pt]{.5pt}{4pt}\kern2.5pt}}
\newcounter{xmpl}\resetby{section}{xmpl}
\newenvironment{example}{\par\noindent\addtocounter{xmpl}{1}%
                          \def\@currentlabel{{\thesection.\arabic{xmpl}}}%
                          {\bfseries Example~\arabic{section}.\arabic{xmpl}}%
                           ~\ignorespaces\small}
                        {\hrulefill\rule{.45pt}{1ex}\par}
\newcounter{cnst}\resetby{section}{cnst}
\newenvironment{construction}{\par\noindent\addtocounter{cnst}{1}%
                          \def\@currentlabel{{\thesection.\arabic{cnst}}}%
                          {\bfseries Construction~\thesection.\arabic{cnst}}%
                           ~\ignorespaces\slshape}
                        {\hrulefill\rule{.45pt}{1ex}\par}
\newcounter{crit}\resetby{section}{crit}
\def\Ft#1{\footnote{#1}}
\newdimen\parshift\parshift=\parindent
 \long\def\@footnotetext#1{\insert\footins{\reset@font\footnotesize\interlinepenalty%
  \interfootnotelinepenalty\splittopskip\footnotesep\splitmaxdepth\dp\strutbox%
   \floatingpenalty\@MM\hsize\columnwidth\addtolength{\hsize}{-2\parshift}
    \@parboxrestore\protected@edef\@currentlabel{\csname p@footnote\endcsname\@thefnmark}
      \color@begingroup
       \@makefntext{\rule\z@\footnotesep\ignorespaces#1\@finalstrut\strutbox\vglue1mm}
        \color@endgroup}}
 \long\def\@makefntext#1{\hglue\parshift
                         \vbox{\noindent\hb@xt@0em{\hss\@makefnmark\,}#1}}
 \font\rOpe=cmsy10                        % Ersatz for the non-standard rope font
 \def\ktl{{\hbox{\rOpe\char'170}}}        % top end for left-handed rope
 \def\kbl{{\hbox{\rOpe\char'170}}}        % bottom end for left-handed rope
 \def\kcr{{\reflectbox{\rOpe\char'170}}}        % right-handed rope
 \def\ktr{{\reflectbox{\rOpe\char'170}}}        % top end for right-handed rope
 \def\kbr{{\reflectbox{\rOpe\char'170}}}        % bottom end for right-handed rope
 \def\Border{\vbox{\hsize0pt% braided border
        \setlength{\unitlength}{1mm}
        \newcount\xco
        \newcount\yco
        \xco=-21
        \yco=12
        \begin{picture}(0,0)(-7.5,0)
        \put(\xco,\yco){$\ktl$}
        \advance\yco by-1
        {\loop
        \put(\xco,\yco){$\kcr$}
        \advance\yco by-2
        \ifnum\yco>-240
        \repeat
        \put(\xco,\yco){$\kbl$}}
        \xco=170
        \yco=12
        \put(\xco,\yco){$\ktr$}
        \advance\yco by-1
        {\loop
        \put(\xco,\yco){$\kcr$}
        \advance\yco by-2
        \ifnum\yco>-240
        \repeat
        \put(\xco,\yco){$\kbr$}}
        % \scalebox requires the "graphicx" package !
        \put(-19.5,13){\scalebox{.54}{State University of New York
            Physics Department|University of Maryland Center for
            String and Particle  Theory \&\ Physics Department|%
            Howard University Physics \&\ Astronomy Department}}
        \put(-19.5,-241.5){\scalebox{.486}{University of Alberta Mathematical
            and Statistical Sciences Department|Pepperdine University Natural
            Sciences Division|Bard College Mathematics
            Program|University of Washington
            Mathematics Department}}
        \end{picture}
        \par\vskip-8mm}}
\definecolor{UMred}{rgb}{.9,.05,.2}
 \def\UMbanner{\vbox{\hsize0pt% The "UM" banner
        \setlength{\unitlength}{.4mm}
        \thicklines%\color{UMred}
        \begin{picture}(0,0)(-30,-10)
        \put(165,16){\line(1,0){4}}
        \put(170,16){\line(1,0){4}}
        \put(180,16){\line(1,0){4}}
        \put(175,0){\line(1,0){4}}
        \put(180,0){\line(1,0){4}}
        \put(185,0){\line(1,0){4}}
        \put(169,0){\line(0,1){16}}
        \put(170,0){\line(0,1){16}}
        \put(179,0){\line(0,1){16}}
        \put(180,0){\line(0,1){16}}
        \put(184,0){\line(0,1){16}}
        \put(185,0){\line(0,1){16}}
        \put(169,16){\oval(8,32)[bl]}
        \put(170,16){\oval(8,32)[br]}
        \put(179,0){\oval(8,32)[tl]}
        \put(185,0){\oval(8,32)[tr]}
        \end{picture}
        \par\vskip-6.5mm
        \thicklines}}
\def\LT@makecaption#1#2#3{%
  \LT@mcol\LT@cols c{\hbox to\z@{\hss\parbox[t]\LTcapwidth{%
    \sbox\@tempboxa{#1{{\bf#2}: }#3}%
    \ifdim\wd\@tempboxa>\hsize
      #1{{\bf#2}: }#3%
    \else
      \hbox to \hsize{\hfil\box\@tempboxa\hfil}%
    \fi
    \endgraf\vskip\baselineskip}%
  \hss}}}
\begin{document}
%:Jim's border on the title page
\thispagestyle{empty}
\vbox{\Border\UMbanner}
 \noindent
 \today\hfill UMDEPP 08-020, SUNY-O/670
  \vfill
 \begin{center}
{\LARGE\sf\bfseries Adinkras for Clifford Algebras,\\[2mm]
and Worldline Supermultiplets}\\[5mm]
  \vfill
{\sf\bfseries C.F.\,Doran$^a$, M.G.\,Faux$^b$, S.J.\,Gates, Jr.$^c$, T.\,H\"{u}bsch$^d$,\\
     K.M.\,Iga$^e$, G.D.\,Landweber$^f$ {\rm and} R.L.~Miller$^g$}\\[2mm]
{\small\it
  $^a$Department of Mathematical and Statistical Sciences,\\[-1mm]
      University of Alberta, Edmonton, Alberta, T6G 2G1 Canada%
  \\[-4pt] {\tt  doran@math.ualberta.ca}
  \\
  $^b$Department of Physics,\\[-1mm]
      State University of New York, Oneonta, NY 13825%
  \\[-4pt] {\tt  fauxmg@oneonta.edu}
  \\
  $^c$Center for String and Particle Theory,\\[-1mm]
      Department of Physics, University of Maryland, College Park, MD 20472%
  \\[-4pt] {\tt  gatess@wam.umd.edu}
  \\
  $^d$Department of Physics \&\ Astronomy,\\[-1mm]
      Howard University, Washington, DC 20059
  \\[-4pt] {\tt  thubsch@howard.edu}
  \\
  $^e$Natural Science Division,\\[-1mm]
      Pepperdine University, Malibu, CA 90263%
  \\[-4pt] {\tt  Kevin.Iga@pepperdine.edu}
  \\
 $^f$Department of Mathematics, Bard College,\\[-1mm]
     Annandale-on-Hudson, NY 12504-5000%
  \\[-4pt] {\tt  gregland@bard.edu}
  \\
 $^g$Department of Mathematics,\\[-1mm]
      University of Washington, Seattle, WA 98105%
  \\[-4pt] {\tt  rlmill@math.washington.edu}
 }\\[5mm]
  \vfill
{\sf\bfseries ABSTRACT}\\[3mm]
\parbox{5.55in}{%\parindent=2pc\noindent
Adinkras are a graphical depiction of representations of the $N$-extended supersymmetry algebra in one dimension, on the worldline.  These diagrams represent the component fields in a supermultiplet as vertices, and the action of the supersymmetry generators as edges.
 %\par
In a previous work, we showed that the chromotopology (topology with colors) of an Adinkra must come from a doubly even binary linear code.  Herein, we relate Adinkras to Clifford algebras, and use this to construct, for every such code, a supermultiplet corresponding to that code.  In this way, we correlate the well-known classification of representations of Clifford algebras to the classification of Adinkra chromotopologies.}
\end{center}
  \vfill
\noindent PACS: 04.65.+e

\clearpage\setcounter{page}{1}\pagenumbering{roman}
\tableofcontents

\clearpage\setcounter{page}{1}\pagenumbering{arabic}
\section{Introduction}
 \label{Intro}
All physical systems involve a notion of time, and therefore admit an essentially unique time-like dimensional reduction. Analyses based on this defer including the technical details related to higher-dimensional Lorentz algebra. Such is also the case of supersymmetric theories, which end up having an $N$-extended worldline supersymmetry, where we restrict $0\leq N\leq32$ having in mind superstrings and their $M$- and $F$-theory extensions. In particular, we are exclusively interested in classifying the off-shell representations of supersymmetry, which are logically indispensable in supersymmetric quantum theories.

To this end, Ref.\cite{rA} introduced Adinkras: graphs in which vertices represent component fields in a supermultiplet (white for bosons and black for fermions), and variously decorated edges to represent the action of supersymmetry amongst the component fields. In Ref.\cite{r6-3}, we showed that every connected Adinkra has the chromotopology (1-skeleton topology with black/white equipartitioned vertices and $N$-colored edges) of a quotient of a $N$-cube, $[0,1]^N$, by a doubly even binary linear block code. The term ``doubly even'' refers to the fact that all codewords in such codes are $N$-bit binary numbers wherein the number of 1's is divisible by 4.

In this paper, we show that any doubly even code can be used to construct an Adinkra and hence, a supermultiplet of $N$-extended worldline supersymmetry.  This is done explicitly, with several examples for small values of $N$, utilizing a relationship between $N$-extended supersymmetry and Clifford algebras. However, several inequivalent Adinkras may correspond to equivalent supermultiplets. We show that this ambiguity is resolved by noting the details of the correspondence between the given Adinkras on one hand, and the supermultiplets on the other, to suitable representations of Clifford algebras in the middle. The details of this depend on the engineering dimensions of the component fields.

The paper is organized as follows: Section~\ref{s:susy} is a review of the supersymmetry algebra, introduces Clifford algebras and explains the relationship between the two.  Section~\ref{s:construct} explains how to take a code and produce a Clifford representation, and hence, a supermultiplet, with the corresponding chromotopology.  Section~\ref{s:ambiguity} relates these results to the classification of $D=1$ adinkraic supermultiplets, raises the issue that inequivalent Adinkras might represent equivalent supermultiplets, and discusses the conditions for this to happen. Section~\ref{s:C} summarizes our results, and Tables~\ref{redtable} and~\ref{ncstable} list all inequivalent Adinkras (and corresponding supermultiplets), with their node choice symmetry and up to the choice of edge dashing for $N\leq4$; Table~\ref{N_5_table} then indicates the onset of the combinatorial explosion at $N=5$.
 Appendix~\ref{s:linalg} collects some useful facts about the linear algebra of binary codes, and Appendix~\ref{s:red} contrasts reducibility {\em vs}.\ decomposability of supermultiplets.

\section{Supersymmetry and Clifford Algebras}
 \label{s:susy}
The $N$-extended supersymmetry algebra without central charges in one dimension is formed by the time-derivative, $\ddt$, and the $N$ Hermitian operators $Q_1, \cdots, Q_N$ satisfying\Ft{The convention of the positive sign on the right-hand side of Eq.\eq{eSuSy} ensures that the spectrum of the worldline Hamiltonian, $H=i\hbar\ddt$, is bounded from below as is the free energy in supersymmetric systems.}
\begin{equation}
 \big\{\,Q_I\,,\,Q_J\,\big\}=2i\,\d_{IJ}\,\ddt,\quad
  \big[\,\ddt\,,\,Q_I\,\big] =0,\quad I,J=1,\cdots,N. \label{eSuSy}
\end{equation}
Noting that the engineering dimension $[\ddt]=1$ and \Eq{eSuSy} imply that $[Q_I]=\inv2$,
in this section we determine some essential facts about the transformation rules of these operators on supermultiplets consisting of component fields each of which has a consistent engineering dimension, and to the important class of supermultiplets in which the $Q$-actions corresponds to Adinkras.

A supermultiplet $\sM$ is a real, unitary, finite-dimensional representation of the algebra\eq{eSuSy}, in the following sense: $\sM$ consists of a finite set of real bosonic fields, $(\f_1\6(\t),\dots,\f_m\6(\t))$, and real fermionic fields, $(\j_1\6(\t),\dots,\j_m\6(\t))$, jointly called {\em component fields\/} and each of which is a function of time, $\t$. The collection $(\f_1\6(\t),\dots,\f_m\6(\t)|\j_1\6(\t),\dots,\j_m\6(\t))$ is closed\ft{The term ``closed'' herein means that the application of any $Q$-monomial on any component field is a linear combination of component fields and their $\t$-derivatives.} under the linear action of the $N$ Hermitian supersymmetry generators $Q_1,\dots,Q_N$, which satisfy\eq{eSuSy} and swap bosons and fermions.

As noted, we will be interested exclusively in {\em\/off-shell supermultiplets\/}, wherein no component field satisfies any (space)time differential equation. Supersymmetry then guarantees that the number of bosonic and the number of fermionic component fields in $\sM$ is the same.

\subsection{Adinkras}
Refs.\cite{rA,r6-1,r6-2} introduced and then studied Adinkras: diagrams that encode the transformation rules of the component fields $(\f_1\6(\t),\dots,\f_m\6(\t)|\j_1\6(\t),\dots,\j_m\6(\t))$ under the action of the supersymmetry generators $Q_1,\dots,Q_N$.
 Supermultiplets that can be described by Adinkras have a collection of bosonic and fermionic component fields and a collection of supersymmetry generators $Q_1,\dots,Q_N$, so that: ({\bf1})~Given a bosonic field $\f$ and a supersymmetry generator $Q_I$, the transformation rule for $Q_I$ of $\f$ is of the form
\begin{alignat}{3}
 \text{either}&&\qquad Q_I \f &= \pm\, \j, \label{eQB1}\\
 \text{or}&&\qquad Q_I \f &= \pm\, \ddt \j, \label{eQB2}
\intertext{for some fermionic field $\j$. ({\bf2})~Given instead a fermionic field $\eta$ and a supersymmetry generator $Q_I$, the transformation rule of $Q_I$ on $\eta$ is of the form}
 \text{either}&&\qquad Q_I \eta &= \pm\, i\,B, \label{eQF1}\\
 \text{or}&&\qquad Q_I \eta &= \pm\, i\,\ddt B, \label{eQF2}
\end{alignat}
for some bosonic field $B$.  In particular, these supersymmetry generators act linearly using first-order differential operators.  Furthermore, the supersymmetry algebra requires that
\begin{alignat}{3}
 Q_I \f &= \pm\,  \j \qquad&\Longleftrightarrow\qquad
 Q_I \j &= \pm\, i\,\ddt \f, \label{eQBF1}
\intertext{and}
 Q_I \f &= \pm\, \ddt\j \qquad&\Longleftrightarrow\qquad
 Q_I \j &= \pm\, i\,\f,\label{eQBF2}
\end{alignat}
and where the $\pm$ signs are correlated to preserve Eqs.\eq{eSuSy}.

More generally, suppose we label the bosons $\f_1,\dots,\f_m$ and the fermions $\j_1,\dots,\j_m$.  Choose an integer $I$ with $1\le I\le N$, and an integer $A$ with $1\le A\le m$.  For each such pair of integers, we consider the transformation rules for $Q_I$ on the boson $\f_A$, and we might expect that these will be of the form
\begin{equation}
Q_I \, \f_A(\t)= c\,\ddt^{\l}\, \j_B(\t),\label{eQB}
\end{equation}
where $c=\pm 1$, $\l=0$ or $1$, and $B$ is an integer with $1\le B\le m$, so that $\j_B$ is some fermion; each of $c,\l,B$ will, in general, depend on $I$ and $A$.  Note that
\begin{equation}
\l=[\f_A]-[\j_B]+\inv2,
 \label{eED}
\end{equation}
for $\f_A$ and $\j_B$ to have a definite engineering dimension---provided the transformation rules had only dimensionless constants as we assume throughout. For each such transformation rule, we will get a corresponding transformation rule for the $Q_I$ on the fermion $\j_B(\t)$ that looks like this:
\begin{equation}
Q_I \, \j_B(\t) = \frac{i}{c}\,\ddt^{1-\l}\, \f_A(\t).\label{eQF}
\end{equation}
\Eqs{eQB}{eQF} constitute all of the transformation rules on the bosons and fermions, respectively.

\begin{definition}\label{dAd}
A supermultiplet, $\sM$, is {\em\/adinkraic\/} if all of its supersymmetric transformation rules are of the form\eq{eQB} and\eq{eQF}.
 For each adinkraic supermultiplet, its {\em Adinkra\/}, $\cA_\sM$, is a directed graph, consisting of a set of vertices, $V$, a set of edges, $E$, a coloring $C$ of the edges, a set of their orientations, $O$, and a labeling $D$ of each edge corresponding to whether or not it is dashed. 

Each component field of $\sM$ is represented by a vertex in $\cA_\sM$: white for bosonic fields and black for fermionic ones, thus equipartitioning the vertex set $V\to W$.  Every transformation rule of the form\eq{eQB} is depicted by an edge connecting the vertex corresponding to $\f_A$ to the vertex corresponding to $\j_B$, and color the edge with the $I^\text{th}$ color.  We use a dashed edge if $c=-1$, and oriented it from $\f_A$ to $\j_B$ if $\l=0$ and the other way around if $\l=1$.
\end{definition}

\subsection{Relationship to Clifford Algebras and Valise Supermultiplets}
 \label{s:ClifAd}
It has been understood for a long time that the supersymmetry algebra\eq{eSuSy} has a formal similarity with the Clifford algebra generated by the Dirac $\G_I$ matrices\Ft{In \eq{eSuSy}, $I,J=1,\cdots,N$ count {\em fermionic\/} dimensions. Accordingly, the quadratic form, $\d_{IJ}$, occurring in the right-hand side of \Eq{eDirac} is positive definite. This is unlike the {\em typical\/} field theory use of the Clifford/Dirac algebra, where $I,J$ would count {\em bosonic\/} dimensions of spacetime, implying the Lorentzian signature for the quadratic form on the right-hand side of \Eq{eDirac}.  Another point is that traditionally there would be a minus sign on the right hand side; but this difference is equivalent to changing the metric from positive definite to negative definite.  In the language of Ref.~\cite{rLM}, this leads to the Clifford algebra $\Cl(0,N)$ instead of $\Cl(N)=\Cl(N,0)$.}:
\begin{equation}
 \big\{\, \G_I \,,\, \G_J \,\big\} = 2\,\d_{IJ}\,\Ione. \label{eDirac}
\end{equation}
One manifestation of this was the study of the spinning particle by Gates and Rana\cite{rGR1,rGR2}, resulting in the Scalar Supermultiplet (which we herein rename {\em Isoscalar Supermultiplet\/}), defined in terms of the ${\cal GR}(d,N)$ algebra, a form of the Clifford algebra.  We will review their work first, and more explicitly describe how it relates to Clifford algebras; this will motivate the more general construction of relating supermultiplets to a representation of the Clifford algebra.

The main idea relating supermultiplets and Clifford representations is to note that the supersymmetry algebra in one dimension and the Clifford algebra differ only in that in the former, there is a factor of $i$ and a derivative $\ddt$.  So, to turn a supermultiplet into a representation of the Clifford algebra, we can simply forget the factors of $i$ and the derivatives $\ddt$.\footnote{This construction is in a sense the same construction quotienting by the ideal $(H-1)$ in Ref.\cite{r6--1}, where $H=i\ddt$. This also removes the $\ZZ$-grading afforded by the notion of engineering dimension.}  If we wish to reverse this process, we note that the factors of $i$ are consistently in the right-hand side of the transformation rules for the fermions,\footnote{This ensures the reality of all component fields; see Ref.~\cite{rA,r6-3}.} but appearance of $\ddt$'s is determined only by \Eq{eED}.  Thus, ``forgetting'' the $\ddt$ loses information that can be restored only if we know the engineering dimensions of the component fields, and which turns out to permit a combinatorial plethora of choices.

To address this ambiguity, in this section we fix the engineering dimensions of all bosonic component fields in $\sM$ to be the same, and do the same for the fermions. Then, either $[\j]=[\f]+\inv2$ and the transformation rules for all bosons are of the type\eq{eQB1}, or
$[\j]=[\f]-\inv2$ and the transformation rules for all bosons are of the type\eq{eQB2}. Then, \Eqs{eQBF1}{eQBF2} enforce the transformation rules for all fermions to be, correspondingly, either of the type\eq{eQF1} or\eq{eQF2}. These two choices result, respectively, in Isoscalar or Isospinor supermultiplets.

 In corresponding Adinkras, all edges are oriented either from the bosons to the fermions, or the other way around. If we also draw the Adinkras so that all the edges are oriented upwards, then all the bosons are at the same level and all fermions at another, adjacent level; either all bosons are above the fermions, or the other way around. Collectively, we call these {\em valise supermultiplets}, and if they are adinkraic, they correspond to {\em valise Adinkras\/}\Ft{In Ref.\cite{rA}, these were called {\em base Adinkras}.}.
 For example, both $N=3$ Adinkras shown in Figure~\ref{valise} are valise Adinkras.
\begin{figure}[htb]
\begin{center}
 \begin{picture}(130,35)(0,0)
 \put(-3,3){\includegraphics{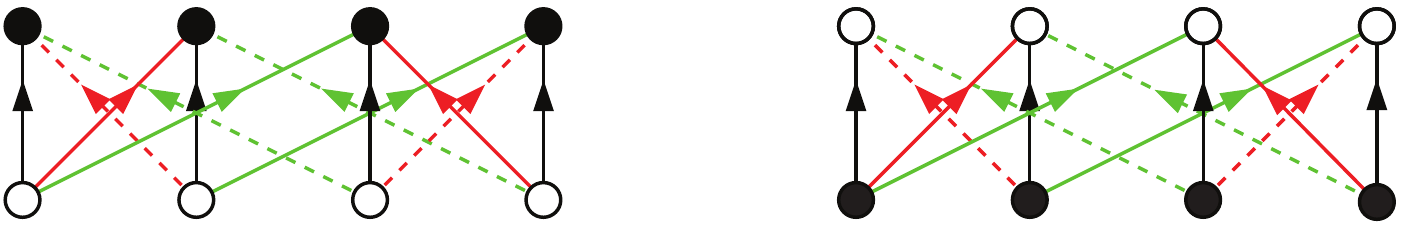}}
 \put(0,27){$\makebox[0in][c]{$\j_{(100)}$}$}
 \put(18,27){$\makebox[0in][c]{$\j_{(010)}$}$}
 \put(36,27){$\makebox[0in][c]{$\j_{(001)}$}$}
 \put(54,27){$\makebox[0in][c]{$\j_{(111)}$}$}
 \put(0,0){$\makebox[0in][c]{$\f_{(000)}$}$}
 \put(18,0){$\makebox[0in][c]{$\f_{(110)}$}$}
 \put(36,0){$\makebox[0in][c]{$\f_{(101)}$}$}
 \put(54,0){$\makebox[0in][c]{$\f_{(011)}$}$}
 \put(82,27){$\makebox[0in][c]{$B_{(100)}$}$}
 \put(100,27){$\makebox[0in][c]{$B_{(010)}$}$}
 \put(118,27){$\makebox[0in][c]{$B_{(001)}$}$}
 \put(136,27){$\makebox[0in][c]{$B_{(111)}$}$}
 \put(82,0){$\makebox[0in][c]{$\h_{(000)}$}$}
 \put(100,0){$\makebox[0in][c]{$\h_{(110)}$}$}
 \put(118,0){$\makebox[0in][c]{$\h_{(101)}$}$}
 \put(136,0){$\makebox[0in][c]{$\h_{(011)}$}$}
 \end{picture}
\end{center}
\caption{The two $N=3$ Valise Adinkras: Isoscalar and Isospinor}
\label{valise}
\end{figure}
The two valise Adinkras in Figure~\ref{valise} are related by the obvious transformation, called the Klein flip, which swaps bosons and fermions. With this in mind, we restrict most of the subsequent discussion to Isoscalar supermultiplets; the Klein flip then generates the analogous results for the Isospinor supermultiplet. The Adinkra of the $N=3$ such supermultiplets is shown in Figure~\ref{valise}.

We refer to these Adinkras as valises because in these cases the engineering dimensions of the vertices are set as closely together as possible, thereby encoding the topological
class of a larger set of Adinkras in the most economically compact manner. This reminds of a suitcase which can ``unpacked'', by raising vertices---see Theorems~4.1,~5.1 and their corollaries in Ref.\cite{r6-1}---to become other Adinkras spanning more engineering dimensions.

Subsequently, in Section~\ref{s:ambiguity}, we include other kinds of supermultiplets; Figure~\ref{nonvalise} displays two such Adinkras.  Ref.\cite{r6-1} shows that given any Adinkra, we can successively perform an operation called a vertex raise to generate a valise Adinkra that has the same topology as the original Adinkra.  Thus, to find all possible Adinkra topologies, it suffices to study the topologies of valise Adinkras.
\begin{figure}[htb]
\begin{center}
 \begin{picture}(130,50)(0,0)
 \put(-3,1.5){\includegraphics{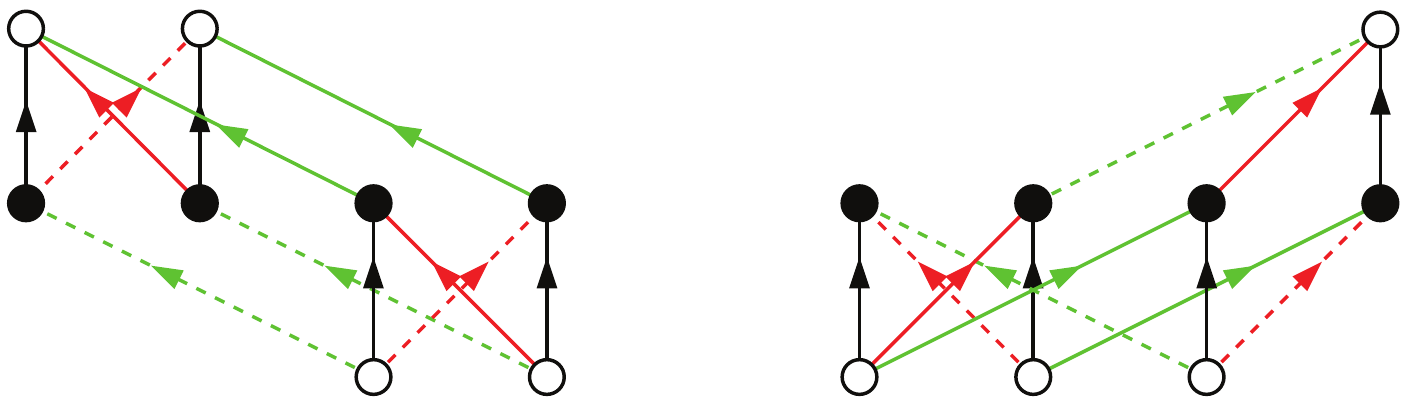}}
 \put(0,18){$\makebox[0in][c]{$\j_{(100)}$}$}
 \put(18,18){$\makebox[0in][c]{$\j_{(010)}$}$}
 \put(36,27){$\makebox[0in][c]{$\j_{(001)}$}$}
 \put(54,27){$\makebox[0in][c]{$\j_{(111)}$}$}
 \put(0,44){$\makebox[0in][c]{$F_{(000)}$}$}
 \put(18,44){$\makebox[0in][c]{$F_{(110)}$}$}
 \put(36,0){$\makebox[0in][c]{$\f_{(101)}$}$}
 \put(54,0){$\makebox[0in][c]{$\f_{(011)}$}$}
 \put(82,27){$\makebox[0in][c]{$\j_{(100)}$}$}
 \put(100,27){$\makebox[0in][c]{$\j_{(010)}$}$}
 \put(118,27){$\makebox[0in][c]{$\j_{(001)}$}$}
 \put(136,18){$\makebox[0in][c]{$\j_{(111)}$}$}
 \put(82,0){$\makebox[0in][c]{$\f_{(000)}$}$}
 \put(100,0){$\makebox[0in][c]{$\f_{(110)}$}$}
 \put(118,0){$\makebox[0in][c]{$\f_{(101)}$}$}
 \put(136,44){$\makebox[0in][c]{$F_{(011)}$}$}
 \end{picture}
\end{center}
\caption{Two $N=3$ non-valise Adinkras: a $(2|4|2)$- and a $(3|4|1)$-dimensional one}
\label{nonvalise}
\end{figure}

As Adinkras will play a prominent r\^ole in the presentation of our results, we pause to translate some Adinkras into the standard supersymmetry transformation relations. The left-hand side valise Adinkra in Figure~\ref{valise} encodes:
\begin{subequations}
 \label{eN3ISc}
\begin{alignat}{5}
 \makebox[0pt][c]{\hphantom{=}\bf Black edges}&&
 \makebox[0pt][c]{\hphantom{=}\bf\color{Red} Red edges}&&
 \makebox[0pt][c]{\hphantom{=}\bf\color{Green} Green edges}&\nn\\
 Q_1\,\f_{(000)}&=\j_{(100)}, \quad&\quad
 Q_2\,\f_{(000)}&=\j_{(010)}, \quad&\quad
 Q_3\,\f_{(000)}&=\j_{(001)}, \\
 Q_1\,\f_{(110)}&=\j_{(010)}, \quad&\quad
 Q_2\,\f_{(110)}&=-\j_{(100)}, \quad&\quad
 Q_3\,\f_{(110)}&=\j_{(111)}, \\
 Q_1\,\f_{(101)}&=\j_{(001)}, \quad&\quad
 Q_2\,\f_{(101)}&=-\j_{(111)}, \quad&\quad
 Q_3\,\f_{(101)}&=-\j_{(100)}, \\
 Q_1\,\f_{(011)}&=\j_{(111)}, \quad&\quad
 Q_2\,\f_{(011)}&=\j_{(001)}, \quad&\quad
 Q_3\,\f_{(011)}&=-\j_{(010)}, \\
 Q_1\,\j_{(100)}&=i\dot\f_{(000)}, \quad&\quad
 Q_2\,\j_{(100)}&=-i\dot\f_{(110)}, \quad&\quad
 Q_3\,\j_{(100)}&=-i\dot\f_{(101)}, \\
 Q_1\,\j_{(010)}&=i\dot\f_{(110)}, \quad&\quad
 Q_2\,\j_{(010)}&=i\dot\f_{(000)}, \quad&\quad
 Q_3\,\j_{(010)}&=-i\dot\f_{(011)}, \\
 Q_1\,\j_{(001)}&=i\dot\f_{(101)}, \quad&\quad
 Q_2\,\j_{(001)}&=i\dot\f_{(011)}, \quad&\quad
 Q_3\,\j_{(001)}&=i\dot\f_{(000)}, \\
 Q_1\,\j_{(111)}&=i\dot\f_{(011)}, \quad&\quad
 Q_2\,\j_{(111)}&=-i\dot\f_{(101)}, \quad&\quad
 Q_3\,\j_{(111)}&=i\dot\f_{(110)}. 
\end{alignat}
\end{subequations}

By contrast, the left-hand side non-valise Adinkra in Figure~\ref{nonvalise} encodes:
\begin{subequations}
 \label{e3cube}
\begin{alignat}{5}
 \makebox[0pt][c]{\hphantom{=}\bf Black edges}&&
 \makebox[0pt][c]{\hphantom{=}\bf\color{Red} Red edges}&&
 \makebox[0pt][c]{\hphantom{=}\bf\color{Green} Green edges}&\nn\\
 Q_1\, F_{(000)}&=\dot\j_{(100)}, \quad&\quad
 Q_2\, F_{(000)}&=\dot\j_{(010)}, \quad&\quad
 Q_3\, F_{(000)}&=\dot\j_{(001)}, \\
 Q_1\, F_{(110)}&=\dot\j_{(010)}, \quad&\quad
 Q_2\, F_{(110)}&=-\dot\j_{(100)}, \quad&\quad
 Q_3\, F_{(110)}&=\dot\j_{(111)}, \\
 Q_1\,\f_{(101)}&=\j_{(001)}, \quad&\quad
 Q_2\,\f_{(101)}&=-\j_{(111)}, \quad&\quad
 Q_3\,\f_{(101)}&=-\j_{(100)}, \\
 Q_1\,\f_{(011)}&=\j_{(111)}, \quad&\quad
 Q_2\,\f_{(011)}&=\j_{(001)}, \quad&\quad
 Q_3\,\f_{(011)}&=-\j_{(010)}, \\
 Q_1\,\j_{(100)}&=i F_{(000)}, \quad&\quad
 Q_2\,\j_{(100)}&=-i F_{(110)}, \quad&\quad
 Q_3\,\j_{(100)}&=-i\dot\f_{(101)}, \\
 Q_1\,\j_{(010)}&=i F_{(110)}, \quad&\quad
 Q_2\,\j_{(010)}&=i F_{(000)}, \quad&\quad
 Q_3\,\j_{(010)}&=-i\dot\f_{(011)}, \\
 Q_1\,\j_{(001)}&=i\dot\f_{(101)}, \quad&\quad
 Q_2\,\j_{(001)}&=i\dot\f_{(011)}, \quad&\quad
 Q_3\,\j_{(001)}&=i F_{(000)}, \\
 Q_1\,\j_{(111)}&=i\dot\f_{(011)}, \quad&\quad
 Q_2\,\j_{(111)}&=-i\dot\f_{(101)}, \quad&\quad
 Q_3\,\j_{(111)}&=i F_{(110)}.
\end{alignat}
\end{subequations}

The Reader should have no difficulty following suit with all other Adinkras presented herein.

\subsection{The Isoscalar Supermultiplet and the Clifford Algebra}
Generalizing Eqs.\eq{eN3ISc}, suppose we have an Isoscalar supermultiplet, $(\f_1,\dots,\f_m|\j_1,\dots,\j_m)$; write these as column vectors:
\begin{equation}
 \Phi=\begin{bmatrix}
               \f_1\\ \vdots\\ \f_m
            \end{bmatrix}
 \qquad\text{and}\qquad
 \Psi=\begin{bmatrix}
               \j_1\\ \vdots\\ \j_m
             \end{bmatrix}.
\end{equation}
The property of being valise corresponds to the following Ansatz:
\begin{align}
Q_I\Phi &= \IL_I\Psi,\label{scalarmult1}\\
Q_I\Psi &= i\,\IR_I\,\ddt\Phi,\label{scalarmult2}
\end{align}
where $\IL_I$ and $\IR_I$ are $m\times m$ real matrices to be determined. The supersymmetry algebra\eq{eSuSy} then implies
\begin{eqnarray}
\IL_I\IR_J+\IL_J\IR_I&=&2\,\d_{IJ}\Ione,\label{lr1}\\
\IR_I\IL_J+\IR_J\IL_I&=&2\,\d_{IJ}\Ione,\label{lr2}
\end{eqnarray}
where $\Ione$  is the $m\times m$ identity matrix. The $I=J$ cases of these equations imply that $\IR_I = \IL_I^{-1}$.
\Remk
The additional requirement\Ft{The variable sign here depends on the selection of the relative sign between the bosonic and the fermionic kinetic terms, $\inv2\dot\f^2$ and $i\j\dot\c$, respectively.} $\IL_I=\pm \IR_I^T$ is also often made\cite{rGR1}, but this is needed only when writing Lagrangians and will play no r\^ole in this paper.

We create, for each $I$, a $2m\times 2m$ real matrix $\G_I$ of the form
\begin{equation}
  \G_I\Defl \left[\begin{array}{c|c}
                      {\bf~0}&\IL_I\\\hline
                       \IR_I &{\bf0}
                  \end{array}\right].
 \label{gammalr}
\end{equation}
Equations\eq{lr1} and\eq{lr2} then imply that these $\G_1,\cdots,\G_N$ indeed satisfy \Eq{eDirac}, the algebra of the Dirac gamma matrices in dimension $N$, whence the notation.

We do not insist that $m$ be minimal here, nor do we adopt any particular convention for the type of the $\G_I$ matrices.  Rather, we consider all possible choices for $\G_I$ of all (finite) sizes, and note that for each, we have a valise supermultiplet. Refs.\cite{rGR1,rGR2} were mostly concerned with valise supermultiplets of a minimal $m$, wherein they were called `scalar multiplets' and `spinor multiplets', depending on whether the bosons or the fermions had the lower engineering dimension. In those references, the particular choices of matrices $\G_I$ was not described, beyond the fact that they were block off-diagonal.

The property of being block off-diagonal corresponds to the existence of a fermion number operator $(-1)^F$ which anti-commutes with the $\G_I$.  If we write our fields listing the bosons first, followed by the fermions, then $(-1)^F$ will be a diagonal matrix of the form
\begin{equation}
(-1)^F=\left[\begin{array}{c|c}
\Ione& {\bf0} \\\hline
{\bf0}& -\Ione\end{array}\right].\label{fermionnumber}
\end{equation}
We may well write $\G_0\Defl(-1)^F$.  Note that $\G_0$ anticommutes with the remaining $\G_I$:
\begin{equation}
 \G_0{}^2=+\Ione, \qquad \{\G_0,\G_I\}=0.
\end{equation}
The formal algebra generated by $\G_0, \G_1, \dots, \G_N$, is then defined by the anticommutation relations:
\begin{equation}
\{\G_I,\G_J\}=2\,\delta_{IJ}\Ione,\qquad \text{for }I,J=0,1,\cdots,N.\label{eDirac2}
\end{equation}
The algebra generated by the $\G_1,\dots,\G_N$ is the {\em Clifford algebra} $\Cl(0,N)$, and the inclusion of $\G_0$ extends $\Cl(0,N)$ into the Clifford algebra $\Cl(0,N{+}1)$. Any set of real matrices representing these $\G_0,\G_1,\dots,\G_N$ such that the above algebra closes is called a {\em Clifford representation}\cite{rLM}.  We will insist that $\G_0$ be diagonal and of the form given above\eq{fermionnumber}.

Thus, this construction describes a one-to-one correspondence between real valise supermultiplets and real Clifford representations.

\begin{definition}
Associated to the supermultiplet $\sM=(\f_1,\dots,\f_m|\j_1,\dots,\j_m)$,
the vector space $\IR^{2m}$ together with the $2m\times2m$ block off-diagonal matrices $\G_1,\cdots,\G_N$, defined in \Eq{gammalr}, generate the Clifford representation, $\cM$.\label{dA2C}
\end{definition}

\section{Constructing Valise Adinkras}
\label{s:construct}
The main construction in this paper is to take a description of an $N$-dimensional cube $I^N$ and a doubly even $[N,k]$-code $C$, and construct a Clifford representation.  The Isoscalar supermultiplet corresponding to this Clifford representation will then be a supermultiplet for $N$-extended worldline supersymmetry, whose Adinkra is valise, with a chromotopology given by $I^N/C$.

\subsection{Clifford Supermultiplets and Cubical Adinkras}
In Ref.\cite{r6-3}, we showed that the topology of every connected Adinkra is a quotient of an $N$-dimensional cube.  Thus, it is important to study how $N$-dimensional cubes arise as Adinkras for Clifford representations.  The representation in question is the Clifford algebra itself, $\Cl(0,N)$.

The Clifford algebra $\Cl(0,N)$ is a real $2^N$-dimensional vector space, spanned by products of the form $\G_{I_1}\cdots \G_{I_k}$, where $I_1<\dots<I_k$\cite{rLM}.  As a vector space, it splits as a direct sum of two vector spaces: the even and odd parts.  The even (resp. odd) part is the subspace spanned by products of even (resp. odd) numbers of $\G_I$ matrices.

The Clifford algebra $\Cl(0,N{+}1)$ acts on $\Cl(0,N)$ in the following way: for every $1\le I\le N$, $\G_I$ acts by multiplication on the left.  The operator $\G_0$ multiplies the even $\G$-monomials by $1$ and the odd $\G$-monomials by $-1$.  It is then easy to see that the Clifford algebra\eq{eDirac2} holds.

We take as a basis for $\Cl(0,N)$ the products $\G_{I_1}\dots \G_{I_k}$ as above.  Specifically, for every vertex of the cube $\vec{x}=(x_1,\dots,x_N)\in\{0,1\}^N$, we define a component field\Ft{Elements of a Clifford algebra can be used to realize the component fields of a supermultiplet, by allowing the element of the Clifford algebra to be a function of the time-like coordinate $\tau$.  This construction was explicitly carried out in equation (67) of \cite{rGLP}.} 
\begin{equation}
e_{\vec{x}}=\G_1{}^{x_1}\cdots \G_N{}^{x_N}.\label{cubedefine}
\end{equation}
We note that $\G_I e_{\vec{x}}=\G_I\G_1{}^{x_1}\dots\G_N{}^{x_N}$ can be transformed into the form\eq{cubedefine}, with perhaps an overall minus sign: we use the anticommutation of the $\G_I$'s and, if $x_I=1$, the fact that $\G_I{}^2=1$. This results in
\begin{equation}
\G_I\cdot e_{\vec{x}} = \pm \G_1{}^{x_1}\cdots \G_{I-1}{}^{x_{I-1}} \,\G_I{}^{1-x_I}\,
\G_{I+1}{}^{x_{I+1}}\cdots\G_N{}^{x_N}.\label{cubeedge}
\end{equation}
The sign is $+1$ if the number of $J$ with $x_J=1$ and $J < I$ is even, and is $-1$ otherwise.  If we define $(-1)^{|\vec{x} < I|}$ to be that sign, and define $\vec{x}\,\boxplus I$ to be the vector $(x_1,\ldots,x_{I-1},1-x_I,x_{I+1},\ldots,x_N)$, then this equation becomes
\begin{equation}
\G_I\cdot e_{\vec{x}} = (-1)^{|\vec{x} < I|} e_{\vec{x}\,\boxplus\,I}.
\end{equation}
We will also need the function $\wt(\vec{x})$, which equals the number of 1's in $\vec{x}$.

Thus, $\Cl(0,N)$ is a representation of the Clifford algebra $\Cl(0,N{+}1)$.  It corresponds to an Isoscalar supermultiplet, by replacing $e_{\vec{x}}\mapsto\phi_{\vec{x}}(\t)$ when the weight of $\vec{x}$ is even, and $e_{\vec{x}}\mapsto\psi_{\vec{x}}(\t)$ when the weight of $\vec{x}$ is odd.  Following the construction of the Isoscalar supermultiplet, we define $\IL_I$ and $\IR_I$ to be
\begin{alignat}{3}
  \IL_I \psi_{\vec{x}} &= (-1)^{|\vec{x} < I|}\phi_{\vec{x}\,\boxplus\,I},\qquad&\qquad
  \IR_I \phi_{\vec{x}} &= (-1)^{|\vec{x} < I|}\psi_{\vec{x}\,\boxplus\,I},
 \label{eCuBaseLR}
\intertext{and thus define}
  Q_I \psi_{\vec{x}} &= (-1)^{|\vec{x} < I|}\phi_{\vec{x}\,\boxplus\,I},\qquad&\qquad
  Q_I \phi_{\vec{x}} &= (-1)^{|\vec{x} < I|}i \ddt\psi_{\vec{x}\,\boxplus\,I}.
 \label{eCuBaseQ}
\end{alignat}

The Adinkra for this is an $N$-dimensional cube.  To see this, we note that the basis elements are labeled by $\vec{x}\in\{0,1\}^N$, the vertices of the $N$-dimensional cube.  The edges colored $I$ connect $\vec{x}$ to $\vec{x}\,\boxplus I$, which changes the $I^\text{th}$ coordinate.  The supermultiplet specified by Eqs.\eq{eCuBaseQ} was called the ``bosonic Clifford Algebra superfield'' and the ``base superfield'' in Ref.~\cite{rA}, and this cubical description matches that in Ref.~\cite{r6-3}. To emphasize its chromotopology, we will refer to this as the ``$I^N$ Clifford supermultiplet''.

\subsection{The $N=4$, $D_4$ Projection}
\label{s:d4}
In considering Adinkras that are not cubes, but rather quotients of cubes, it is useful to first consider a few examples.  First, we consider the $N=4$ example obtained by quotienting a four-dimensional cube by identifying antipodal points.  This Adinkra was first described in Ref.~\cite{rA}, where it was identified as the dimensional reduction of the $D=4$ chiral superfield.  Here, we present this example in a way that will motivate the general construction to quotient cubes.

\subsubsection{One $D_4$ Projection}
Consider the Clifford representation $\Cl(0,4)$.  Define the element
\begin{equation}
 g=\G_1\G_2\G_3\G_4.\label{g4}
\end{equation}
Define the two linear transformations $\p_{+}, \p_{-}:\Cl(0,N)\to\Cl(0,N)$ to be
\begin{equation}
 \p_{\pm}(v)=v{\cdot}\frac{1\pm g}{2},\quad v,\p_{\pm}(v)\in\Cl(0,N) . 
 \label{eqn:proj}
\end{equation}

Note that in the definition\eq{eqn:proj}, we multiply $v$ by a factor on its right.  This is important.  It implies that for every $\G_I$ and $v\in \Cl(0,N)$, we have that $\G_I\big(\p_{\pm}(v)\big)=\p_{\pm}\big(\G_I(v)\big)$.  Such a linear map is called a homomorphism of Clifford representations, and it implies that $\img(\p_+)$ and $\img(\p_-)$ are also Clifford representations.

The fact that $g$ is even means that $\p_\pm$ preserves the bosonic and fermionic statistics.  The fact that $g^2=1$ implies that $\p_+^2=\p_+$, $\p_-^2=\p_-$, $\p_+\p_-=\p_-\p_+=0$, and $\p_+ + \p_- = \Ione$.  This means that $\p_+$ and $\p_-$ are a complete set of projection operators, so that $\Cl(0,N)=\img(\p_+)\oplus\img(\p_-)$ as Clifford representations, and $\p_+$ and $\p_-$ project onto each of these components.  Neither of these components is zero, since $(1+g)/2$ and $(1-g)/2$ are themselves non-zero elements of $\Cl(0,N)$, which are in $\img(\p_+)$ and $\img(\p_-)$, respectively.

This kind of projection is nothing new: the matrix $g$ is, up to a scalar factor, the matrix known as $\g_5$ in four-dimensional field theory, and the projection $\p_\pm$ corresponds to the familiar projection to chiral spinors: the left- and right-handed halves of the Dirac spinor.  (To be meticulous, the spacetime metric being pseudo-Riemannian, one of the Clifford/Dirac matrices in four-dimensional field theory squares to $-\Ione$, and the definition of $\g_5$ accordingly includes an explicit factor of $i$.)

It is also true that $\ker(\p_+)=\img(\p_-)$ and vice-versa, so that we can also describe these representations in terms of constraints: as ``$v$ such that $v{\cdot}(1\mp g)=0$''.  Then we can realize the Clifford representation as a subspace of $\Cl(0,N)$, rather than as a quotient.  This accords with the idea that $\Cl(0,N)$, being a representation for $\Cl(0,N{+}1)$, decomposes as a direct sum into irreducibles.

We take the standard basis for $\Cl(0,4)$ mentioned above, $\{e_{\vec{x}}:\vec{x}\in\{0,1\}^N\}$, and apply $\p_+$ (resp.\ $\p_-$) on it.  The result spans $\img(\p_+)$ (resp. $\img(\p_-)$), but there are duplications (up to sign).  For instance, in $\img(\p_+)$, a vertex $\p_+(e_{\vec{x}})$ and $\p_+(e_{\vec{x}})\,g$ will be identified. Since
\begin{equation}
 \p_+(e_{(x_1,x_2,x_3,x_4)}){\cdot}g ~\propto~\p_+(e_{(1-x_1,1-x_2,1-x_3,1-x_4)}),
\end{equation}
up to an overall sign, body-diagonally opposite vertices are identified in $\img(\p_+)$.

More specifically, we start with $2\p_+(1)=1+\G_1\G_2\G_3\G_4$, and then successively apply $\G_1$, $\G_2$ and $\G_3$ on the left\footnote{The factor of $2$ is irrelevant but makes the formulas simpler.}:
\begin{subequations}
\begin{alignat}{3}
e_{(0000)}&\Defl1+\G_1\G_2\G_3\G_4,\\
e_{(1000)}&\Defl\G_1\cdot e_{(0000)}&&=\G_1+\G_2\G_3\G_4,\\
e_{(0100)}&\Defl\G_2\cdot e_{(0000)}&&=\G_2-\G_1\G_3\G_4,\\
e_{(0010)}&\Defl\G_3\cdot e_{(0000)}&&=\G_3+\G_1\G_2\G_4,\\
e_{(1100)}&\Defl\G_1\G_2\cdot e_{(0000)}&&=\G_1\G_2-\G_3\G_4,\\
e_{(1010)}&\Defl\G_1\G_3\cdot e_{(0000)}&&=\G_1\G_3+\G_2\G_4,\\
e_{(0110)}&\Defl\G_2\G_3\cdot e_{(0000)}&&=\G_2\G_3-\G_1\G_4,\\
e_{(1110)}&\Defl\G_1\G_2\G_3\cdot e_{(0000)}&&=\G_1\G_2\G_3-\G_4,
\end{alignat}
\end{subequations}
produces $2^3=8$ basis elements.
 Applying $\G_4$ within $\img(\p_+)$ produces no new vertex; for example,
\begin{alignat}{5}
 \G_4e_{(0000)}&=\G_4(1+\G_1\G_2\G_3\G_4)&&=\G_4-\G_1\G_2\G_3&&=-e_{(1110)},\\
 \G_4e_{(1000)}&=\G_4(\G_1+\G_2\G_3\G_4)&&=-\G_1\G_4+\G_2\G_3&&=e_{(0110)},
 \quad\text{\etc}
\end{alignat}
Constructing the Isoscalar supermultiplet corresponding to $\img(\p_+)$, using these vectors as a basis, turns
\begin{equation}
 e_{\vec{x}} \mapsto
 \left\{\begin{array}{ll}
         \f_{\vec{x}} &\text{if $\wt(\vec{x})$ is even,}\\[2mm]
         \j_{\vec{x}} &\text{if $\wt(\vec{x})$ is odd,}
        \end{array}\right.
 \label{eFBass}
\end{equation}
and produces a supermultiplet that may be depicted as:
\begin{equation}
 \vC{
 \begin{picture}(60,40)(5,-5)
 \put(3,3){\includegraphics[height=27mm]{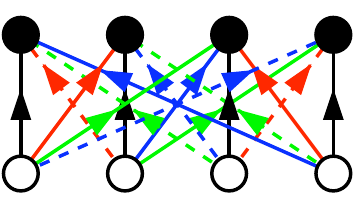}}
 \put(-14,15){$\img(\p_+):$}
 \put(0,30){$\j_{(1000)}$}
 \put(14,30){$\j_{(0100)}$}
 \put(28,30){$\j_{(0010)}$}
 \put(42,30){$\j_{(1110)}$}
 \put(0,0){$\f_{(0000)}$}
 \put(14,0){$\f_{(1100)}$}
 \put(28,0){$\f_{(1010)}$}
 \put(42,0){$\f_{(0110)}$}
 \end{picture}}
 \label{A:D4}
\end{equation}
The corresponding transformation rules are given explicitly in Table~\ref{t:D4}.
\begin{table}[htbp]
  \centering
  \begin{tabular}{@{} c|cccc @{}}
  \omit& {\bf Black Edges} & \cR{\bf Red Edges}
      & \cG{\bf Green Edges} & \cB{\bf Blue Edges} \\[1mm]
  & $Q_1$ & \cR{$Q_2$} & \cG{$Q_3$} & \cB{$Q_4$}\\
    \hline
 $\f_{(0000)}$ &$\j_{(1000)}$ &$~~\j_{(0100)}$ &$~~\j_{(0010)}$ &$-\j_{(1110)}$ \\ 
 $\f_{(1100)}$ &$\j_{(0100)}$ &$-\j_{(1000)}$  &$~~\j_{(1110)}$ &$~~\j_{(0010)}$ \\ 
 $\f_{(1010)}$ &$\j_{(0010)}$ &$-\j_{(1110)}$  &$-\j_{(1000)}$  &$-\j_{(0100)}$ \\ 
 $\f_{(0110)}$ &$\j_{(1110)}$ &$~~\j_{(0110)}$ &$-\j_{(0100)}$  &$~~\j_{(1000)}$ \\[2mm]
 $\j_{(1000)}$
  &$i\dot\f_{(0000)}$ &$-i\dot\f_{(1100)}$ &$-i\dot\f_{(1010)}$ &$~~i\dot\j_{(0110)}$ \\ 
 $\j_{(0100)}$
  &$i\dot\f_{(1100)}$ &$~~i\dot\f_{(0000)}$ &$-i\dot\f_{(0110)}$ &$-i\dot\j_{(1010)}$ \\ 
 $\j_{(0010)}$
  &$i\dot\f_{(1010)}$ &$~~i\dot\f_{(0110)}$ &$~~i\dot\f_{(0000)}$ &$~~i\dot\j_{(1100)}$ \\ 
 $\j_{(0001)}$
  &$i\dot\f_{(0110)}$ &$-i\dot\f_{(1010)}$ &$~~i\dot\f_{(1100)}$ &$-i\dot\j_{(0000)}$ \\ 
  \end{tabular}
  \caption{The action of $Q_1,Q_2,Q_3,Q_4$ on $\f_{\vec{x}}$ and $\j_{\vec{x}}$ within $\img(\p_+)$, abbreviating an explicit system of the form\eq{e3cube}: For example, $Q_1\,\f_{(0000)}=\j_{(1000)}$ and $Q_3\,\j_{(0100)}=-i\dot\f_{(0110)}$, \etc\ In the Adinkra\eq{A:D4}, black edges correspond to $Q_1$ action, red to $Q_2$, green to $Q_3$, and blue to $Q_4$.}
  \label{t:D4}
\end{table}

The result\eq{A:D4} is the four-dimensional cube with opposite corners identified, as in Refs.\cite{rA,r6-3}.  In the language of Ref.~\cite{r6-3}, this is a quotient of the four-dimensional cube by the code $d_4=\{0000,1111\}$.  This corresponds to the fact that by either doing nothing, or by reversing all four bits of a vertex, we return to the same vertex (with perhaps a minus sign).  In reference to the code name, the topology of the four-dimensional cube with opposite corners identified will be called $D_4$.

\subsubsection{The Two Inequivalent $D_4$ Quotients}
Similarly, we can find the Adinkra for the image of $\p_-$, using $1-\G_1\G_2\G_3\G_4$ instead.  When we do so, we see that the image of $\p_-$ and of $\p_+$ look similar---indeed, they have the same topology, the $4$-cube with opposite corners identified.  But they have different patterns of dashed edges.  These patterns cannot be made to coincide even when we redefine some of the vertices by replacing them with their negatives.  There are, in fact, two distinct irreducible representations of $\Cl(0,5)$, and these are the two.  Nevertheless, we can easily describe the relationship between them: by replacing $\G_4\mapsto-\G_4$, for instance, which corresponds to replacing $Q_4$ with $-Q_4$.  This, in turn, results in reversing the sign associated to each edge with $I=4$.  The Adinkras for these two are as follows:
\begin{equation}
 \vC{
 \begin{picture}(120,40)(5,-5)
 \put(3,3){\includegraphics[height=30mm]{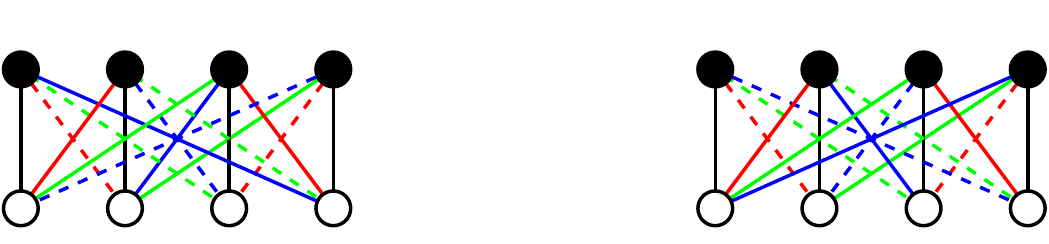}}
 \put(-14,15){$\img(\p_+):$}
 \put(0,30){$\j_{(1000)}$}
 \put(14,30){$\j_{(0100)}$}
 \put(28,30){$\j_{(0010)}$}
 \put(42,30){$\j_{(1110)}$}
 \put(0,0){$\f_{(0000)}$}
 \put(14,0){$\f_{(1100)}$}
 \put(28,0){$\f_{(1010)}$}
 \put(42,0){$\f_{(0110)}$}
 \put(68,15){$\img(\p_-):$}
 \put(82,30){$\j_{(1000)}$}
 \put(96,30){$\j_{(0100)}$}
 \put(110,30){$\j_{(0010)}$}
 \put(124,30){$\j_{(1110)}$}
 \put(83,0){$\f_{(0000)}$}
 \put(97,0){$\f_{(1100)}$}
 \put(111,0){$\f_{(1010)}$}
 \put(125,0){$\f_{(0110)}$}
 \end{picture}}
 \label{eB44a+b}
\end{equation}
We have suppressed the directions of the arrows:
 they are assumed to always point upward.  The
 replacement of $Q_4$ by $- \, Q_4$ is seen in the
 two Adinkras above by noting that all solid blue
 lines in the left hand Adinkra are replaced by
 dashed blue lines in the in the right hand Adinkra
 (and vice-versa).
 
Since the dimensional reduction of the $N=1$ chiral superfield in dimension $D=4$ down to dimension $D=1$ results in the image of $\p_+$, Readers might be tempted to think that the image of $\p_-$ arises from the dimensional reduction of the $N=1$, $D=4$ antichiral superfield.  This, however, is not the case.  Rather, it distinguishes different ways of reconstituting the $Q_1,\cdots,Q_4$ into $Q_\a$ and $Q^\dag_{\dot{\a}}$, with $\a,\dot\a=1,2$, as usual.  To go from one to the other Adinkra\eq{eB44a+b}, we must complex conjugate not both of the components of $Q_\alpha$, but only {\em half\/} of them---which is impossible without violating Lorentz symmetry in four dimensions.  If we complex conjugate all of $Q_\alpha$, we do swap chiral with antichiral superfields, but we reverse {\em\/both\/} $Q_3$ and $Q_4$. This change can be reversed by a redefinition of the real component fields, which swaps their complex combinations into the complex conjugates.
 Graphically, the action of complex conjugation would
 require that {\em {two}} colors must be used to
 implement the dashed/solid exchanges.  So for example,
 both solid blue lines {\em {and}} solid green lines
 in the left hand
 Adinkra are replaced by dashed blue lines  {\em {and}}
 dashed green lines in the in the right hand Adinkra
 (and vice-versa).

 The distinction between the two nonisomorphic supermultiplets depicted by the Adinkras\eq{eB44a+b} is thus more subtle. In fact, in more than 2-dimensional spacetimes, the $Q_I$'s are not Lorentz-invariant, and the sign of only one of them cannot be changed without violating Lorentz symmetry. In 2-dimensional spacetime the supermultiplets depicted in\eq{eB44a+b} are called chiral and twisted-chiral\cite{rGHR}, and we adopt this nomenclature also for the worldline $N=4$ supersymmetry. In $(2,2)$-supersymmetric theories in 2-dimensional spacetime, the transformation between the two supermultiplets\eq{eB44a+b} has been identified\cite{rTwSJG1,rMP1} as the root of mirror symmetry\cite{rMMYau1,rMMYau2,rMMYau3}.
 
 Now, any Lagrangian term involving only one of these types of supermultiplets can just as well be written in terms of only the other type; in this sense they may be regarded as equivalent.
 However, these two supermultiplets may well mix in a Lagrangian, and in a way that prevents rewriting the Lagrangian in terms of only one or the other type of supermultiplet, as has been done in Ref.\cite{rGHR}.
 This feature makes the two representations of supersymmetry, corresponding to two distinct irreducible representations of $\Cl(0,5)$ and depicted by the Adinkras\eq{eB44a+b}, {\em\/usefully distinct\/}.

 Note that such two nonisomorphic irreducible representations of $\Cl(0,N{+}1)$ exist precisely when $N=0\pmod4$, according to Table~\ref{cliffordtable}, discussed in Section~\ref{s:CCR}.

\subsection{Projecting Twice: the $D_6$ Valise Supermultiplet}
 \label{s:D6}
For $N=6$, define the two elements
\begin{eqnarray}
 g_1&=&\G_1\G_2\G_3\G_4,\label{g61}\\
 g_2&=&\G_3\G_4\G_5\G_6.\label{g62}
\end{eqnarray}
As before, $g_1^2=g_2^2=1$.  Also note that $g_1$ and $g_2$ commute, and in fact,
\begin{equation}
 g_1g_2=g_2g_1=-\G_1\G_2\G_5\G_6.
 \label{e:g1g2}
\end{equation}
Analogously to the $D_4$ example, we define the four projection operators
\begin{eqnarray}
 \p_{1\pm}(v)=v{\cdot}\frac{1\pm g_1}{2},\\
 \p_{2\pm}(v)=v{\cdot}\frac{1\pm g_2}{2}
\end{eqnarray}
and note that since $g_1$ and $g_2$ commute, so do $\p_{1+}$ and $\p_{2+}$.  We will now project twice: once using $\p_{1+}$ (or $\p_{1-}$) and then again, using $\p_{2+}$ (or $\p_{2-}$)---a total of four choices.  Using $\p_{1+}$ and $\p_{2+}$ (for instance) produces the Clifford representation $\img(\p_{1+}\circ\p_{2+})$.  In this, and in what follows, we will use $\p_{1+}$ and $\p_{2+}$ for notational definiteness, but it is to be understood that these may be replaced by $\p_{1-}$ or $\p_{2-}$, respectively and independently, {\em mutatis mutandis}.

The composition is
\begin{align}
 (\p_{1+}\circ\p_{2+})(v)
 &= v{\cdot}\frac{1+g_2}{2}{\cdot}\frac{1+g_1}{2}
  =\frac{1}{4}v{\cdot}(1+g_1+g_2+g_2g_1)\\
 &=\frac{1}{4}v{\cdot}(1+\G_1\G_2\G_3\G_4+\G_3\G_4\G_5\G_6-\G_1\G_2\G_5\G_6).
\end{align}
It is straightforward to prove that $\img(\p_{1+}\circ\p_{2+})=\img(\p_{1+})\cap\img(\p_{2+})$ and that this is a Clifford representation.  Writing this as $\ker(\p_{1-})\cap\ker(\p_{2-})$, we can see that for all $v$ in this image, $v=v{\cdot}g_1=v{\cdot}g_2=v{\cdot}g_1g_2$.  In particular, when $g_1$ and $g_2$ commute with $v$, we have that $v=g_1\,v=g_2\,v=g_1g_2\,v$.

Define
 \begin{equation}
e_0 := (\p_{1+}\circ\p_{2+})(1)
 = \frac{1}{4}(1+\G_1\G_2\G_3\G_4+\G_3\G_4\G_5\G_6-\G_1\G_2\G_5\G_6).
\end{equation}
Note that $e_0$ commutes with $g_1$ and $g_2$, and so $e_0 = g_1e_0 = g_2e_0 = g_1g_2e_0$.  We successively apply the various $\G_1, \G_2, \G_3, \G_5$ to $e_0$ on the left and we get a collection of 16 fields that span $\img (\p_{1+}\circ \p_{2+})$, corresponding to vectors of the form $e_{(x_1,x_2,x_3,0,x_5,0)}$.  Applying $\G_4$ will not generate any new vectors because $\G_4 = g_1\G_1\G_2\G_3$, and so for every $e_{\vec{x}}$,
\begin{eqnarray}
\G_4 e_{\vec{x}} &=& g_1\G_1\G_2\G_3\cdot e_{\vec{x}}\\
&=& g_1 \G_1\G_2\G_3\G_1^{x_1}\cdots\G_N^{x_N}\cdot e_0\\
&=& \pm \G_1\G_2\G_3\G_1^{x_1}\cdots\G_N^{x_N} g_1\cdot e_0\\
&=& \pm \G_1\G_2\G_3\G_1^{x_1}\cdots\G_N^{x_N}\cdot e_0\\
&=&\pm \G_1\G_2\G_3\cdot e_{\vec{x}}.
\end{eqnarray}
Similarly, $\G_6$ will not generate new vectors, using $\G_6=g_2\G_3\G_4\G_5$.  

The Isoscalar supermultiplet corresponding to this will have 8 bosons and 8 fermions.  The $e_{\vec{x}}$ vectors correspond, according to \Eq{eFBass}, to the various $\f_{\vec{x}}$ and $\j_{\vec{x}}$.

If we do this, we get an Adinkra whose topology we call $D_6$, which is a six-dimensional cube projected twice: once according to $g_1=\G_1\G_2\G_3\G_4$ and then according to $g_2=\G_3\G_4\G_5\G_6$:
\begin{equation}
 \vC{
 \begin{picture}(120,45)(15,-5)
 \put(4,1){\includegraphics[height=30mm]{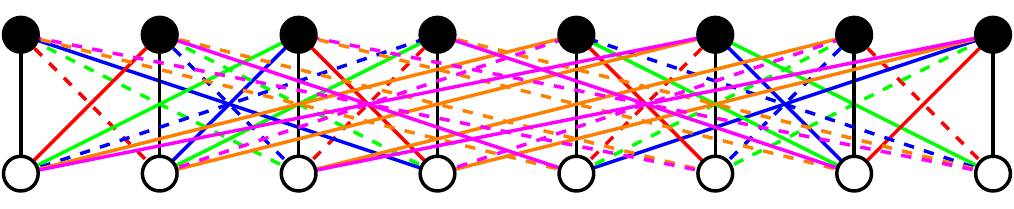}}
 \put(0,32){$\j_{(10000)}$}
 \put(20,32){$\j_{(01000)}$}
 \put(40,32){$\j_{(00100)}$}
 \put(60,32){$\j_{(11100)}$}
 \put(0,0){$\f_{(00000)}$}
 \put(20,0){$\f_{(11000)}$}
 \put(40,0){$\f_{(10100)}$}
 \put(60,0){$\f_{(01100)}$}
 \put(80,32){$\j_{(00001)}$}
 \put(100,32){$\j_{(11001)}$}
 \put(120,32){$\j_{(10101)}$}
 \put(140,32){$\j_{(01101)}$}
 \put(80,0){$\f_{(10001)}$}
 \put(100,0){$\f_{(01001)}$}
 \put(120,0){$\f_{(00101)}$}
 \put(140,0){$\f_{(11101)}$}
 \end{picture}}
 \label{eN6B88}
\end{equation}
Here the colors are as before, with orange for $Q_5$ and purple for $Q_6$.

The name $D_6$ derives from the code $d_6$, generated by the $k=2$ codewords $c_1=111100$ and $c_2=001111$; to denote this, we assemble the generator codewords as a matrix, called a {\em generator matrix} in coding theory:
\begin{equation}
 \left[\begin{smallmatrix}111100\\[3pt] 001111\end{smallmatrix}\right],
\end{equation}
but here we simply regard it as a collection of the codewords listed in its rows.  The first codeword corresponds to $\G_1\G_2\G_3\G_4$, the second with $\G_3\G_4\G_5\G_6$, and the sum of these codewords modulo 2, $110011$, corresponds to $-\G_1\G_2\G_5\G_6$.

More generally, addition in the code turns into multiplication of the corresponding products of $\G_I$ matrices, perhaps with a minus sign, because the $\G_I$ can be anticommuted past each other, and when a particular $\G_I$ appears in both $g_1$ and $g_2$, once anticommutation is done so that the two $\G_I$'s are adjacent, these two simplify to $\G_I{}^2=1$.

\subsection{Constructing a Valise Supermultiplet from a Code}
The procedure illustrated in the previous two examples can be generalized to the following construction, first applied to Clifford representations by A.~Dimakis.\cite{rDimakis}.

\begin{construction}\label{const:quotient}
Suppose we are given $N$ and a doubly-even code $C\subset(\ZZ/2)^N$ of length $N$, given by a generating set $\{c_1,\dots,c_k\}\subset C$. Writing each $c_i$ as $(x_{i1},\ldots,x_{iN})\in\{0,1\}^N$, we associate to it $g_i\Defl\G_1^{x_{i1}}\cdots\G_N^{x_{iN}}$. For instance, $c_1=101100100$ would produce $g_1=\G_1\G_3\G_4\G_7$.

The $\wt(c_i)$ being even translates into $g_i$ being even.  When $\wt(c_i)$ is even, $\wt(c_i)$ being a multiple of $4$ is equivalent to $g_i{}^2=1$.  It is an elementary fact about doubly even codes that any two elements of the codes are orthogonal (that is, share an even number of $1$s)\cite{r6-3,rCHVP}.  As a result, all of the $g_1,\ldots,g_k$ commute with each other.  They generate a group $G$ under multiplication that is isomorphic as a group to the code $C$.  The isomorphism is done analogously to converting $c_i$ to $g_i$, except that a minus sign is sometimes required.\footnote{This construction is due to J. Wood, who used it to classify 2-elementary abelian subgroups of the spin groups\cite{rJAW}.}

For each $g_i$ we have $\p_{i\pm}:\Cl(0,N)\to\Cl(0,N)$ defined as
\begin{equation}
\p_{i\pm}(v)=v{\cdot}\frac{1\pm g_i}{2}.
\end{equation}
As before, the operators $\p_{i\pm}$ are homomorphisms.  The evenness of $g_i$ implies that $\p_{i\pm}$ preserves the fermionic and bosonic statistics.  The fact that $g_i{}^2=1$ implies that $\p_{i+}$ and $\p_{i-}$ are projection operators.

The fact that the $g_i$ all commute implies that the $\p_{i+}$ and $\p_{i-}$ all commute.

The following table summarizes how the properties of the generators of the code relate to the properties of the $\p_{i\pm}$.
\begin{equation}
\begin{tabular}{c|c|c}
\boldmath\bf$c_i$ (Codeword)
                            &\boldmath$g_i$
                                          &\boldmath\bf$\p_{i\pm}$ (Projector)\\[1pt]
 \hline\hline
even weight                 & even        & preserves statistics\\
weight is~ $0\pmod4$   & $g_i{}^2=1$ & projection \\
pairwise orthogonal         & commute     & commute\\\hline
\end{tabular}
\end{equation}

We define\Ft{Strictly speaking, the notation\eqs{epiC}{eImC} should specify for each $g_i$ which sign is being used. For illustrative purposes, herein we only use $\p_{i+}$'s.  This choice will not affect chromotopology, but will affect the dashedness of the edges. Precisely which of these $2^k$ distinct choices in $\p_{1\pm}\circ\cdots\circ\p_{k\pm}$ provide (non-)isomorphic representations akin to\eq{eB44a+b} is a question we defer to a subsequent effort.}
\begin{equation}
 \p_C \Defl \p_{1+}\circ\cdots\circ\p_{k+},
 \label{epiC}
\end{equation}
and then as in the previous example,
\begin{equation}
 \img(\p_C)~=~\img(\p_{1+})\cap\cdots\cap\img(\p_{k+})
           ~=~\ker(\p_{1-})\cap\cdots\cap\ker(\p_{k-})
 \label{eImC}
\end{equation}
is the Clifford representation we want.  If we define
\begin{equation}
 e_0=\p_C(1)=\frac{1+g_1}{2}\cdots\frac{1+g_k}{2}=\frac{1}{2^k}\sum_{g\in G} g,
\end{equation}
and then successively apply the various $\G_I$ on the left to $e_0$, we obtain a set of elements of $\Cl(0,N)$.  Using the Dirac relations (\ref{eDirac}) shows that many of these are the same, up to an overall sign.  Furthermore, for all $g_i$, we have $g_i e_0 = e_0$, and more generally this is true of all elements of $G$.  It therefore follows that if we begin with an element $v$ represented by a $d_v$-dimensional matrix, the application of $\p_C$ results in a quantity representable by a $2^{-k}d_v$-dimensional matrix.

In fact, this will result in $2^{N-k+1}$ different elements of $\img(\p_C)$, occurring in $\pm$ pairs.  If we arbitrarily choose one from each $\pm$ pair, the result is a collection of $2^{N-k}$ vectors that form a basis for $\img(\p_C)$.
\end{construction}

\subsection{Examples for $N\le 8$} 

\begin{example}
For $N<4$, there are no doubly even codes, and so no such quotients are possible.  All the Adinkras for $1\leq N<4$ are therefore $N$-cubes.  We describe the topology as $I^N$, where $I=[0,1]$.
\end{example}

\begin{example}
For $N=4$, there are two Adinkra topologies: $I^4$ (for the trivial code $\{0000\}$), and $D_4$, which was already explained above.
\end{example}

\begin{example}
For $N=5$, there are several doubly even codes to choose from, but they are all permutation equivalent.  For instance, the code generated by $11110$ gives $g=\G_1\G_2\G_3\G_4$, which is the same as that in\eq{g4}.  Or we could use the code generated by $10111$, giving $g'=\G_1\G_3\G_4\G_5$, for instance.  But we cannot use both at the same time since $11110$ and $10111$ sum to $01001$, which has weight 2, and which encodes the fact:
\begin{equation}
 \big[\,\G_1\G_2\G_3\G_4\,,\,\G_1\G_3\G_4\G_5\,\big] = 2\G_2\G_5\neq0.
\end{equation}

We therefore find 5 different Adinkras:  They differ only in selecting the colors for the edges and the dashing of the edges.  Therefore, they have the same topology but different chromotopologies.  Again, the distinction between these may not at first seem relevant, inasmuch as the swapping of colors corresponds to swapping the $Q_I$ in supersymmetry, and the labeling of which $Q_I$ are which is somewhat arbitrary.  But these colorings can be relevant when a model consists of two or more of these supermultiplets used together, and we can ask whether these choices are coordinated or not. While this distinction is subtler than that one discussed in the $N=4$, $D_4$ case---after all, $\Cl(0,6)$ has only one (equivalence class of) irreducible representation(s)---it may still be useful. Two such equivalent but distinct supermultiplets, one being projected by $(1+g)/2$ and the other by $(1+g')/2$ respectively, may well mix in a Lagrangian in a way that prevents rewriting the Lagrangian in terms of only one or the other type of supermultiplet. Should this turn out to be possible, the five supermultiplets corresponding to the 5 different Adinkras would also be {\em\/usefully distinct\/}---much as, say, chiral and twisted-chiral supermultiplets are.

Let us consider the code generated by $11110$, so that $g=\G_1\G_2\G_3\G_4$.  To find the Adinkra, we can note that every vertex is equivalent to one of the form $(x_1,x_2,x_3,0,x_5)$.  To see how to connect the edges, we may apply $Q_I$ to expressions of the form $Q_1^{x_1}Q_2^{x_2}Q_3^{x_3}Q_5^{x_5}\f_{(00000)}$, and use the supersymmetry algebra and the quotient with $\G_1\G_2\G_3\G_4$.

This is the result:
\begin{equation}
 \vC{
 \begin{picture}(120,45)(15,-5)
 \put(4,1){\includegraphics[height=30mm]{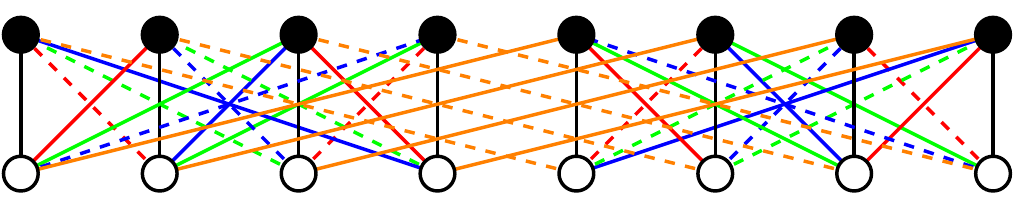}}
 \put(0,32){$\j_{(10000)}$}
 \put(20,32){$\j_{(01000)}$}
 \put(40,32){$\j_{(00100)}$}
 \put(60,32){$\j_{(11100)}$}
 \put(0,0){$\f_{(00000)}$}
 \put(20,0){$\f_{(11000)}$}
 \put(40,0){$\f_{(10100)}$}
 \put(60,0){$\f_{(01100)}$}
 \put(80,32){$\j_{(00001)}$}
 \put(100,32){$\j_{(11001)}$}
 \put(120,32){$\j_{(10101)}$}
 \put(140,32){$\j_{(01101)}$}
 \put(80,0){$\f_{(10001)}$}
 \put(100,0){$\f_{(01001)}$}
 \put(120,0){$\f_{(00101)}$}
 \put(140,0){$\f_{(11101)}$}
 \end{picture}}
 \label{eB88}
\end{equation}

Note that for this Adinkra, there is an asymmetry: if we take the $g$ as described in\eq{g4}, then $\G_5$ behaves differently from the other $\G_I$.  Of course, if we had chosen a different code, so that $g'$ involved any other four of the five $\G_I$'s, the one $\G_I$ that was omitted from $g'$ would behave differently.

If we slightly rearrange the vertices for convenience, we can see that this Adinkra is the cartesian product of the line segment with the $N=4$ projective cube Adinkra: $D_4\times I^1$:
\begin{equation}
 \vC{
 \begin{picture}(120,50)(22,-5)
 \put(4,1){\includegraphics[height=40mm]{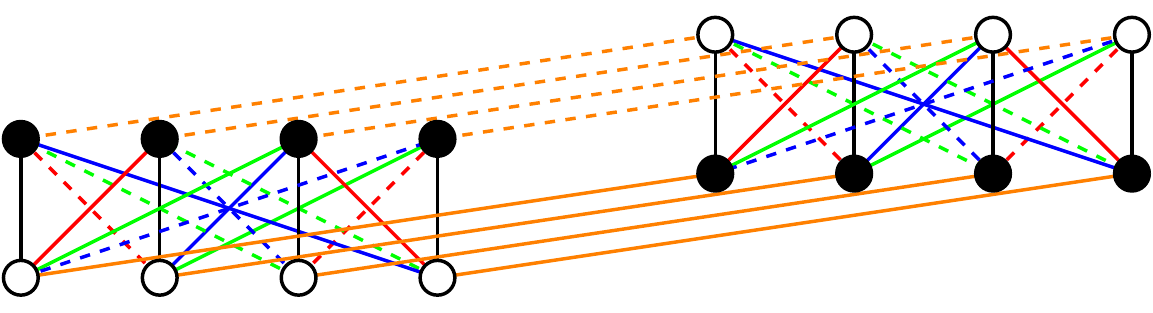}}
 \put(0,29){$\j_{(10000)}$}
 \put(17,29){$\j_{(01000)}$}
 \put(34,29){$\j_{(00100)}$}
 \put(51,29){$\j_{(11100)}$}
 \put(0,0){$\f_{(00000)}$}
 \put(17,0){$\f_{(11000)}$}
 \put(34,0){$\f_{(10100)}$}
 \put(51,0){$\f_{(01100)}$}
 \put(89,13){$\j_{(00001)}$}
 \put(106,13){$\j_{(11001)}$}
 \put(124,13){$\j_{(10101)}$}
 \put(142,13){$\j_{(01101)}$}
 \put(89,41){$\f_{(10001)}$}
 \put(106,41){$\f_{(01001)}$}
 \put(124,41){$\f_{(00101)}$}
 \put(142,41){$\f_{(11101)}$}
 \end{picture}}
 \label{eB88proj}
\end{equation}

Thus, in $N=5$, we have two Adinkra topologies: $I^5=I^4\times I^1$, and $D_4\times I^1$.  These come from the two Adinkra topologies for $N=4$, and indeed, in general, with each new $N$, we get all the topologies from the previous $N$ multiplied by $I$, and possibly new topologies.
\end{example}

The process of going from the $N=4$ Adinkra $D_4$ to the $N=5$ Adinkra $D_4\times I^1$ is a general one.  If we have an Adinkra for a particular value of $N$, with vertices $(v_1,\ldots,v_{2m})$, we can create an Adinkra for $N{+}1$-extended supersymmetry, by creating new vertices $(w_1,\ldots,w_{2m})$, where $w_i$ is bosonic if and only if $v_i$ is fermionic.  Then draw the same edges connecting the $(v_1,\ldots,v_{2m})$ with the same colors, dashings, and arrows, and drawing corresponding edges with the same colors, dashings, and arrows on the $(w_1,\ldots,w_{2m})$.  Then we draw edges in the $(N{+}1)^\text{st}$ color from each $v_i$ to its corresponding $w_i$, pointing from $v_i$ to $w_i$, solid if $v_i$ is bosonic and dashed otherwise.

The effect on the topology is that we make an extra copy of the graph, and connect each pair of corresponding vertices with an edge.  Intuitively, this is the same as taking the vertices and edges of the cartesian product with $I=[0,1]$.  In terms of the code, we extend the code by adding a new column that is always zero.  In this case, we turn $\{0000,1111\}$ into $\{00000,11110\}$.

\begin{example}
For $N=6$, there are several possible doubly even codes.  As before, there is still $I^6$.  If we use one codeword, we could, for instance, take $111100$ or indeed any other codeword consisting of four 1's and two 0's.  There are ${6\choose 4}=15$ such choices, and each results in an Adinkra with topology $D_4\times I^2$.  Again, these are only different based on the coloring of the edges.

Then there is the $D_6$ example above (Section~\ref{s:D6}), of which there are  $\frac{6!}{2^3\cdot 3!}=15$ codes that are permutation equivalent to this one, and thus, although they give different Adinkras, these only differ in their edge coloring.
\end{example}

\begin{example}
The case $N=7$ allows for the following generator set:
\begin{equation}
 e_7 \gen \left[\begin{smallmatrix}
                  1111000\\[2pt]
                  0011110\\[2pt]
                  1010101
                \end{smallmatrix}\right],
\end{equation}
resulting in the following generators for $G$:
\begin{eqnarray}
 g_1&=&\G_1\G_2\G_3\G_4,\label{g71}\\
 g_2&=&\G_3\G_4\G_5\G_6,\label{g72}\\
 g_3&=&\G_1\G_3\G_5\G_7.\label{g73}
\end{eqnarray}
This generates
\begin{align}
 g_1g_2&=-\G_1\G_2\G_5\G_6,&
 g_1g_3&=\4\G_2\G_4\G_5\G_7,\\
 g_2g_3&=\4\G_1\G_4\G_6\G_7,&
 g_1g_2g_3&=-\G_2\G_3\G_6\G_7,
\end{align}
which, together with 1, constitute the maximal doubly even code, $e_7$.  We accordingly call this Adinkra topology $E_7$:
\begin{equation}
 \vC{
 \begin{picture}(120,45)(15,-5)
 \put(4,1){\includegraphics[height=30mm]{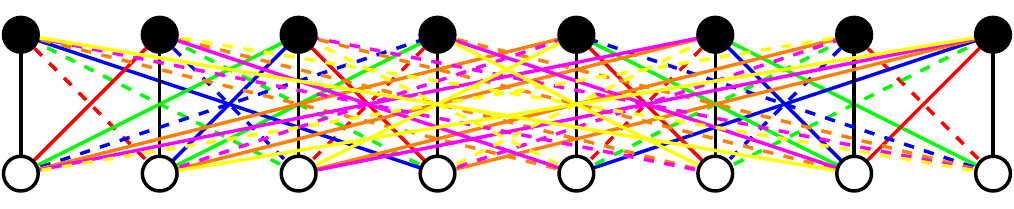}}
 \put(0,32){$\j_{(10000)}$}
 \put(20,32){$\j_{(01000)}$}
 \put(40,32){$\j_{(00100)}$}
 \put(60,32){$\j_{(11100)}$}
 \put(0,0){$\f_{(00000)}$}
 \put(20,0){$\f_{(11000)}$}
 \put(40,0){$\f_{(10100)}$}
 \put(60,0){$\f_{(01100)}$}
 \put(80,32){$\j_{(00001)}$}
 \put(100,32){$\j_{(11001)}$}
 \put(120,32){$\j_{(10101)}$}
 \put(140,32){$\j_{(01101)}$}
 \put(80,0){$\f_{(10001)}$}
 \put(100,0){$\f_{(01001)}$}
 \put(120,0){$\f_{(00101)}$}
 \put(140,0){$\f_{(11101)}$}
 \end{picture}}
 \label{eN7B88}
\end{equation}

The fields are defined in the same way as in $D_4\times I^1$ and in $D_6$.  Here, the new color, yellow, represents the action of multiplying from the left by $\G_7$.

There are $\frac{7!}{168}=30$ of permutation equivalent codes to $e_7$, and again, these differ only in edge-colorings; the topology of all these is however identical.

As before, we could take $g_1$ and $g_2$, or just $g_1$, and generate a smaller group if we wish.  In those cases, we end up with a cartesian product of a lower-dimensional cube and the Adinkra of the dimension where that generator first appeared.  The possibilities here are $I^7$, $D_4\times I^3$, and $D_6\times I$.
\end{example}

\begin{example}
The case $N=8$ allows the code $e_8$, which has the following generator set:
\begin{equation}
 e_8 \gen \left[\begin{smallmatrix}
                   11110000\\[2pt]
                   00111100\\[2pt]
                   00001111\\[2pt]
                   10101010
                \end{smallmatrix}\right].
\end{equation}
The corresponding $g_i$ are:
\begin{eqnarray}
 g_1&=&\G_1\G_2\G_3\G_4,\\
 g_2&=&\G_3\G_4\G_5\G_6,\\
 g_3&=&\G_5\G_6\G_7\G_8,\\
 g_4&=&\G_1\G_3\G_5\G_7.
\end{eqnarray}
These produce the following group elements:
\begin{subequations}
\begin{align}
 g_1g_2&=-\G_3\G_4\G_5\G_6,&
 g_1g_3&=\4\G_1\G_2\G_3\G_4\G_5\G_6\G_7\G_8,\\
 g_1g_4&=\4\G_2\G_4\G_5\G_7,&
 g_2g_3&=-\G_3\G_4\G_7\G_8,\\
 g_2g_4&=\4\G_1\G_4\G_6\G_7,&
 g_3g_4&=\4\G_1\G_3\G_6\G_8,\\
 g_1g_2g_3&=\4\G_3\G_4\G_7\G_8,&
 g_1g_2g_4&=-\G_1\G_4\G_6\G_7,\\
 g_1g_3g_4&=\4\G_2\G_4\G_6\G_8,&
 g_2g_3g_4&=-\G_1\G_3\G_4\G_8,\\
 g_1g_2g_3g_4&=\4\G_1\G_4\G_5\G_8.
\end{align}
\end{subequations}
Recall that each $g$ must be a product of a doubly even number of $\G_I$'s, explaining why are we only now seeing $g$'s with differing numbers of $\G_I$'s.

The Adinkra is below.  This has the feature that every boson is connected to every fermion.  So this is a $K(8,8)$ graph, and is denoted $E_8$.  This was introduced in Refs.\cite{rGR0,rGLPR} as the $N=8$ spinning particle in relation to what was described as a ``supergravity surprise''.  The new color, brown, corresponds to $\G_8$.
\begin{equation}
 \vC{
 \begin{picture}(120,45)(15,-5)
 \put(4,1){\includegraphics[height=30mm]{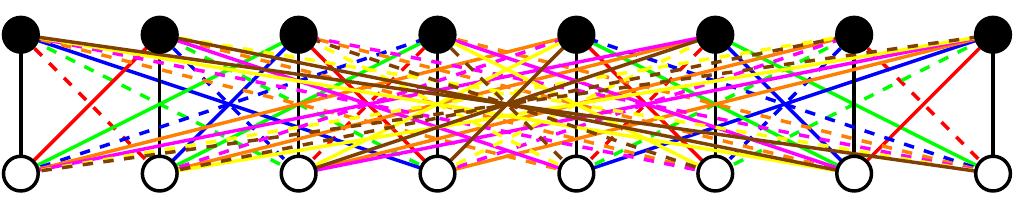}}
 \put(0,32){$\j_{(10000)}$}
 \put(20,32){$\j_{(01000)}$}
 \put(40,32){$\j_{(00100)}$}
 \put(60,32){$\j_{(11100)}$}
 \put(0,0){$\f_{(00000)}$}
 \put(20,0){$\f_{(11000)}$}
 \put(40,0){$\f_{(10100)}$}
 \put(60,0){$\f_{(01100)}$}
 \put(80,32){$\j_{(00001)}$}
 \put(100,32){$\j_{(11001)}$}
 \put(120,32){$\j_{(10101)}$}
 \put(140,32){$\j_{(01101)}$}
 \put(80,0){$\f_{(10001)}$}
 \put(100,0){$\f_{(01001)}$}
 \put(120,0){$\f_{(00101)}$}
 \put(140,0){$\f_{(11101)}$}
 \end{picture}}
 \label{eN8B88}
\end{equation}

As in the $N=4$ case, there are two irreducible representations, and one is obtained as above, while the other is obtained by reversing the sign on one of the $g_i$.  The result is the same Adinkra topology, but with various signs on the edges reversed.  For instance, we can reverse the sign on $g_1$ by reversing the signs on edges corresponding to $\G_4$.  This preserves the sign on $g_2$, $g_3$, and $g_4$.  Reversing various $\G_I$ yields apparently different Adinkras, but all of these must fall into just two isomorphism classes.

There are $\frac{8!}{1344}=30$ codes that are permutation equivalent to $e_8$.

As before, we can take a subgroup of $e_8$, but in this case, the situation is a bit more interesting: there are multiple inequivalent choices for which generator to remove.  Figure~\ref{fig:e8tree} shows all the $N=8$ doubly even codes, up to permutation equivalence.

\begin{figure}[htb]
\begin{center}
\begin{picture}(70,45)(0,5)
\put(10,45){$k=4:$}
 \put(50,45){\makebox[0in][c]{$e_8$}}
  \put(40,39){\vector(3,2){7}}
  \put(60,39){\vector(-3,2){7}}
\put(10,35){$k=3:$}
 \put(62,35){\makebox[0in][c]{$d_8$}}
 \put(38,35){\makebox[0in][c]{$e_7\oplus t_1$}}
  \put(38,29){\vector(0,1){4}}
  \put(62,29){\vector(0,1){4}}
  \put(42,28.5){\vector(3,1){15}}
\put(10,25){$k=2:$}
 \put(38,25){\makebox[0in][c]{$d_6\oplus t_2$}}
 \put(62,25){\makebox[0in][c]{$d_4\oplus d_4$}}
  \put(38,19){\vector(0,1){4}}
  \put(62,19){\vector(0,1){4}}
  \put(42,18.5){\vector(3,1){15}}
\put(10,15){$k=1:$}
 \put(38,15){\makebox[0in][c]{$d_4\oplus t_4$}}
 \put(62,15){\makebox[0in][c]{$h_8$}}
  \put(53,9){\vector(3,2){7}}
  \put(47,9){\vector(-3,2){7}}
\put(10,5){$k=0:$}
 \put(50,5){\makebox[0in][c]{$t_8$}}
\end{picture}
\caption{$N=8$ doubly even codes and their subset relationships: The arrows connecting two codes are injections (possibly after permutation): the code on the lower level is a subcode of the higher one.  The trivial code $t_8$ is at the bottom, and $e_8$ is the unique maximal doubly even code for $N=8$ and is drawn at the top.  Every doubly even code in $N=8$ is a subcode of $e_8$.  The code $h_8$ is the one generated by $11111111$.}\label{fig:e8tree}
\end{center}
\end{figure}
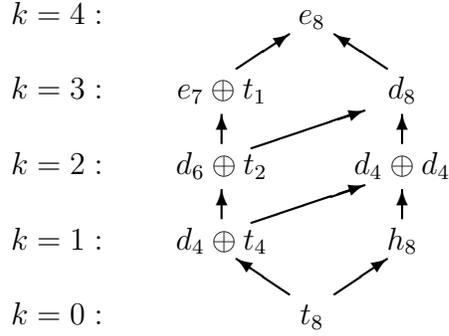

For instance, for $k=3$, we could choose $e_7\oplus t_1$, which results in an Adinkra topology $E_7\times I^1$, or we could choose $d_8$, which results in a different Adinkra topology $D_8$:
\begin{align}
 E_7\times I^1&: \vC{\includegraphics[width=140mm]{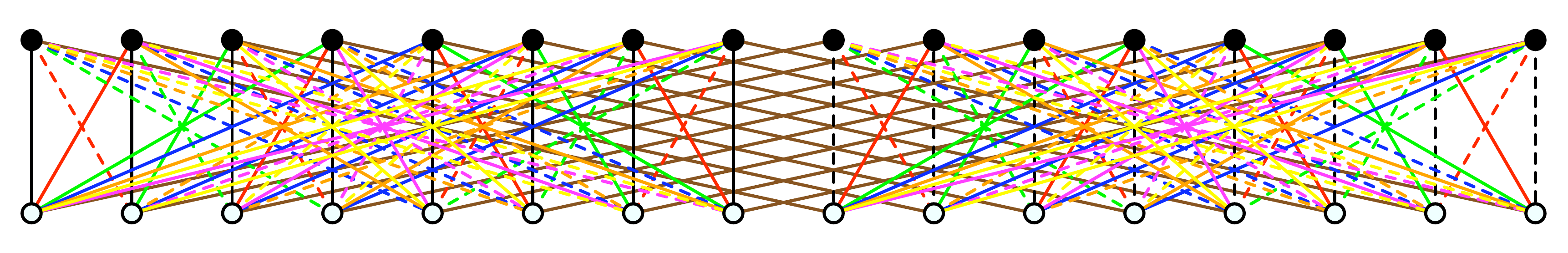}} \label{eE7xI}\\[2mm]
 D_8&:         \vC{\includegraphics[width=140mm]{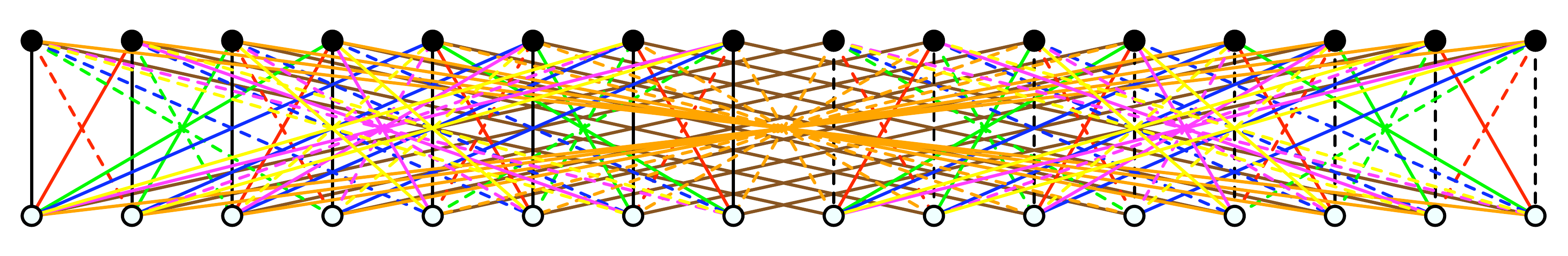}} \label{eD8}
\end{align}
The only difference between these two Adinkras is the way one of the supersymmetries acts, the one represented here by orange edges. In the $E_7\times I^1$-Adinkra\eq{eE7xI}, it acts within each of the two halves, leaving the Adinkra {\em\/1-color-decomposable\/}: only the brown edges span the whole Adinkra, and it decomposes into two identical $N=7$ Adinkras if the brown edges are erased. Each of the halves has the $E_7$ topology, except that edges of one color (black) have their dashing reversed. This becomes clearer upon rearranging the nodes a little:
\begin{equation}
 E_7\times I^1:~ \vC{\includegraphics[width=140mm]{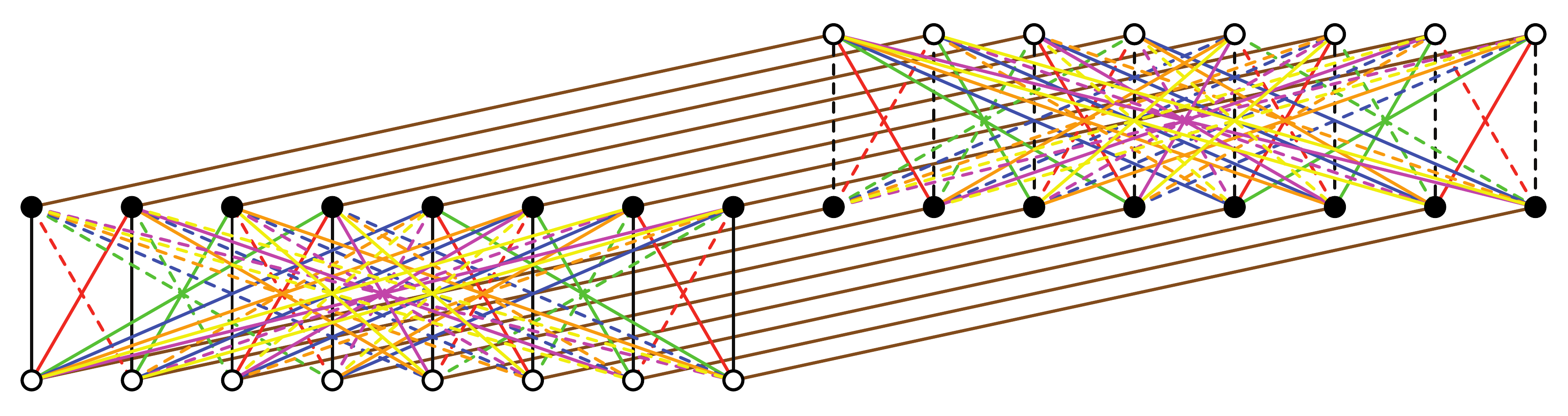}}
 \label{eE7xIa}
\end{equation}
much the same as done in the display\eq{eB88} \textit{vs.}\eq{eB88proj}, for $D_4\times I^1$.

In turn, in the $D_8$-Adinkra\eq{eD8}, the orange edges span the whole Adinkra together with the brown ones. For the sake of comparison with the $E_7\times I^1$-Adinkra\eq{eE7xIa}, we also rearrange the nodes of the $D_8$-Adinkra:
\begin{equation}
 D_8:~ \vC{\includegraphics[width=140mm]{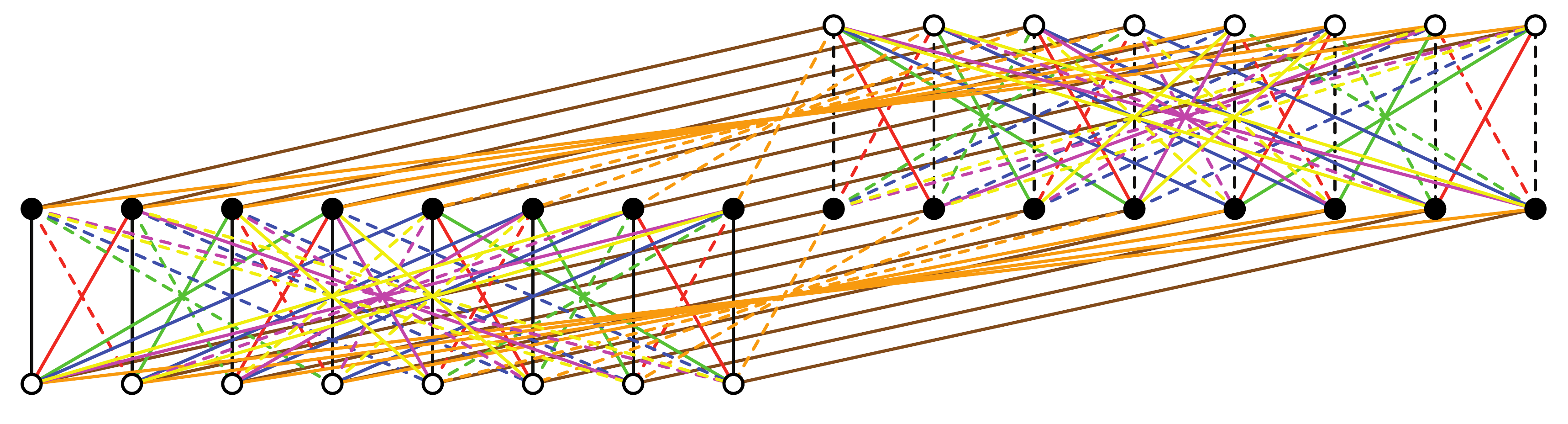}}
 \label{eD8a}
\end{equation}
To decompose this second Adinkra, one would have to erase the edges of at least two colors; we say it is 2-color-decomposable. This $n$-color-decomposability property correlates with the fact that the $e_7\oplus t_1$ code has one column of zeros, whereas $d_8$ does not:
\begin{equation}
 E_7\times I^1 \iff e_7\oplus t_1
  \gen\left[\begin{smallmatrix}
              0000\,1111\\[2pt]
              0011\,1100\\[2pt]
              0101\,0101
            \end{smallmatrix}\right],
 \qquad\text{{\it vs.\/}}\qquad
 D_8 \iff d_8
  \gen\left[\begin{smallmatrix}
              0000\,1111\\[2pt]
              0011\,1100\\[2pt]
              1111\,0000
            \end{smallmatrix}\right].
\end{equation}
Finally, both Adinkras\eq{eE7xI} and\eq{eD8} admit precisely one supersymmetry-preserving $\ZZ_2$ symmetry: for the former, it is encoded as $11110000$ and generated by $\G_1\G_2\G_3\G_4$, while the latter is symmetric with respect to the action of $\G_2\G_4\G_6\G_8$, encoded as $01010101$. The projection of each Adinkra by its respective symmetry then produces the $E_8$-Adinkra\eq{eN8B88}.

Similar relations exist between the $D_6\times I^2$- \textit{vs}.\ $D_4\times D_4$-Adinkras, and the $D_4\times I^4$- \textit{vs}.\ $H_8$-Adinkra corresponding to the $k=2$ and $k=1$ rows in the diagram in Figure~\ref{fig:e8tree}.
\end{example}

To summarize, Table~\ref{topologytable} gives the possible topologies for each $N$ up to $N=10$.  The cartesian product $\times$ refers to taking the cartesian product of the vertex set, and drawing edges between $(v_1,w_1)$ and $(v_1,w_2)$ whenever there is an edge between $w_1$ and $w_2$, and between $(v_1,w_1)$ and $(v_2,w_1)$ whenever there is an edge between $v_1$ and $v_2$.  Details of this construction are given in Section~\ref{s:multiply}.  Exponentiation means iterated cartesian products.  We note the Adinkra topologies $D_4$, $D_6$, $D_8$, and $D_{10}$ from the codes $d_4$, $d_6$, $d_8$ and $d_{10}$, as well as $E_7$ and $E_8$ from $e_7$ and $e_8$, respectively.  In addition, there turns up a code with no standard name, generated by $\{1111000000,0011111111\}$, which we tentatively call $D_4*_1\!I^6$.

\begin{table}[ht]
\begin{center}\footnotesize
\begin{tabular}{c|rc|rc|rc|rc|rc}
\boldmath$N$ & \multicolumn{2}{c|}{\boldmath$k=0$} & \multicolumn{2}{c|}{\boldmath$k=1$}
 & \multicolumn{2}{c|}{\boldmath$k=2$} & \multicolumn{2}{c|}{\boldmath$k=3$}
  & \multicolumn{2}{c}{\boldmath$k=4$}\\[1pt]
 \hline\hline
1&$I$   & $(1|1)$   &&&&&&&&\\%\hline
2&$I^2$ & $(2|2)$   &&&&&&&&\\%\hline
3&$I^3$ & $(4|4)$   &&&&&&&&\\%\hline
4&$I^4$ & $(8|8)$   & $D_4$             &$(4|4)$   &&&&&&\\[1mm]%\hline
5&$I^5$ & $(16|16)$ & $D_4\times I^1$     &$(8|8)$   &&&&&&\\%\hline
6&$I^6$ & $(32|32)$ & $D_4\times I^2$   &$(16|16)$ & $D_6$           & $(8|8)$&&&&\\%\hline
7&$I^7$ & $(64|64)$ & $D_4\times I^3$   &$(32|32)$ & $D_6\times I^1$   & $(16|16)$& $E_7$ & $(8|8)$&&\\%\hline
8&$I^8$ & $(128|128)$ & $D_4\times I^4$ & $(64|64)$& $D_6\times I^2$ & $(32|32)$& $E_7\times I^1$ & $(16|16)$& $E_8$ & $(8|8)$\\
 &      &             & $H_8$           & $(64|64)$& $D_4\times D_4$ & $(32|32)$ & $D_8$   & $(16|16)$       &  &    \\[1mm]% \hline
9&$I^9$ & $(256|256)$ & $D_4\times I^5$ & $(128|128)$ & $D_6\times I^3$ & $(64|64)$& $E_7\times I^2$ & $(32|32)$& $E_8\times I^1$ & $(16|16)$\\
 &      &             & $H_8\times I^1$   & $(128|128)$ & $D_4\times D_4\times I^1$ & $(64|64)$ & $D_8\times I^1$ & $(32|32)$&&\\%\hline
10&$I^{10}$ & $(512|512)$& $D_4\times I^6$& $(256|256)$ & $D_6\times I^4$ & $(128|128)$ & $E_7\times I^3$&$(64|64)$& $E_8\times I^2$&$(32|32)$\\
 &          &            & $H_8\times I^1$  & $(256|256)$ & $D_4\times D_4\times I^2$ & $(128|128)$ & $D_8\times I^2$ &$(64|64)$  & $D_{10}$ &$(32|32)$    \\
 &     &                 &                &             & $(D_4*_1\!I^6)^\dagger$   & $(128|128)$ & $D_4\times D_6$ &$(64|64)$ & &\\[1mm]\hline\noalign{\vglue2mm}
 \multicolumn{11}{l}{\parbox{172mm}{\baselineskip=11pt plus1pt $^\dagger$\,Herein, the code name ``$D_4*_1\!I^6$'' denotes that the $D_4$ code is padded by six zeroes and augmented by one additional, ``glue'' generator spanning the positions of the six added zeroes and a sufficient number (here, two) of 1's within $D_4$ so as to generate a doubly even code; it is generated by $\{1111\,0000\,00,0011\,1111\,11\}$.}}\\[-2mm]
\end{tabular}
\end{center}
\caption{Adinkra Topologies up to $N=10$: The number of nodes in an Adinkra with the indicated topology is shown to the right of each topology, in the form $(n_B|n_F)$, where $n_B$ is the number of bosons and $n_F$ is the number of fermions in the supermultiplet.}
\label{topologytable}
\end{table}

\subsection{A Multiplication of Adinkras}
\label{s:multiply}
We saw in the $N=5$ case an Adinkra topology we called $D_4\times I^1$, which had the topology of the 1-skeleton of the cartesian product of the graphs $D_4$ and $I^1$.  In this section, we will see that more generally, it is possible to take the cartesian product of any two Adinkras.

Let $A_1$ and $A_2$ be Adinkras.  We wish to define the {\em\/cartesian product\/} Adinkra $A_1\times A_2$.  If their vertex sets are $V_1$ and $V_2$, then the vertex set for $A_1\times A_2$ is $V_1\times V_2$.  A vertex $(v_1,v_2)\in V_1\times V_2$ is bosonic if $v_1$ and $v_2$ are either both bosonic or both fermionic; the vertex $(v_1,v_2)$ is fermionic if either one of $v_1,v_2$ is bosonic and the other fermionic.

If $A_1$ has $N_1$ supersymmetry generators, and $A_2$ has $N_2$ supersymmetry generators, then $A_1\times A_2$ has $N_1+N_2$ supersymmetry generators.  We number the supersymmetry generators from $A_1$ first, and then the supersymmetry generators from $A_2$, and likewise order the edge colors (ensuring that the colors for edges in $A_1$ are distinct from those in $A_2$).  Then if $(v_1,v_2)$ is a vertex in $V_1\times V_2$, we draw edges according to the edges in $A_1$ for colors in $A_1$, and edges according to edges in $A_2$ for colors in $A_2$, with the same orientations of the edges; except that when $v_1$ is fermionic, and the color is from $A_2$, then the dashedness of the edge is reversed (that is, solid edges become dashed and dashed edges become solid).\footnote{Other possibilities will work, but they are all equivalent to this one, up to replacing certain fields with their negatives.  For instance, the $N=5$ case determined when the edges would switch in dashedness based on whether $v_2$ was fermionic.}

To see that $A_1\times A_2$ is an Adinkra for a supermultiplet, we note that every vertex has exactly one edge of each color incident with it.  To see that the supersymmetry algebra\eq{eSuSy} holds, we consider two colors, and pick a starting vertex $(v_1,v_2)$.

If both colors are in $A_1$, then applying these two colors to $(v_1,v_2)$ will trace out a subgraph of $A_1\times A_2$, of the form $S\times \{v_2\}$, where $S$ is the graph traced out by those colors in the original $A_1$.  If $S$ is the sort of graph that makes\eq{eSuSy} true for $A_1$, then $S\times\{v_2\}$ will make it true for $A_1\times A_2$.  In particular it is a square with an odd number of dashed edges.

Likewise, suppose both colors are in $A_2$.  When $v_1$ is bosonic, the proof applies as before.  When $v_1$ is fermionic, we note that $S$ will have the dashedness of its edges switched; but note that the only condition that relates to dashedness is that the number of dashed edges must be odd, a feature that remains true when the dashedness of the all edges are switched.

Now suppose one color is in $A_1$ and another color is in $A_2$.  Then the graph traced out by the two colors from $(v_1,v_2)$ will be a square, like the one below.  The details of the orientation and the dashedness of the edges may vary, but any changes would affect two opposite edges in the square at the same time, and the result is always a square that makes\eq{eSuSy} true.  Thus, the resulting Adinkra satisfies \Eq{eSuSy}:
\begin{equation}
 \vC{\includegraphics{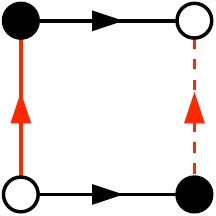}}
\end{equation}

If $A_1=I^1$, the $N=1$ base Adinkra, then the above construction for $I^1\times A_2$ is precisely what is described above in the case $N=5$ for constructing $I^1\times D_4$.  Note that $I^N$ is the $N$-fold product $I^1\times\cdots\times I^1$, so that the notation $I^N$ is appropriate.

In turn, from the point of view of codes, if $A_1$ is obtained by a code $C_1$ and $A_2$ by a code $C_2$, then $A_1\times A_2$ is obtained by the code $C_1\oplus C_2$.

\subsection{Comparison with Clifford Representations}
 \label{s:CCR}
Since Clifford representations are already classified, it is worthwhile comparing what we have just found with the known classification of Clifford representations.  The abstract algebras $\Cl(0,N{+}1)$ are known for all $N$.  The following can be found on any standard text on Clifford algebras, such as Lawson and Michelsohn's {\em Spin Geometry}\cite{rLM}.

The Clifford algebras $\Cl(0,N)$ and $\Cl(0,N{+}1)$ are given in Table~\ref{cliffordtable}.  The description of these algebras has a modulo $8$ sort of periodicity, so it is convenient to write $N=8m+s$ where $m$ and $s$ are integers and $0\le s\le 7$.  The notation $\IR(n)$, $\IC(n)$, and $\IH(n)$ denotes the algebra of $n\times n$ matrices with real, complex, and quaternionic coefficients, respectively.

\begin{table}[ht]
\begin{center}
\begin{tabular}{r|c|c|c|c}
\boldmath$s$ & \bf\boldmath Cl$(0,N)$ & \bf\boldmath Cl$(0,N{+}1)$ & \bf Irrep.
 & \bf\boldmath$\dim_\IR \text{Irrep.}$\\[1pt]\hline\hline
0&$\IR(16^m)$&$\IR(16^m)\oplus\IR(16^m)$&$\IR^{16^m}$, $\IR^{16^m}$&$16^m$\\
1&$\IR(16^m)\oplus\IR(16^m)$&$\IR(2\cdot 16^m)$&$\IR^{2\cdot 16^m}$&$2\cdot 16^m$\\
2&$\IR(2\cdot 16^m)$&$\IC(2\cdot 16^m)$&$\IC^{2\cdot 16^m}$&$4\cdot 16^m$\\
3&$\IC(2\cdot 16^m)$&$\IH(2\cdot 16^m)$&$\IH^{2\cdot 16^m}$&$8\cdot 16^m$\\[1mm]
4&$\IH(2\cdot 16^m)$&$\IH(2\cdot 16^m)\oplus\IH(2\cdot 16^m)$&$\IH^{2\cdot 16^m}$, $\IH^{2\cdot 16^m}$&$8\cdot 16^m$\\
5&$\IH(2\cdot 16^m)\oplus\IH(2\cdot 16^m)$&$\IH(4\cdot 16^m)$&$\IH^{4\cdot 16^m}$&$16\cdot 16^m$\\
6&$\IH(4\cdot 16^m)$&$\IC(8\cdot 16^m)$&$\IC^{8\cdot 16^m}$&$16\cdot 16^m$\\
7&$\IC(8\cdot 16^m)$&$\IR(16\cdot 16^m)$&$\IR^{16\cdot 16^m}$&$16\cdot 16^m$\\\hline
\end{tabular}\vspace{-3mm}
\end{center}
\caption{Clifford algebras $\Cl(0,N)$ and $\Cl(0,N{+}1)$, and their representations:
 Here $N=8m+s$ where $m$ and $s$ are integers and $0\le s\le 7$.}
\label{cliffordtable}
\end{table}

Note from Table~\ref{cliffordtable} that every Clifford algebra is either a matrix algebra, or a direct sum of two matrix algebras.  It is a classical result that every finite-dimensional real representation of such an algebra decomposes into irreducibles.  For $\IR(n)$, $\IC(n)$, and $\IH(n)$, there is up to isomorphism only one irreducible real representation: $\IR^n$, $\IC^n$, or $\IH^n$, respectively.  For $\IR(n)\oplus\IR(n)$, there are two non-isomorphic irreducible representations: one that ignores the second summand and is the standard representation on the first, and the other that ignores the first summand.  Likewise for $\IC(n)\oplus\IC(n)$ and $\IH(n)\oplus\IH(n)$.

The fourth column in Table~\ref{cliffordtable} describes the irreducible representations of $\Cl(0,N{+}1)$, based on these facts.  The last column gives the real degrees of freedom in each case, and will correspond 
to the number of vertices in the minimal Adinkra for $N$-extended supersymmetry.  Note that for $N$ a multiple of 4, there are two distinct irreducible representations of the same dimension.

The second column in Table~\ref{cliffordtable} lists $\Cl(0,N)$ because it is this representation that gives the cubical Adinkra.  Each of these must therefore be describable as a direct sum of irreducible representations of the sort listed in the fourth column.

This splitting is the focus of Table~\ref{quotienttable}.  We list in columns two and three the real dimensions of $\Cl(0,N)$ and the irreducible representation.  We divide these to see how many copies of the irreducible representation are in $\Cl(0,N)$ (when $N$ is a multiple of $4$, both types of irreducible representation are found equally often in $\Cl(0,N)$).  Since these components are obtained by repeatedly applying $\p_+$ or $\p_-$, the number of these components is also $2^k$, where $k$ is the number in the generating set for the doubly even code of maximum dimension.  This produces the last column of the table, indicating how large $k$ can be in an $[N,k]$ doubly even code.

\begin{table}[ht]
\begin{center}
\begin{tabular}{r|c|c|c|c}
\bf\boldmath$s$ & \bf\boldmath$\dim_\IR\text{Cl}(0,N)$ & \bf\boldmath$\dim_\IR \text{Irrep.}$
 &\parbox[b]{25mm}{\centering\baselineskip=10pt\bf\boldmath\#(Irrep.) in $\text{Cl}(0,N)$}
 & \bf\boldmath$\max k$\\[1pt]\hline\hline
0&$16^{2m}$&$16^m$&$16^m$&$4m$\\
1&$2\cdot 16^{2m}$&$2\cdot 16^m$&$16^m$&$4m$\\
2&$4\cdot 16^{2m}$&$4\cdot 16^m$&$16^m$&$4m$\\
3&$8\cdot 16^{2m}$&$8\cdot 16^m$&$16^m$&$4m$\\[1mm]
4&$16\cdot 16^{2m}$&$8\cdot 16^m$&$2\cdot 16^m$&$4m+1$\\
5&$32\cdot 16^{2m}$&$16\cdot 16^m$&$2\cdot 16^m$&$4m+1$\\
6&$64\cdot 16^{2m}$&$16\cdot 16^m$&$4\cdot 16^m$&$4m+2$\\
7&$128\cdot 16^{2m}$&$16\cdot 16^m$&$8\cdot 16^m$&$4m+3$\\\hline
\end{tabular}\vspace{-3mm}
\end{center}
\caption{Clifford algebras as representations, decomposed into irreducibles, and the symmetry group needed for the quotient; $N=8m+s$}
\label{quotienttable}
\end{table}

This has several consequences.  First, it tells us for every $N$ the maximum dimension, $k$, of a doubly even code of length $N$.  This agrees precisely with Gaborit's mass formula\cite{rPGMass}.  Second, it tells us for every $N$ the minimal number of real degrees of freedom in a supermultiplet, in the third column, under $\dim_\IR \mbox{irrep.}$.  Half of these are bosons; half are fermions.

But most strikingly, this table shows that there is a unique irreducible representation for the Clifford algebra when $N$ is not a multiple of 4, and that there are precisely two irreducible representations for the Clifford algebra when $N$ is a multiple of 4.  This seems strange in contrast to the multitude of codes of maximal $k$ in Table~\ref{topologytable}.  This is explained in Section~\ref{s:liq}.

\subsection{Application to 4- and Higher-Dimensional Theories}
Oxidization of $N$-supersymmetric theories from 1-dimensional time to 4 (and higher)-dimensional spacetimes necessarily requires that $N\equiv0\bmod4$, since the smallest irreducible spinor in 4-dimensional spacetime has $N=4$ components, and in higher dimensions the smallest irreducible spinors are integral multiples of this. Thus if $\cal N$ represents the number of 4D supersymmetries and $N$ represents the number of 1D supersymmetries then ${\cal N} = N/4$.  Now, listing all of these is tantamount to listing all doubly even codes---of which there is a combinatorial abundance\cite{r6-3}. So instead, we will tabulate the minimal supermultiplets, that is, the ones that correspond to the maximal codes.

We can easily read off the number of degrees of freedom from Table~\ref{quotienttable}, and for small $\cal N$, the actual topologies from Table~\ref{topologytable}.  The results are presented in Table~\ref{4DminREPsiz}.

\begin{table}[ht]
\begin{center}
\begin{tabular}{c|r|c|c|rl} 
 \boldmath$\cal N$ & \boldmath$N$ & \bf\#(Bosons) & \bf\#(Fermions) &
  \multicolumn{2}{c}{\bf \#(Adinkra Topologies)}\\[1pt]
 \hline\hline
$1$ & 4  & 4 & 4 & 1 & $D_4$  \\ %\hline
$2$ & 8  & 8 & 8  &  1 & $E_8$ \\ %\hline
$3$ & 12 & 64 & 64  & 2 & $D_{12}$, $E_8\times D_4$  \\ %\hline
$4$ & 16 & 128 & 128  &  2 & $E_{16}$, $E_8\times E_8$ \\[1mm]
$5$ & 20 & 1,024 & 1,024  &  10 & $\dots^*$\\ %\hline
$6$ & 24 & 2,048 & 2,048  & 9 & $\dots^*$\\ %\hline
$7$ & 28 & 16,384 & 16,384  &  151 & $\dots^*$\\ %\hline
$8$ & 32 & 32,768 & 32,768  &  85 & $\dots^*$\\[1pt] \hline\noalign{\vglue3pt}
\multicolumn{6}{l}{\parbox{125mm}{\footnotesize\baselineskip=9pt$^*$\,For ${\cal N}\geq5$ there are too many Adinkra topologies to be shown here\cite{r6-3}; see {\scriptsize\tt http://www.rlmiller.org/de\_codes/} for an up-to-date table with links to actual codes.}}
\end{tabular}\vspace{-3mm}
\end{center}
\caption{Minimal Off-Shell 4D, ${\cal N}\le 8$ Supermultiplets}
\label{4DminREPsiz}
\end{table}

Table~\ref{4DminREPsiz} can be used to make an argument about the size of the smallest irreducible off-shell representation for a given value of $\cal N$: the supersymmetry extension. If the smallest indecomposable representation coincides with the smallest irreducible representation, then for a given value of $\cal N$, the table above determines the smallest off-shell representation. For each value of $\cal N$, it is possible to consider the representation that appears at the lowest level of that column.   All of these topologies are shown together with the number of bosonic and fermionic nodes in Table~\ref{4DminREPsiz}.

For the cases of ${\cal N} = 1\text{ and }2$, the number of bosonic and fermionic degrees of freedom are in agreement with the known minimal off-shell supersymmetrical representations.  The smallest 4D, ${\cal N} = 1$ off-shell representations do indeed consist of 4 bosons and 4 fermions.  In a similar manner, the smallest 4D, ${\cal N} = 2$ off-shell representations do indeed consist of 8 bosons and 8 fermions. The case of  4D, ${\cal N} = 3$ off-shell representations is not so widely known. Nevertheless, W.~Siegel has presented an argument about the off-shell structure of conformal 4D, ${\cal N} = 3$ supergravity that indicates it describes 64 bosons and 64 fermions\cite{rWSON}.  There exist, also, one known off-shell example of a 4D, ${\cal N} = 4$ supermultiplet in Salam-Strathdee superspace. It is the conformal 4D, ${\cal N} = 4$ supergravity supermultiplet field strength\cite{rBRdW} and it consists of precisely 128 bosons and 128 fermions. All of this agrees with the first four `data' points on Table~\ref{4DminREPsiz}.

Precisely when $N$ is a multiple of $8$, the maximum value of $k$ is $N/2$, and these codes are self-dual (where the orthogonal space of the code equals the code; doubly even implies that these codes are self-orthogonal).  Thus, these relate to even unimodular lattices.  Indeed, the case ${\cal N}=2$, or $N=16$, provides the two lattices, well known to string theorists: $E_8\times E_8$, and $SO(32)$, which we call $E_{16}$, since $D_{16}$ fits in the sequence of codes $D_{2n}$ such that $D_{16}\subset E_{16}$:
\begin{equation}
 D_{16}\iff d_{16}:\left[\begin{smallmatrix}
                          1111\,0000\,0000\,0000\\[1pt]
                          0011\,1100\,0000\,0000\\[1pt]
                          0000\,1111\,0000\,0000\\[1pt]
                          0000\,0011\,1100\,0000\\[1pt]
                          0000\,0000\,1111\,0000\\[1pt]
                          0000\,0000\,0011\,1100\\[1pt]
                          0000\,0000\,0000\,1111
                         \end{smallmatrix}\right]
 \qquad\text{\it vs.}\qquad
 E_{16}\iff e_{16}:\left[\begin{smallmatrix}
                          1111\,0000\,0000\,0000\\[1pt]
                          0011\,1100\,0000\,0000\\[1pt]
                          0000\,1111\,0000\,0000\\[1pt]
                          0000\,0011\,1100\,0000\\[1pt]
                          0000\,0000\,1111\,0000\\[1pt]
                          0000\,0000\,0011\,1100\\[1pt]
                          0000\,0000\,0000\,1111\\[1pt]
                          0101\,0101\,0101\,0101
                         \end{smallmatrix}\right].
\end{equation}

It is of interest to consider the final case above, with ${\cal N}=8$, or $N=32$ supersymmetries.  For this case we have $2^{N-16}$ = $2^{16}$ = 65,536 total nodes. The Adinkra associated with this topology has 32,768 bosonic nodes and 32,768 fermionic nodes\footnote{In private communication with W.\ Siegel, we have learned that he argues that the minimal such representation has 16,384 bosonic nodes and 16,384 fermionic nodes.  It is not clear why this disagrees with our findings.}. These numbers may be familiar to anyone who has followed our development of this set of ideas.  In fact, in a previous publication\cite{rGLPR} precisely this number was given as the most likely minimum off-shell representation of 4D, ${\cal N} = 8$ supersymmetry.  The topologies for this case will include $E_8{}^4$, $E_{16}\times E_8{}^2$, $E_{16}{}^2$, and $E_{32}$, but the table above shows that there are a total of 85 distinct doubly even codes for $N=32$.  The attempt to classify these began with Conway and Pless\cite{rCP} in 1980, though a correction was found to be necessary by Conway, Pless and Sloane in 1990\cite{rCPS}.  Bilous and van Rees\cite{rBilRees} in 2007 replicated these results by performing a more systematic search and provided the list of all 85 codes on the web-site\cite{rBilReesW}.  It is amusing to note that this was achieved not long ago, that $N=32$ is the upper limit of what is known currently about self-dual codes, and that 32 is the maximal $N$ needed in our program, having in mind applications to superstrings and their $M$- and $F$-theory extensions.  Five of these $85$ codes have minimal weight 8, and it would be interesting to see if these five play a special role in ${\cal N} =8$ supersymmetries in four dimensions.

The computation of the other doubly even codes used for ${\cal N}=5,\cdots,8$, and in fact for all $4\leq N\leq32$ is currently under way and described in Ref.\cite{r6-3}.

\section{Topological Ambiguity}
\label{s:ambiguity}
Between Ref.~\cite{r6-3} and the previous section, we have shown that the chromotopologies of connected Adinkras are precisely classified by doubly even codes.  There is another issue, however: there may be circumstances where two Adinkras, with distinct chromotopologies (or indeed distinct topologies), actually define the same supermultiplet, since an Adinkra reflects not only the supermultiplet but also a choice of a basis of component fields.  This issue will be covered in the remainder of the paper.

\subsection{Three Examples of Topological Ambiguity}
\label{s:liq}
Our first example of such a situation was already discussed in Section~\ref{s:d4}.  It involves $N=4$, with splitting the 4-cubical Clifford supermultiplet into two superfields: a chiral superfield and a twisted chiral superfield.  On the one hand, we have a connected Adinkra with topology $I^4$; and on the other we have a disconnected Adinkra, consisting of two copies of the topology $D_4$.

\begin{example}
\label{decomposen4}
Begin with the Clifford Algebra Superfield described of Ref.~\cite{rA}.  This involves the following: Let $\Cl(0,4)$ act on itself by left multiplication.  We define the engineering dimensions of the bosons as $-\inv2$, and the engineering dimensions of the fermions as $0$.  To be more explicit, we take the products of various $\G_I$ in $\Cl(0,4)$ and define fields that correspond to them.  Define the boson $\f_0$ to correspond to $1$, then each $\j_I$ corresponds to each $\G_I$.  Next $\f_{12}=\G_1\G_2$, and so on, up to $\j_{123}=\G_1\G_2\G_3$, and $\f_{1234}=\G_1\G_2\G_3\G_4$.  Note that the order of the indices is numerically increasing.  We may extend this to other orders of indices, by antisymmetrization.

We then have the following transformation rules\Ft{We denote
$A_{[I}B_{J]}\Defl\inv2(A_IB_J-A_JB_I)$,
$A_{[I}B_{JK]}\Defl\inv3(A_IB_{[JK]}+A_KB_{[IJ]}+A_JB_{[KI]})$, \etc}:
\begin{subequations}
 \label{eM=}
\begin{eqnarray}
Q_I\f_0&=&\j_I,\\
Q_I\j_J&=&i\,\d_{IJ}\,\ddt\f_0 + i \,\ddt \f_{[IJ]},\\
Q_I\f_{JK}&=&2\d_{I[J}\j_{K]} + \j_{[IJK]},\\
Q_I\j_{JKL}&=&3i\,\d_{I[J}\,\ddt \f_{KL]} + i \,\ddt \f_{[IJKL]},\\
Q_I\f_{JKLM}&=&4\d_{I[J}\j_{KLM]},
\end{eqnarray}
\end{subequations}
describing the Isoscalar supermultiplet, $\sM^=_{I^4}=(\f_0,\f_{IJ},\f_{IJKL}\,|\,\j_I,\j_{IJK})$.
Its Adinkra is given here:
\begin{equation}
 \vC{
 \begin{picture}(120,45)(5,-4)
 \put(-3,3){\includegraphics[height=30mm]{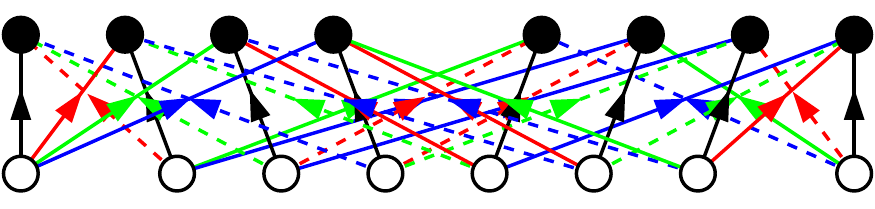}}
 \put(-2,34){$\j_1$}
 \put(13,34){$\j_2$}
 \put(28,34){$\j_3$}
 \put(44,34){$\j_4$}
 \put(-2,0){$\f_0$}
 \put(20,0){$\f_{12}$}
 \put(35,0){$\f_{13}$}
 \put(50,0){$\f_{14}$}
 \put(70,34){$\j_{123}$}
 \put(86,34){$\j_{124}$}
 \put(101,34){$\j_{134}$}
 \put(116,34){$\j_{234}$}
 \put(65,0){$\f_{23}$}
 \put(80,0){$\f_{24}$}
 \put(95,0){$\f_{34}$}
 \put(116,0){$\f_{1234}$}
 \end{picture}}
 \label{eN4B88}
\end{equation}

Now we consider $\f^{\pm}\Defl\f_0\pm \f_{1234}$.  Applying the various $Q_I$ produces $\j^\pm_I \Defl \j_I \pm \ve_I{}^{JKL}\j_{JKL}$ and $\f^\pm_{IJ}\Defl\f_{IJ}\mp \ve_{IJ}{}^{KL}\f_{KL}$.  Using these fields instead, the supermultiplet is again adinkraic, but the Adinkra is now disconnected.  We then have
\begin{eqnarray}
Q_I\f^{\pm} &=& \j^\pm_I,\\
Q_I\j^\pm_J &=& i\,\d_{IJ}\,\ddt\f^\pm + i\,\ddt \f^\pm_{[IJ]},\\
Q_I\f^\pm_{JK}&=& 2\d^{}_{I[J}\j^\pm_{K]}.
\end{eqnarray}
In this way, the above Adinkra splits into two like this:
\begin{equation}
 \vC{
 \begin{picture}(120,40)(15,-5)
 \put(1,2){\includegraphics[height=28mm]{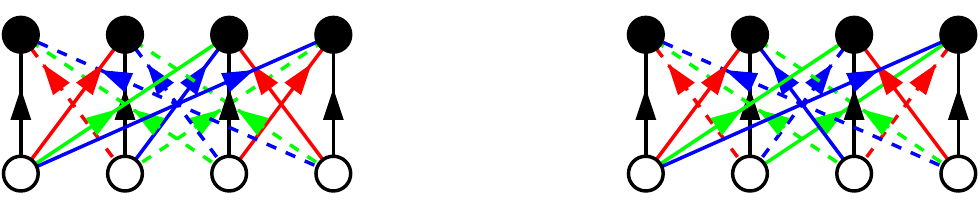}}
 \put(0,31){$\j^+_1$}
 \put(14,31){$\j^+_2$}
 \put(28,31){$\j^+_3$}
 \put(42,31){$\j^+_4$}
 \put(0,0){$\f^+_0$}
 \put(14,0){$\f^+_{12}$}
 \put(28,0){$\f^+_{13}$}
 \put(42,0){$\f^+_{23}$}
 \put(85,31){$\j^-_1$}
 \put(99,31){$\j^-_2$}
 \put(113,31){$\j^-_3$}
 \put(127,31){$\j^-_4$}
 \put(85,0){$\f^-_0$}
 \put(99,0){$\f^-_{12}$}
 \put(113,0){$\f^-_{13}$}
 \put(127,0){$\f^-_{23}$}
 \end{picture}}
 \label{eN4B44+-}
\end{equation}
These are the same two Isoscalar supermultiplets, with topology $D_4$, as depicted by the Adinkras\eq{eB44a+b}.

As we see, it is possible for a single supermultiplet to have two distinct Adinkra descriptions: in one situation, it is connected, in the other, it is not.  In the disconnected case, it is more apparent that this supermultiplet decomposes into a direct sum of two other supermultiplets:
\begin{equation}
 \underbrace{(\f_0,\f_{IJ},\f_{IJKL}\,|\,\j_I,\j_{IJK})}_{\ttt\sM^=_{I^4}}
 ~=~ \underbrace{(\f_{+0},\f_{+IJ}\,|\,\j_{+I})}_{\ttt\sM^=_{D_4{}^+}}
  ~+~ \underbrace{(\f_{-0},\f_{-IJ}\,|\,\j_{-I})}_{\ttt\sM^=_{D_4{}^-}}.
\end{equation}
We say that $\sM^=_{I^4}$ is {\em\/decomposable\/}.

This decomposition may also be seen as follows: An adinkraic supermultiplet is constructed by picking any linear combination of component fields, $\vf_0$, and completing the $Q$-orbit, \ie, computing its superpartners by applying the $Q_I$ successively to $\vf_0$: $\c_I\Defl Q_I(\vf_0\/)$, $\vf_{IJ}\Defl Q_I(\c_J)$, \etc, until further application of the $Q_I$ would produce only the already defined component fields and their time-derivatives.
 For a generic choice of $\vf_0\in\sM^+_{I^4}$, this produces $2^4=16$ linearly independent component fields and the result spans all of $\sM^=_{I^4}$. However, choosing $\vf_0\Defl(\f_0{+}\f_{1234})$ produces a complete $Q$-orbit only half as large as $\sM^=_{I^4}$; \eg, 
 $\G_1 \G_2 \G_3 \G_4$ applied to $\vf_0$ produces $\vf_0$ itself.  We thus have spanned a sub-supermultiplet, $\sM^=_{D_4{}^+}$. By starting with $\vf_0\Defl(\f_0{-}\f_{1234})$ instead, we get the complementary sub-supermultiplet, $\sM^=_{D_4{}^-}$.
\end{example}

More generally, if a valise Adinkra has chromotopology $I^N/{C_1}$, where $C_1$ is some doubly even code, and there is another doubly even code $C_2$ with $C_1\subset C_2$, we can choose a basis in which the original valise Adinkra decomposes, the pieces coming from quotienting $I^N$ by $C_2$ instead.  As to whether or not this can be done with other kinds of supermultiplets, where the component fields have more than two different engineering dimensions, depends on the node choice symmetries described in Section~\ref{s:ncs}.

\begin{construction}\label{const:split}
Let $N$ be fixed and let $C_1$ and $C_2$ be doubly even codes of length $N$, and suppose $C_1\subset C_2$.  Pick a generating set $g_1,\ldots,g_k$ for $C_2$ in such a way that $g_1,\ldots,g_j$ is a generating set for $C_1$ (see Appendix~\ref{s:linalg}).  Define
\begin{equation}
\p_1(v)=v{\cdot}\frac{1+g_1}{2}\cdots\frac{1+g_j}{2},
\end{equation}
and for each vector $\vec{x}=(x_1,\ldots,x_{k-j})\in\{0,1\}^{k-j}$, we define
\begin{equation}
\p_{2,\vec{x}}(v)=v{\cdot}\frac{1+g_1}{2}\cdots\frac{1+g_j}{2}
                    {\cdot}\frac{1+(-1)^{x_1}g_{j+1}}{2}\cdots\frac{1+(-1)^{x_{k-1}}g_k}{2}.
\end{equation}
Define $V=\img(\p_1)$, the Clifford representation defined by $C_1$.  For each vector $\vec{x}=(x_1,\ldots,x_{k-j})\in \{0,1\}^{k-j}$, we define
\begin{equation}
e_{\vec{x}} = \p_1\left(\frac{1+(-1)^{x_1}g_{j+1}}{2}\cdots\frac{1+(-1)^{x_{k-j}}g_k}{2}\right).
\end{equation}
Note that this is equal to $\p_{2,\vec{x}}(1)$.  Successively applying the various $\G_I$ to $e_{\vec{x}}$ on the left produces $2^{N-k+1}$ different vectors, which occur in $\pm$ pairs.  This spans an $(N{-}k)$-dimensional Clifford representation $\img(\p_{2,\vec{x}})$.  The corresponding Isoscalar supermultiplet will have an Adinkra with chromotopology $I^N/C_2$.

Thus, $\img(\p_1)$ splits as a direct sum of the various $\img(\p_{2,\vec{x}})$ for each $\vec{x}\in\{0,1\}^{k-j}$, and each of these summands has an Adinkra with chromotopology $I^N/{C_2}$.  In this way, we get $2^{k-j}$ connected components of the Adinkra, each corresponding to the code $C_2$, and taken as a whole, corresponding to the Clifford representation $V$.
\end{construction}

We now consider a different kind of example, where the Adinkra remains connected, but nevertheless changes its topology.

\begin{example}
Both $e_8\oplus t^2$ and $d_{10}$ are $[10,4]$ doubly even codes.  As such they both describe $N=10$ Adinkras, each with the same number of degrees of freedom ($64$ fields, 32 of which are bosons, 32 of which are fermions).  Corresponding to each Adinkra there is a valise supermultiplet. However, these two supermultiplets in fact turn out to be equivalent.

The fact that these are equivalent is an easy consequence of the fact that $\Cl(0,11)$ is isomorphic to $\IC(32)$, the algebra of $32\times 32$ complex matrices, and the only irreducible representation of this is the standard representation on $\IC^{32}$ (which has 64 real dimensions).  Thus, there is, up to isomorphism, only one $64$-dimensional Clifford representation, and the Clifford representation from $e_8\oplus t^2$ and $d_{10}$ must both be isomorphic, and therefore, the valise supermultiplets must similarly be isomorphic.

It is instructive, however, to see this more directly.  We consider the following generators:
\begin{equation}
 e_8\oplus t_2\gen
 \left[\begin{smallmatrix}
          1111\,0000\,00\\[2pt]
          0011\,1100\,00\\[2pt]
          0000\,1111\,00\\[2pt]
          1010\,1010\,00
       \end{smallmatrix}\right],\qquad
 d_{10}\gen
 \left[\begin{smallmatrix}
          1111\,0000\,00\\[2pt]
          0011\,1100\,00\\[2pt]
          0000\,1111\,00\\[2pt]
          0000\,0011\,11
       \end{smallmatrix}\right].
\end{equation}

Based on this, we define the following products of $\G$ matrices:
\begin{alignat}{3}
 c_1&=1111000000, \qquad&\iff\qquad g_1&=\G_1\G_2\G_3\G_4\\
 c_2&=0011110000, \qquad&\iff\qquad g_2&=\G_3\G_4\G_5\G_6\\
 c_3&=0000111100, \qquad&\iff\qquad g_3&=\G_5\G_6\G_7\G_8\\
 c_4&=1010101000, \qquad&\iff\qquad g_4&=\G_1\G_3\G_5\G_7\\
 c_5&=0000001111, \qquad&\iff\qquad g_5&=\G_7\G_8\G_9\G_{10}.
\end{alignat}
Each of these defines a pair of projection operators
\begin{equation}
\p_{i\pm}(v)=v{\cdot}\frac{1\pm g_i}{2}
\end{equation}
and we can define the projection operators
\begin{eqnarray}
\p_{e_8\oplus t^2} &=& \p_{1+}\circ \p_{2+}\circ\p_{3+}\circ\p_{4+}\\
\p_{d_{10}} &=& \p_{1+}\circ \p_{2+}\circ\p_{3+}\circ\p_{5+}.
\end{eqnarray}
These are both linear operators from $\Cl(0,10)$ to itself, and we define the images
\begin{eqnarray}
M_1&=&\p_{e_8\oplus t^2}(\Cl(0,10))\\
M_2&=&\p_{d_{10}}(\Cl(0,10)).
\end{eqnarray}
As before, $M_1$ and $M_2$ are Clifford representations, and the corresponding Isoscalar supermultiplets have Adinkras with chromotopologies $E_8\times I^2$ and $D_{10}$, respectively.  But it is important to recognize the role played by the starting vertex.  Namely, we take $e:=\p_{e_8\oplus t^2}(1)$ and apply the various $\G_I$ successively to the left, and in this way we can obtain 128 elements of $\Cl(0,10)$, in $\pm$ pairs.  If we arbitrarily choose one element from each $\pm$ pair, the result is a basis for $\img(\p_{e_8\oplus t^2})$.  The Isoscalar supermultiplet construction provides explicit $\IL_I$ and $\IR_I$ matrices corresponding to this basis, and thus, an explicit matrix for the $Q_I$.  The resulting Adinkra has chromotopology $E_8\times I^2$.

The trick is to find an isomorphic copy of $M_2$ inside $M_1$ by starting with a different vertex.  In this case, we take $f:=\p_{e_8\oplus t^2}(1+g_5)$ and successively apply the various $\G_I$ on the left.  The resulting fields, and thus their span, are in the image of $\p_{e_8\oplus t^2}$, and so is in $M_1$.  But now note that $g_i f=f$ for $i=1$, $2$, $3$, and $5$.  But it is no longer the case that $g_4 f= f$.  So the Adinkra for the Isoscalar supermultiplet that starts from $f$ has the topology of $D_{10}$.

It is important to note here that these are simply different bases for the same supermultiplet.  Since the Adinkra description depends on the basis, a change of basis may change the Adinkra.

Now, suppose we are given a $(32|32)$-dimensional valise supermultiplet, but that we find that we can identify it as one that can be described as follows:
\begin{enumerate}\itemsep=-3pt\vspace{-3mm}
 \item the component fields may be labeled $\f_{(0000000000)}, \dots, \j_{(1000000000)}, \dots$ with however the identifications made generated by applying successively and systematically by all $\G_I$'s on the system:
\begin{subequations}
 \label{eIds}\vspace{-2mm}
 \begin{alignat}{3}
   \f_{(1111000000)}&=\f_{(0000000000)},\\
   \f_{(0011110000)}&=\f_{(0000000000)},\\
   \f_{(0000111100)}&=\f_{(0000000000)},\\
   \f_{(1010101000)}&=\f_{(0000000000)};
\end{alignat}
\end{subequations}
 \item where the $N=10$ supersymmetry acts as follows:
\begin{eqnarray}
Q_I \f_{(x_1,\dots,x_{10})}&=&\j_{(x_1,\dots,x_{I-1},1-x_I,x_{I+1},\dots,x_{10})},\\
Q_I \j_{(x_1,\dots,x_{10})}&=&i\,\ddt\f_{(x_1,\dots,x_{I-1},1-x_I,x_{I+1},\dots,x_{10})},
\end{eqnarray}
with the identifications generated by the system\eq{eIds} understood.
\end{enumerate}
It is straightforward that the constraints\eq{eIds} may be rewritten in the form
\begin{subequations}
 \label{eConstrs}\vspace{-2mm}
 \begin{alignat}{3}
   \f_{(0000000000)}-\f_{(1111000000)} &= \p_{1-}(\f_{(0000000000)})&&=0,\\
   \f_{(0000000000)}-\f_{(0011110000)} &= \p_{2-}(\f_{(0000000000)})&&=0,\\
   \f_{(0000000000)}-\f_{(0000111100)} &= \p_{3-}(\f_{(0000000000)})&&=0,\\
   \f_{(0000000000)}-\f_{(1010101000)} &= \p_{4-}(\f_{(0000000000)})&&=0,
\end{alignat}
\end{subequations}
so that the constraints\eq{eIds} together with the result of applying all the $\G_I$'s successively on them end up spanning
\begin{equation}
 \ker\big(\p_{1-}\circ\p_{2-}\circ\p_{3-}\circ\p_{4-}\big).
\end{equation}
This then may be identified with the Adinkra given originally. On the other hand, we know that
\begin{equation}
 \ker\big( \p_{1-}\circ\p_{2-}\circ\p_{3-}\circ\p_{4-} \big)
 =\img\big( \p_{1+}\circ\p_{2+}\circ\p_{3+}\circ\p_{4+} \big)
 \Defr\img\big( \p_{e_8\oplus t_2} \big),
\end{equation}
thus identifying the original Adinkra with a supermultiplet with the $E_8\times I^2$ topology.

But, we are free to define:
\begin{alignat}{3}
\vf_{(0000000000)}&\Defl\f_{(0000000000)}+\f_{(0000001111)},\label{redef0}\\
\c_{(1000000000)}&\Defl\j_{(1000000000)}+\j_{(1000001111)}&&=\G_1(\vf_{(0000000000)}),
\label{eqn:redefine}
\end{alignat}
and so on for the remaining fields, by successive action of the $\G_I$'s; this will sometimes require a minus sign instead of a plus sign, depending on whether the number of $1$s in the last four columns is even or odd.

Since the definition\eq{redef0} is manifestly invariant under the action of $\p_{5+}(v)=v{\cdot}\frac{1+g_5}2$, it is not hard to show that the supermultiplet obtained through the change of basis generated by the successive application of all $\G_I$'s on the definition\eq{redef0} is $\img(\p_{d_{10}})$ and so in fact has the chromotopology given by $D_{10}$.

This proves that this particular supermultiplet admits two distinct but equivalent bases of component fields: one depicted by an Adinkra of the chromotopology given by $E_8\times I^2$, and the other depicted by an Adinkra of the chromotopology given by $D_{10}$.
\end{example}

Thus, the same supermultiplet may indeed have two different Adinkra descriptions, each of a different topology. One way to view this is to consider that an Adinkra is more than a description of a supermultiplet: it defines a supermultiplet together with a special basis for the set of component fields in this supermultiplet.

In particular, it considers a single node and successively applies the various $Q_I$ to it (and taking integrals if going against the orientation).  But what if a different vector for a starting node were taken?  It is possible we would end up with the same Adinkra, but it could be different.

\begin{example}
To consider a simpler example, consider the $N=5$ codes $C_1=d_4\oplus t_1$ generated by $11110$, and $C_2$ generated by $11101$.  The corresponding Adinkras $A_1$ and $A_2$ are given in Figure~\ref{fig:twoN5}.
\begin{figure}[htb]
\begin{center}
\begin{picture}(170,45)(0,0)
\put(0,0){\includegraphics{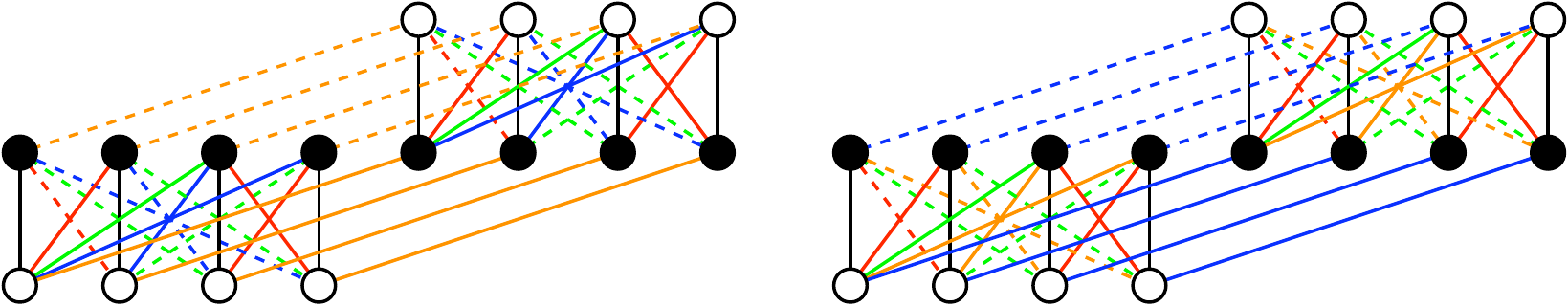}}
\put(2,-3){\makebox[0in][c]{$\f_0$}}
\put(2,19){\makebox[0in][c]{$\j_1$}}
\put(13,19){\makebox[0in][c]{$\j_2$}}
\put(24,19){\makebox[0in][c]{$\j_3$}}
\put(35,19){\makebox[0in][c]{$\j_4$}}
\put(13,-3){\makebox[0in][c]{$\f_{12}$}}
\put(24,-3){\makebox[0in][c]{$\f_{13}$}}
\put(35,-3){\makebox[0in][c]{$\f_{14}$}}
\put(46,12){\makebox[0in][c]{$\j_5$}}
\put(57,12){\makebox[0in][c]{$\j_{125}$}}
\put(68,12){\makebox[0in][c]{$\j_{135}$}}
\put(79,12){\makebox[0in][c]{$\j_{145}$}}
\put(46,34){\makebox[0in][c]{$\f_{15}$}}
\put(57,34){\makebox[0in][c]{$\f_{25}$}}
\put(68,34){\makebox[0in][c]{$\f_{35}$}}
\put(79,34){\makebox[0in][c]{$\f_{45}$}}
\put(90,-3){\makebox[0in][c]{$\vf_0$}}
\put(90,19){\makebox[0in][c]{$\c_1$}}
\put(101,19){\makebox[0in][c]{$\c_2$}}
\put(112,19){\makebox[0in][c]{$\c_3$}}
\put(123,19){\makebox[0in][c]{$\c_5$}}
\put(101,-3){\makebox[0in][c]{$\vf_{12}$}}
\put(112,-3){\makebox[0in][c]{$\vf_{13}$}}
\put(123,-3){\makebox[0in][c]{$\vf_{15}$}}
\put(134,12){\makebox[0in][c]{$\c_4$}}
\put(145,12){\makebox[0in][c]{$\c_{124}$}}
\put(156,12){\makebox[0in][c]{$\c_{134}$}}
\put(167,12){\makebox[0in][c]{$\c_{154}$}}
\put(134,34){\makebox[0in][c]{$\vf_{14}$}}
\put(145,34){\makebox[0in][c]{$\vf_{24}$}}
\put(156,34){\makebox[0in][c]{$\vf_{34}$}}
\put(167,34){\makebox[0in][c]{$\vf_{45}$}}
\end{picture}
\end{center}
\caption{The Adinkras $A_1$ (left) and $A_2$ (right):  They have the same topology but not the same chromotopology---following the colors for 1, 2, 3, and 4 on the left (black, red, green, blue) brings you back to where you started, but not on the right.  Instead, the Adinkra on the right requires you to follow colors for 1, 2, 3, and 5 (black, red, green, orange) to get back where you started.}
\label{fig:twoN5}
\end{figure}

We define
\begin{eqnarray}
g_1&=&\G_1\G_2\G_3\G_4\\
g_2&=&\G_1\G_2\G_3\G_5
\end{eqnarray}
and as before, $\p_1(v)=v{\cdot}\frac{1+g_1}{2}$ and $\p_2(v)=v{\cdot}\frac{1+g_2}{2}$.  The Adinkra $A_1$ consists of the node $\f_0:=\p_1(1)$ and the results of applying the various $\G_I$ to the left: define $\j_i:=\G_I \f_0$ for $I=1, \ldots, 5$; then $\f_{1I}:=\G_I \j_I$ for $I=2, 3, 4, 5$; then $\f_{I5}:=\G_I \j_5$ for $I=2, 3, 4$; and $\f_{1I5}:=\G_1 \f_{I5}$ for $I=2, 3, 4$.  Note that $\G_1\G_2\G_3\G_4 \f_0=\f_0$.

Now we define $\vf_0:=\f_0+\f_{45}=(\Ione{+}\G_1\G_2\G_3\G_5)\f_0=\p_1(\p_2(1))$.  Then define $\c_I:=\G_I \vf_0$ for $I=1,\ldots,5$; then $\vf_{1I}:=\G_1 \c_I$ for $I=2,3,4,5$; then $\vf_{I4}:=\G_I\c_4$ for $I=2,3,5$; and $\c_{1I4}:=\G_1 \vf_{I4}$ for $I=2,3,5$.  The result will be the Adinkra $A_2$.  Note that the span of the $\vf$'s and $\c$'s are in the image of $\p_1$, and indeed, as Clifford representations, are equal. Having chosen two different bases, the two Adinkras are different: in particular, they have different chromotopologies, but the same topology.

That is, this component field basis change in fact implements a permutation in the supersymmetry generators.
\end{example}

\subsection{General Topology Ambiguity}
More generally, the large number of codes for high $N$ and the relatively small number of Clifford representations suggests that this example is fairly representative of valise supermultiplets. Indeed:
\begin{construction}\label{const:liq}
If $C$ and $D$ are both doubly even codes of length $N$ and dimension $k$, and if $M$ is a Clifford representation whose Isoscalar supermultiplet has an Adinkra with chromotopology given by quotienting $I^N$ by $C$, then it is possible to choose a field redefinition in $M$ that changes the Adinkra.  If $C$ and $D$ are maximal doubly even codes of length $N$, then this field redefinition will describe $M$ as an Adinkraic Clifford representation whose chromotopology is given by quotienting $I^N$ by $D$.

The main idea is as before, with one complication: whenever the codes $C$ and $D$ have non-trivial intersection, there is a possibility that one would want to quotient by $\p_{i+}$ and the other by $\p_{i-}$.  The result would be the zero field, and indeed, by successively applying $\G_I$s, we will only get the zero supermultiplet.  To prevent this, we need to coordinate the codes.

To do this, we choose the generators for the codes so that a generator set for $C\cap D$ is used as a starting point for both.  Then we can agree on the use of $\p_{i+}$ or $\p_{i-}$ for these shared generators.

Let $m$ be the dimension of $C\cap D$.  Using Proposition~\ref{prop:codeadd}, we find a generating set $\{c_1,\ldots,c_k\}$ for $C$ and $\{d_1,\ldots,d_k\}$ for $D$ so that $c_i=d_i$ for $1\le i \le m$, and $\{c_1,\ldots,c_m\}$ is a generating set for $C\cap D$.  Proposition~\ref{prop:codeadd} also says that $\{c_1,\ldots,c_k,d_{m+1},\ldots,d_k\}$ is a generating set for a binary linear code $C+D$.  This code is of length $N$ and dimension $2k-m$.  Note that $C+D$ is not generally doubly even, but since the bitwise sum of any two even codewords is even (an easy consequence of Proposition~\ref{prop:weights}), it follows that all codewords in $C+D$ are even.  As before, for each $c_i$ and $d_i$ we define the corresponding product of $\G_I$ matrices $g_i$ and $h_i$, respectively.

Define
\begin{eqnarray}
\p_1(v)&=&v{\cdot}\frac{1+g_1}{2}\cdots\frac{1+g_k}{2},\label{eqn:pivg}\\
\p_2(v)&=&v{\cdot}\frac{1+h_1}{2}\cdots\frac{1+h_k}{2}.
\end{eqnarray}
As before, the image of $\p_1$ is a Clifford representation, and if we take as basis $e:=\p_1(1)$ and the various $\G_I$ successively applied to the left of $e_0$, this will give a basis in which the Isoscalar supermultiplet has Adinkra with chromotopology given by $C$.  If we want this Clifford representation to be isomorphic to $M$, we might have to alter some of the signs in\eq{eqn:pivg}, or equivalently, redefine the $g_i$ by arranging its $\G_I$ matrices in a different order.

Define $\vf_0:=\p_1(\p_2(1))$.  If we successively apply $\G_I$ to the left of $\vf_0$ these will also be in the image of $\p_1$, and yet if we take a product of $\G_I$ indicated by a codeword in $D$, say, $\G_\omega$, then $\G_\omega \vf_0 = \p_1(\p_2(\G_\omega))=\p_1(\pm \G_\omega)=\pm \G_\omega \vf_0$.

We now check that $\vf_0$ is not zero.  Otherwise, the resulting Clifford representation would be trivial.  We first write
\begin{equation}
\vf_0=\p_1(\p_2(1))=\frac{1+h_1}{2}\cdots\frac{1+h_k}{2}{\cdot}\frac{1+g_1}{2}\cdots\frac{1+g_k}{2}
\end{equation}
and anticommute the factors involving $h_1,\ldots,h_m$ until they are next to the $h_1,\ldots,h_m$ factors, and use the fact that the square of any of these factors is itself, to get
\begin{equation}
\vf_0=\frac{1+h_{m+1}}{2}\cdots\frac{1+h_k}{2}{\cdot}\frac{1+g_1}{2}\cdots\frac{1+g_k}{2}.
\end{equation}
As in Construction~\ref{const:quotient}, we can multiply out this product, and get $2^{2k-m}$ terms, one for each codeword in $C+D$ (but written as a product of $\G_I$, perhaps with a minus sign).  The fact that all of these terms are different follows from the fact that $\{c_1,\ldots,c_k,d_{m+1},\ldots,d_k\}$ is a linearly independent set.  As a consequence of this, these terms do not cancel, and thus, $\vf_0$ is not zero.

Now if $C$ and $D$ are maximal doubly even codes of length $N$, then $M$ is irreducible, and so the Clifford representation obtained by successively applying $\G_I$ to $\vf_0$ on the left must be $M$, since it is not $\{0\}$.
\end{construction}

\subsection{Impact on Classification}
The importance of these examples and constructions~\ref{const:quotient}, \ref{const:split} and~\ref{const:liq} is that for every $N$, one can decompose the Clifford representation---and so also the valise Adinkras---into irreducibles, each of which has the chromotopology of the quotient of $I^N$ by a maximal doubly even code for the given $N$.
 It is irrelevant which maximal code was used, since they can all be related to each other.  This helps explain the paucity of irreducible Clifford representations in light of the abundance of doubly even codes.  The cases where there are two irreducible Clifford representations (when $N$ is a multiple of 4) is not due to the multiplicity of doubly even codes, but from the choices of $+$ or $-$ in the projections $\p_{i\pm}$.  These correspond to choices as to which of the edges will be dashed.

We note that Construction~\ref{const:liq} is made possible by the fact that, in a valise supermultiplet, all component fields of the same statistics have the same engineering dimension. This permits us to make linear combinations (of component fields of the same statistic, naturally) at will, including the linear combinations necessary to transform the valise supermultiplet of a given chromotopology and topology into a supermultiplet of a different chromotopology---or even different topology. This is true of all valise supermultiplets, both Isoscalar and Isospinor supermultiplets.

Supermultiplets in which the fields of the same statistics do not all have the same engineering dimension may be mapped to valise supermultiplets by vertex raises/lowerings\cite{r6-1}, called originally ``automorphic duality''\cite{rGR2}. It is then possible to inverse-map the Construction~\ref{const:liq} to those non-valise supermultiplets. However, this will in general result in non-local linear combinations of the component fields of the non-valise supermultiplets, thus restricting strongly the applicability of Construction~\ref{const:liq} to non-valise supermultiplets. We now turn to this issue.

\subsection{Supermultiplets With Only One Adinkraic Description}
 \label{s:onehook}
The previous section showed that the combinatorial multitude of doubly even codes, as presented in Ref.\cite{r6-1}, is irrelevant for the classification problem of valise supermultiplets: for any given $N$, we should take the maximal $k$ to get an irreducible representation, and any maximal doubly even code will give the same representation as another---up to, perhaps, two different equivalence classes of dashing the edges.  But for non-valise supermultiplets, the wide array of doubly even codes is still relevant, since the various field redefinitions to convert between supermultiplets for different codes no longer need be applicable---they may miss degrees of freedom or involve nonlocal transformations such as integrals.

The extreme example is when the Adinkra is {\em one-hooked}\cite{r6-1}, that is, there is only one vertex $v$ of lowest engineering dimension, and that this is the unique vertex having all its adjacent edges oriented away from it.  This image---but not the eventual conclusion---differs from Ref.~\cite{r6-1} in that we now imagine all the other vertices as floating upward from this hooked vertex, instead of hanging downward.

Suppose we are given a one-hooked Adinkra with all remaining vertices floating upward, and suppose that there is another Adinkra for this supermultiplet.  It must include one vertex at this lowest engineering dimension, and from it one must be able to acquire all the other degrees of freedom in the supermultiplet. Being one-hooked, there is only one such vertex, $v$.  It cannot involve derivatives of other fields, since a derivative only increases the engineering dimension.  It could be that a non-zero scalar multiple of $v$ is actually used, but this does not affect the situation, since if we multiply all component field variables in a supermultiplet by a non-zero constant scalar, the resulting Adinkra looks exactly the same as before.  So without loss of generality, our new Adinkra includes $v$.  We apply all the various $Q_I$ to $v$, and eventually we reach all the other vertices in the Adinkra.  So the only adinkraic choice of variables is the one we started with.

In the one-hooked case, therefore, no field redefinitions are possible, and so there is only one Adinkra possible, with only one Adinkra topology, and with only one doubly even code capable of describing it.  Therefore, all doubly even codes are necessary in describing Adinkra topologies and the supermultiplets depicted by one-hooked Adinkras are inequivalent. There are no redundancies.

\begin{example}\label{ExCli4d}
Here is an example of a hypercube $I^4$ topology that does not split into two $D_4$ supermultiplets:
\begin{equation}
  \sM^\diamond_{I^4}~:~
  \vC{\begin{picture}(100,90)(-10,-5)
  \put(-2,-2){\includegraphics[height=74mm]{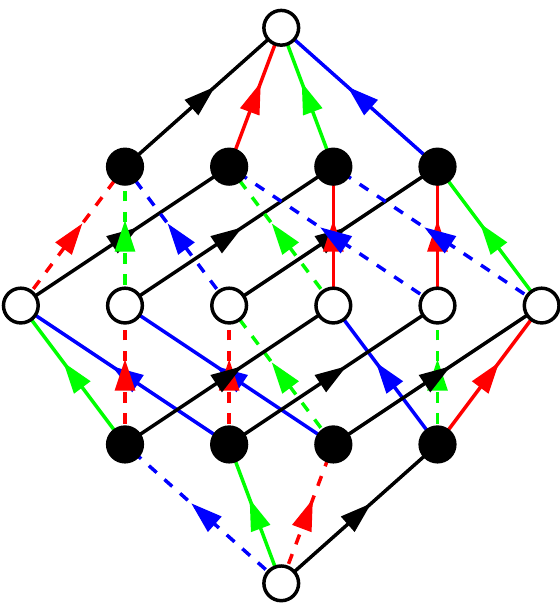}}
  \put(35,-3){$y$}
  \put(8,13){$\h_1$}
  \put(21,13){$\h_2$}
  \put(40,13){$\h_3$}
  \put(54,13){$\h_4$}
  \put(-9,34){$Y_{12}$}
  \put(4.5,34){$Y_{13}$}
  \put(17.5,34){$Y_{14}$}
  \put(42.5,34){$Y_{23}$}
  \put(56,34){$Y_{24}$}
  \put(69.5,34){$Y_{34}$}
  \put(7,55.5){$\Y_{123}$}
  \put(20,55.5){$\Y_{124}$}
  \put(40,55){$\Y_{134}$}
  \put(53,55){$\Y_{234}$}
  \put(35,72){$\cY_{1234}$}
 \end{picture}} \label{eB14641}
\end{equation}
The lowest node here, $y$, is the only degree of freedom with its engineering dimension (which we here declare to be $-\inv2$).  Since time derivatives increase engineering dimension by one, no number of time derivatives can cause another field to have engineering dimension $-\inv2$.  Any decomposition of this supermultiplet would have to have one of its factors with some degree of freedom with engineering dimension equal to $-\inv2$, and the only possibility is a scalar multiple of $y$.  Once this is obtained, the fact that the Adinkra is connected requires that the other fields in this supermultiplet must be in this factor, and thus, the decomposition is trivial.  Hence, this supermultiplet does not decompose as a direct sum of two $D_4$ supermultiplets.

In turn, it is not hard to use the vertex-raising map to relate the $\sM^\diamond_{I^4}$, depicted by the Adinkra\eq{eB14641}, and $\sM^=_{I^4}$, depicted by the Adinkra\eq{eN4B88}:
\begin{equation}
\begin{gathered}
 y=\f_0,\qquad Y_{IJ}=\dot\f_{IJ},\qquad \cY_{1234}=\ddot\f_{1234},\\
 \h_I=\j_I,\qquad \Y_{IJK} = \dot\j_{IJK}.
\end{gathered}
 \label{eN4V->1h}
\end{equation}
Using this to translate the change of basis that led to decomposing this $(8|8)$-dimensional supermultiplet into two $(4|4)$-dimensional ones\eq{eN4B44+-} would however lead to non-local expressions such as:
\begin{subequations}
 \label{eNL}
\begin{alignat}{5}
 \f^\pm &\Defl \f_0\mp\f_{1234} \qquad&&\iff&\qquad
  y^\pm &\Defl y \mp \ddt^{-2}\cY_{1234},\\
 \j^\pm_I &\Defl \j_I\pm\frac{1}{3!}\ve_I{}^{JKL}\j_{JKL} \qquad&&\iff&\qquad
  \h^\pm_I &\Defl \h_I \pm \frac{1}{3!}\ve_I{}^{JKL}\ddt^{-1}\Y_{JKL},\\
 \f^\pm_{IJ} &\Defl \f_{IJ}\pm\frac{1}{2!}\ve_{IJ}{}^{KL}\f_{KL} \qquad&&\iff&\qquad
   Y^\pm_{IJ} &\Defl Y_{IJ}\pm\frac{1}{2!}\ve_{IJ}{}^{KL}Y_{KL}.
\end{alignat}
\end{subequations}
Thus, the supermultiplet depicted by the Adinkra\eq{eB14641} does not decompose\Ft{Non-local field redefinitions\eq{eNL} are typically permissible neither in quantum mechanics nor in field theory.} into two $D_4$ supermultiplets like the valise supermultiplet does; however, it is reducible: see also appendix~\ref{s:red} and Ref.\cite{r6-4.2}.
\end{example}

\begin{example}
We next one-hook the $N=10$ cases $E_8\times I^2$ and $D_{10}$, and let the remaining nodes float upward from it:
\begin{align}
 E_8\times I^2&:\vC{\includegraphics[width=140mm]{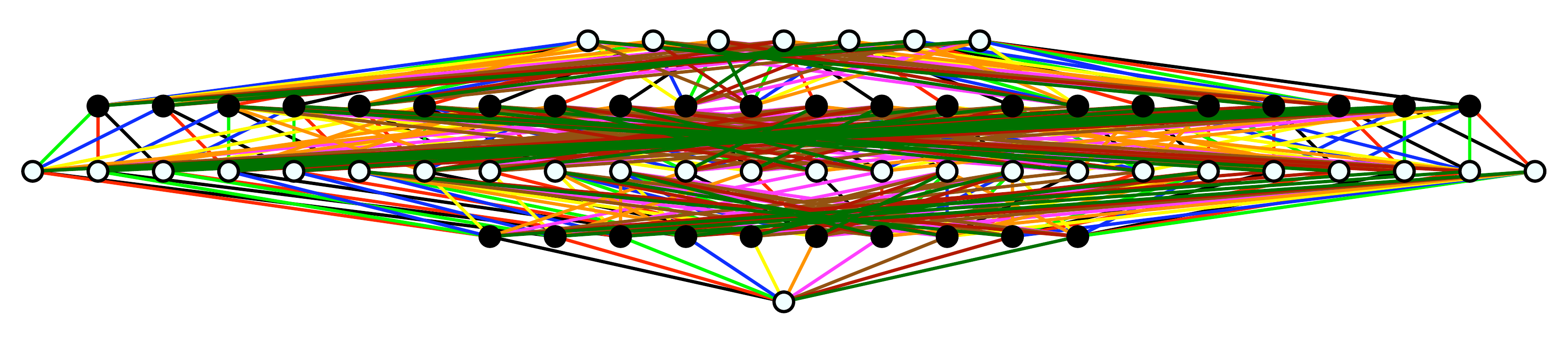}}\\
 D_{10}       &:\vC{\includegraphics[width=140mm]{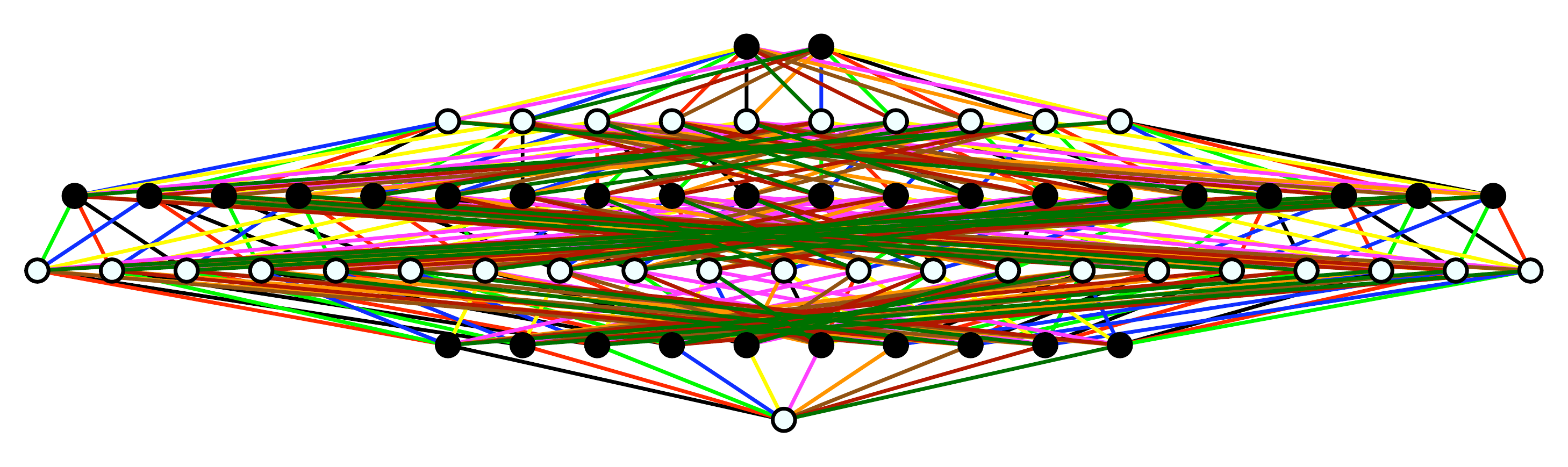}}
\end{align}
Since these are one-hooked, they cannot be isomorphic: any isomorphism would have to take the node of lowest engineering dimension of the $E_8\times I^2$ Adinkra on the left to the corresponding lowest node of the $D_{10}$ Adinkra on the right.  Then on the left, $Q_1Q_3Q_5Q_7$ takes us back to the lowest node, while on the right, it does not.  Therefore, these cannot be isomorphic.

In fact, the distinction is obvious even by just considering the number of vertices of each engineering dimension: it differs between these examples. The $E_8\times I^2$ supermultiplet is $(1|10|24|22|7)$-dimensional, whereas the $D_{10}$ one is $(1|10|21|20|10|2)$-dimensional.  Thus, these cannot be isomorphic.
\end{example}

\begin{example}
Similarly, we one-hook the $N=16$ cases $E_8\times E_8$ and $E_{16}$ in the same buoyant fashion:
\begin{align}
 E_8\times E_8&:\vC{\includegraphics[width=140mm]{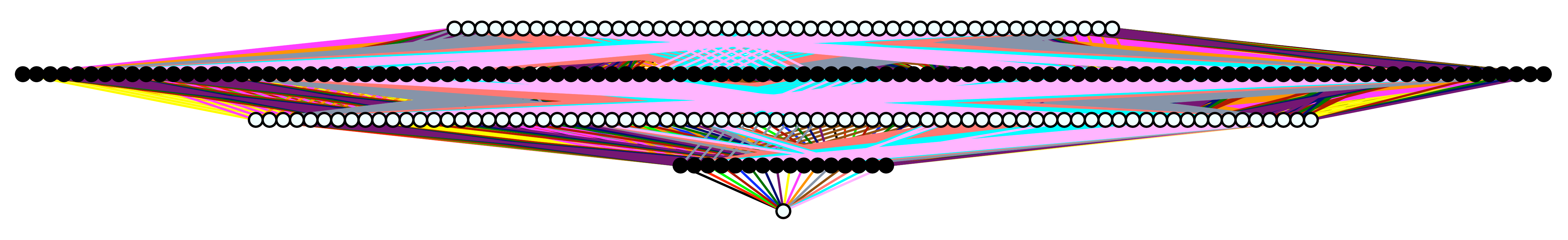}}\\
 E_{16}       &:\vC{\includegraphics[width=140mm]{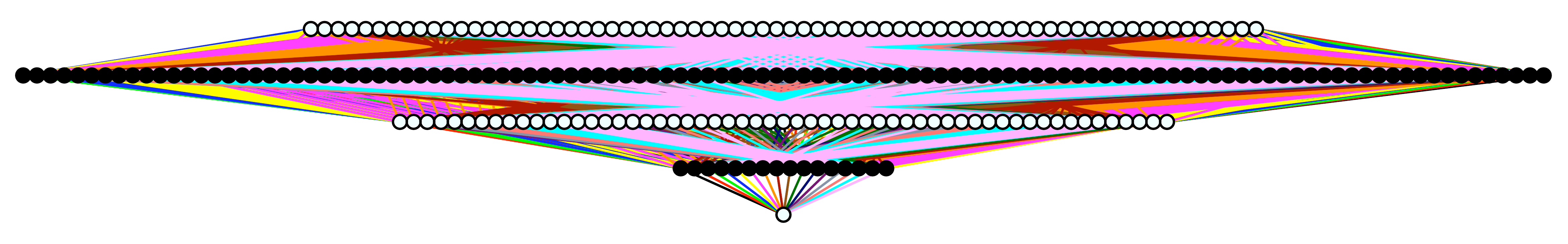}}
\end{align}
Again, besides the general argument, the simple degree-of-freedom counting shows that these cannot be isomorphic: The $E_8\times E_8$ supermultiplet is $(1|16|78|112|49)$-dimensional, whereas the $E_{16}$ one is $(1|16|57|112|70)$-dimensional. In turn, the valise supermultiplets corresponding to distinct valise Adinkras with these topologies are in fact isomorphic.
\end{example}

\subsection{Node Choice Symmetries}
\label{s:ncs}
The use of Construction~\ref{const:liq} to obtain an isomorphism between two non-valise supermultiplets is obstructed by the fact that it is no longer possible to make arbitrary linear combinations of component fields of the same statistics: Now that some of the fields have different engineering dimensions, it will be necessary to either insert time derivatives (in which case the constant term of a field is lost) or integrals (in which case the field redefinition is non-local).

More generally, this provides a criterion by which we can determine whether two Adinkras are depicting the same supermultiplet.  We can look for field redefinitions of the type described in Construction~\ref{const:liq} that are suggested by the two codes, and see if the resulting engineering dimensions agree.  If they do not agree and inverse powers of $\ddt$ are needed, then such a field redefinition is non-local and so not allowed.  If they do agree, then Construction~\ref{const:liq} provides a field redefinition that relates the two supermultiplets to reveal that they are isomorphic, provided that the dashing of edges also agrees.

Suppose a connected Adinkra is given, and pick a starting node, which we call $v_0$.  If we wish to find other Adinkras describing the same supermultiplet, such Adinkras are obtained by taking a starting node $v_1$ for the new Adinkra as a linear combination of the nodes of the original Adinkra.  As just discussed, in order to avoid losing degrees of freedom, the nodes involved in this linear combination must all be of the same engineering dimension.  We successively apply the various $Q_I$ to $v_1$ on the left, taking integrals when the resulting field is a derivative of a field in the supermultiplet.  In so doing, by the linearity of $Q_I$, we can study the individual component nodes of $v_1$ and apply $Q_I$ to each.  Thus, we need to determine, if a certain sequence of $Q_I$ sends the original starting node $v_0$ to $v_1$, how the other nodes are permuted.

Suppose $v_0$ and $v_1$ are connected by an edge of color $I$.  The resulting Adinkra taking $v_1$ as starting node will swap all vertices that are connected by an edge of color $I$.  More generally, assuming the Adinkra is connected, if we make any other vertex in the Adinkra the starting node, then there is a path of edges connecting $v_0$ to that vertex.  And the result involves composing the permutations obtained by considering starting node changes for each edge in the path.

In Ref.\cite{r6-3} we defined a family of maps $q_1,\ldots,q_N$ on the set of vertices that correspond to the $Q_1,\ldots,Q_N$, so that if $v$ and $w$ are vertices connected by an edge of color $I$, then $q_I(v)=w$ and $q_I(w)=v$.  When the Adinkra has the chromotopology of the $N$-cube $[0,1]^N$, $q_I$ is the reflection along the $I^\text{th}$ axis, \ie, across the hyperplane at $x_I=\inv2$.  The set of $q_I$ generates a group, and for all $I$, $q_I^2$ is the identity and the $q_I$ commute with one another.

For each $\vec{x}=(x_1,\ldots,x_N)\in(\ZZ_2)^N$, we define $q_{\vec{x}}=q_1^{x_1}\circ\cdots\circ q_N^{x_N}$.  In Ref.\cite{r6-3}, we defined a code $C$ to be the set of $\vec{x}$ so that $q_{\vec{x}}(v)=v$ for all vertices $v$, and saw that the chromotopology of an Adinkra was $I^N/C$. We are now in the position to specify:

\begin{defn}\label{d:NCG}
Given an Adinkra of the chromotopology $I^N/C$, where $[v]$ denotes the engineering dimension of the component field corresponding to the vertex $v$, and given the group $(\ZZ_2)^N$ of reflections $q_I:x_I\to(1{-}x_I)$ and writing $q_{\vec{x}}=q_1^{x_1}\circ\cdots\circ q_N^{x_N}$, the {\em\bfseries\/node choice group (NCG)\/} $H\subset(\ZZ_2)^N$ consists of all elements such that $[q_{\vec{x}}(v)]=[v]$. Such (nontrivial) elements are called {\em\/node choice symmetries\/}.
\end{defn}
Clearly, $C$ is a (doubly even) subgroup of $H$.

Since no boson can have the same engineering dimension as a fermion (for a connected Adinkra), node choice symmetries must have even weight.  Thus, the node choice group must be an even code.  For instance, a one-hooked Adinkra with the topology $I^N/C$ has $H=C$, and a valise Adinkra with that same topology has $H$ equal to the set of all even-weight codewords in $(\ZZ_2)^N$. Of course, the action of $H=C$ on the one-hooked $I^N/C$ Adinkra is trivial, and the action of $H\subset(\ZZ_2)^N$ on an $I^N/C$ valise Adinkra is equivalent to that of a freely acting $(\ZZ_2)^{N-k-1}$ group, where $k=\dim(C)$. Raising/lowering nodes has a very nontrivial effect on the node choice group, and depends not only on the number of nodes at the various levels of engineering dimension, but also on their connectedness. For example, the following three Adinkras all have the same number of nodes at various levels:
\begin{equation}
  \vC{\includegraphics[width=1in]{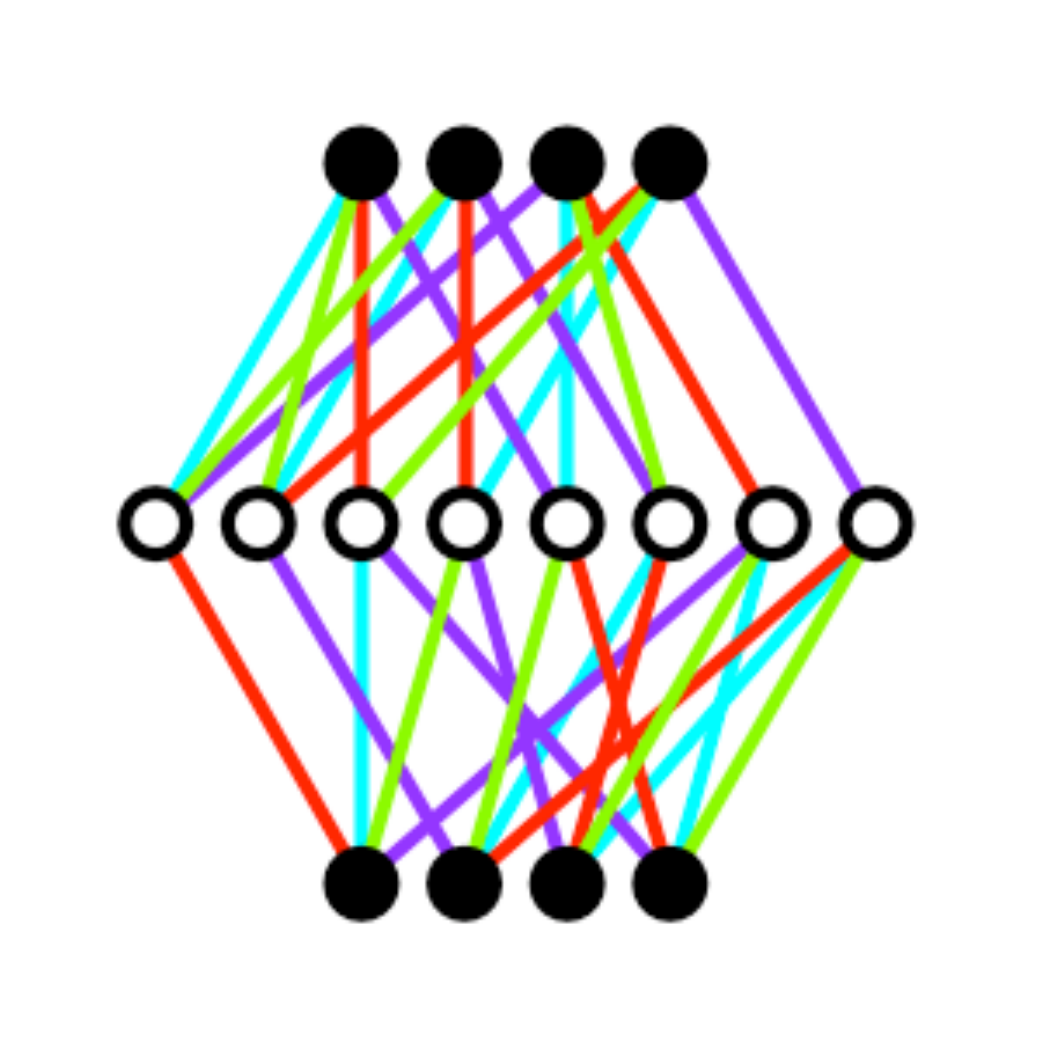}}
   \qquad
  \vC{\includegraphics[width=1in]{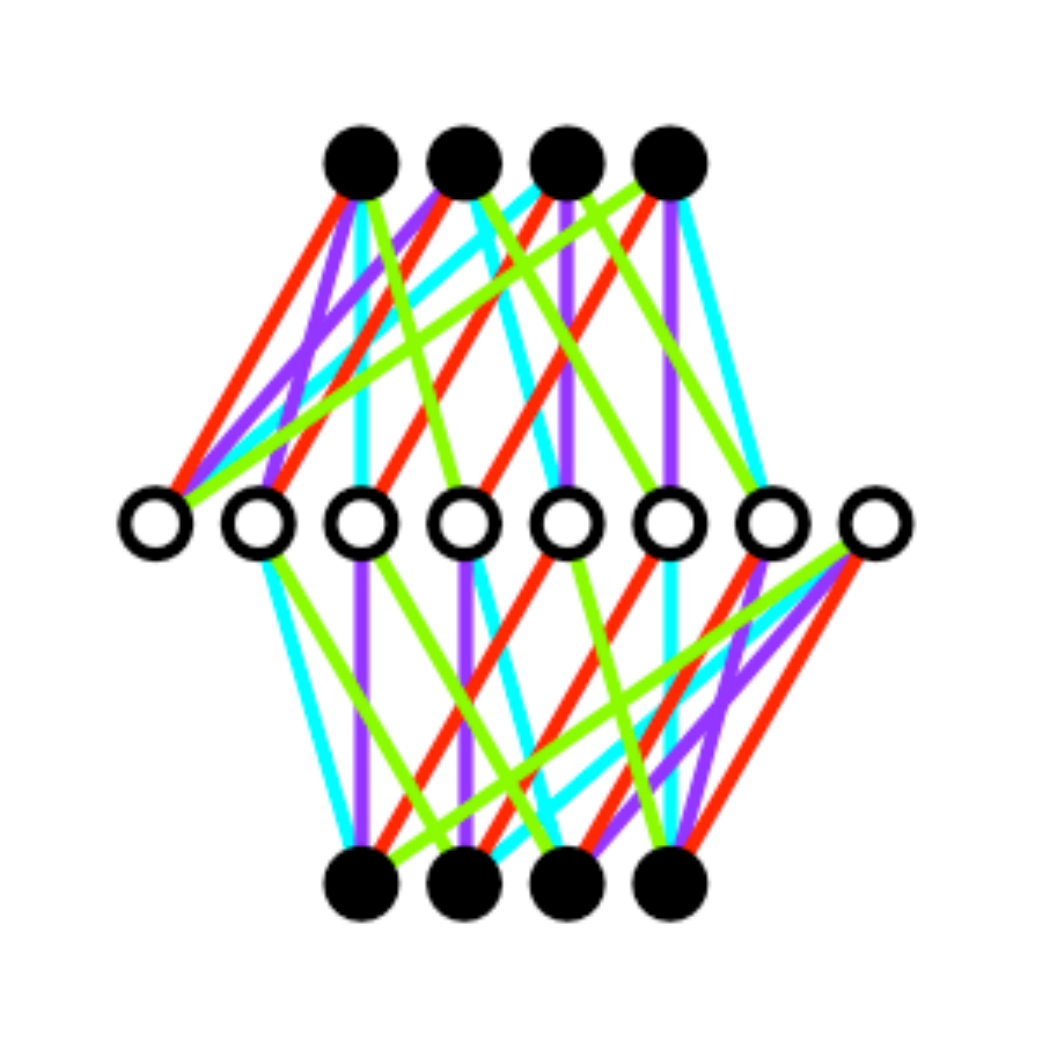}}
   \qquad
  \vC{\includegraphics[width=1in]{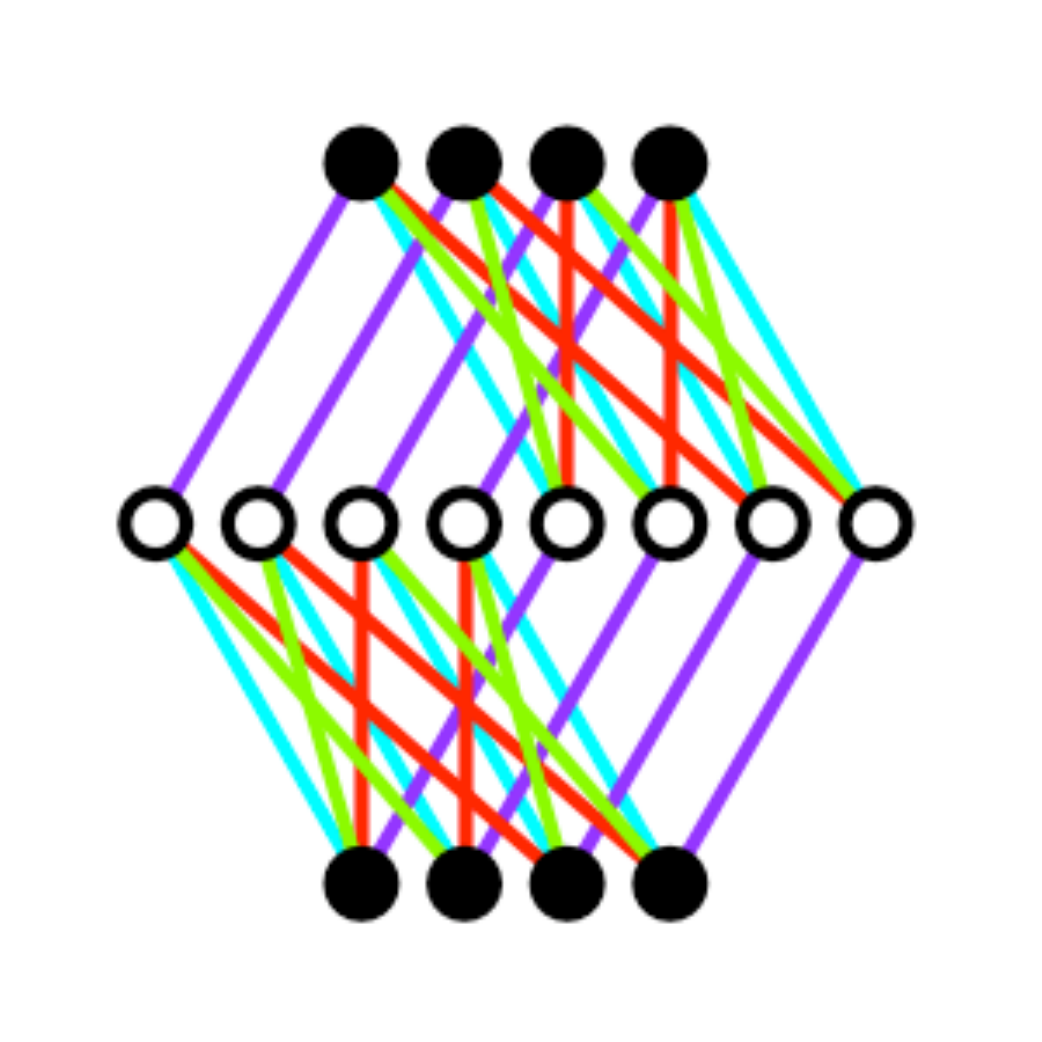}}
\end{equation}
but only the last one has a nontrivial node choice symmetry, as the; see Table~\ref{ncstable}. As with Feynman diagrams, the appearance of the Adinkras seems highly suggestive of the symmetry properties of the supermultiplets that they depict, even on a first glance.

\subsection{Removing the Topological Ambiguity}
Now we are ready to explain how we can determine when an Adinkra has an alternate description using a different code.  Suppose we have an Adinkra with the chromotopology given by $I^N/C_1$, where $C_1$ is an $[N,k]$ doubly even code.  If the node choice group is $H$, then $C_1\subset H$.  If there is a larger doubly even code $C_2$ with $C_1\subsetneq C_2\subset H$, then we can use Construction~\ref{const:split} to decompose the supermultiplet into a direct sum of smaller supermultiplets.  So it suffices to consider maximal doubly even codes inside $H$.  By Theorem~\ref{thm:maxcode} in the Appendix, all maximal doubly even codes inside $H$ have the same dimension.  We can choose any one of these, and they will be equivalent, by Construction~\ref{const:liq}.  Conversely:
\begin{corl}
If $C_3$ is any doubly even code not contained in $H$, then no field redefinitions can change the Adinkra with the chromotopology given by  $I^N/C_2$ into an Adinkra with the chromotopology given by $I^N/C_3$.
\end{corl}

One approach to the classification of Adinkras, then, would be for any $N$ and $k$, to find the even codes and the maximal doubly even codes that they contain.  Then quotient the cube by the chosen node choice group, and find all the ways to hang it.  This object is not an Adinkra, but it specifies how to hang the original $N$-cube at those heights. Now quotient this by the doubly even codes to obtain the actual quotients. Recall that as long as these doubly even codes are maximal in the node choice group, these quotients are all equivalent, so only one is needed. Finally, select a choice of dashing of the edges.

For the valise Adinkras, the node choice group is the set of all even codewords: as intuitively clear, in valise supermultiplets, all component fields have the same engineering dimension so that they all can mix. The valise supermultiplets clearly have the maximal node choice symmetry. So, to choose a maximal doubly even code in the node choice group is simply to choose a maximal doubly even code.  For one-hooked Adinkras, the node choice group is equal to the given doubly even code, so the maximal doubly even code contained in it is the original doubly even code, and there is no possible field redefinition. Tables~\ref{redtable} and~\ref{ncstable} provide also all the ``intermediate'' supermultiplets.

\section{Conclusions}
 \label{s:C}
Let us summarize where this leaves us concerning the classification of Adinkras, and the classification of $D=1$ $N$-extended supermultiplets.

An Adinkra is determined by its chromotopology, a hanging of the vertices, and a choice of which edges are dashed.  The results of this paper and of Ref.~\cite{r6-3} show that chromotopologies are equivalent to doubly-even codes.  In fact, this paper presents a method (construction~\ref{const:quotient}) to turn a doubly-even code into a chromotopology. Ref.~\cite{r6-1} then describes how to determine the various ways of hanging the vertices of the Adinkra.  The question of how to choose which edges are dashed turns out to involve cohomology and will be discussed in a separate effort.

However, this does not yet classify $D=1$ $N$-extended supermultiplets for several reasons: first, not all supermultiplets have an adinkraic description.  Second, some of these Adinkras may describe the same supermultiplet.  This second issue can be addressed through the node choice group.

The approach, then, should be modified by first choosing an even code $H$ to be a node choice group.  To determine irreducible supermultiplets, we identify a doubly even subcode $C\subset H$ of maximal dimension.  Various possible maximal doubly even subcodes of $H$ will all result in the same supermultiplet, by the observations of the previous section, so only one such choice need be considered.

We can quotient $I^N$ by $H$, and the various ways of hanging the resulting diagram will correspond to the various ways of hanging the $I^N/C$ Adinkra so that $H$ is the node choice group.  Choosing which edges are dashed would then determine an Adinkra for a supermultiplet.

So, adinkrizable supermultiplets are thereby classified up to choosing which edges are dashed.

Tables~\ref{redtable} and~\ref{ncstable} show the resulting classification for $N \leq 4$, and Table~\ref{N_5_table} shows the analogous numbers (but not the Adinkras since there are too many) for $N = 5$. Note that the node choice group and the doubly even kernel, whose generator matrices are listed, are listed uniquely up to permutation equivalence of binary linear codes, and that the Adinkras are listed uniquely up to graph theoretic isomorphisms which preserve heights and permute edge colors (and swapping bosons for fermions).

Finally, to each tabulated Adinkra and supermultiplet, there corresponds another, differing only in that the black/white node-coloring is swapped, \ie, with fermions and bosons exchanged. To save space, we have not tabulated these Klein-flipped Adinkras. Also, ``NCG'' denotes the node choice group, denoted $H$ in the text, and ``DE'' denotes doubly even, in these tables referring to the maximal doubly even subcode $C\subset H$.

\begin{table}[ht]
  \centering
  \begin{tabular}{p{10mm}|>{\centering\baselineskip=10pt\arraybackslash}p{25mm}
                         |>{\centering\baselineskip=10pt\arraybackslash}p{25mm}
                         |>{\centering\baselineskip=10pt\arraybackslash}p{30mm}
                         |c}
 \boldmath$N$ & \bf\boldmath NCG~$H$ Generators
              & \bf\boldmath DE~$C\subset H$ Generators
              & \bf Decomposable Adinkra$^\dagger$
              & \bf Indecomposable Components$^*$ \\[2mm]
    \hline\hline
    4 & $\Cd{1&1&1&1}$ & $\Cd{\text{|}}$
      & \includegraphics[width=1in]{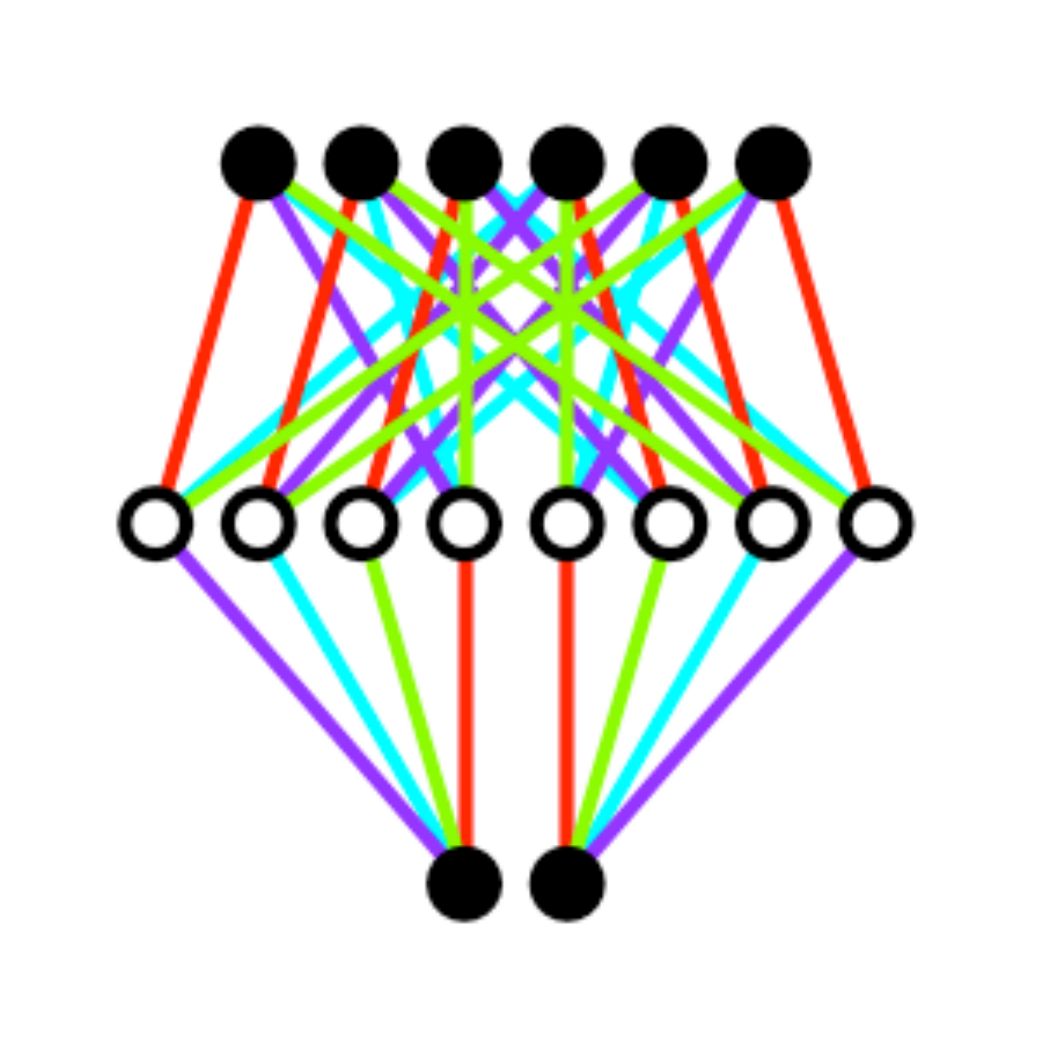}
      & \includegraphics[width=1in]{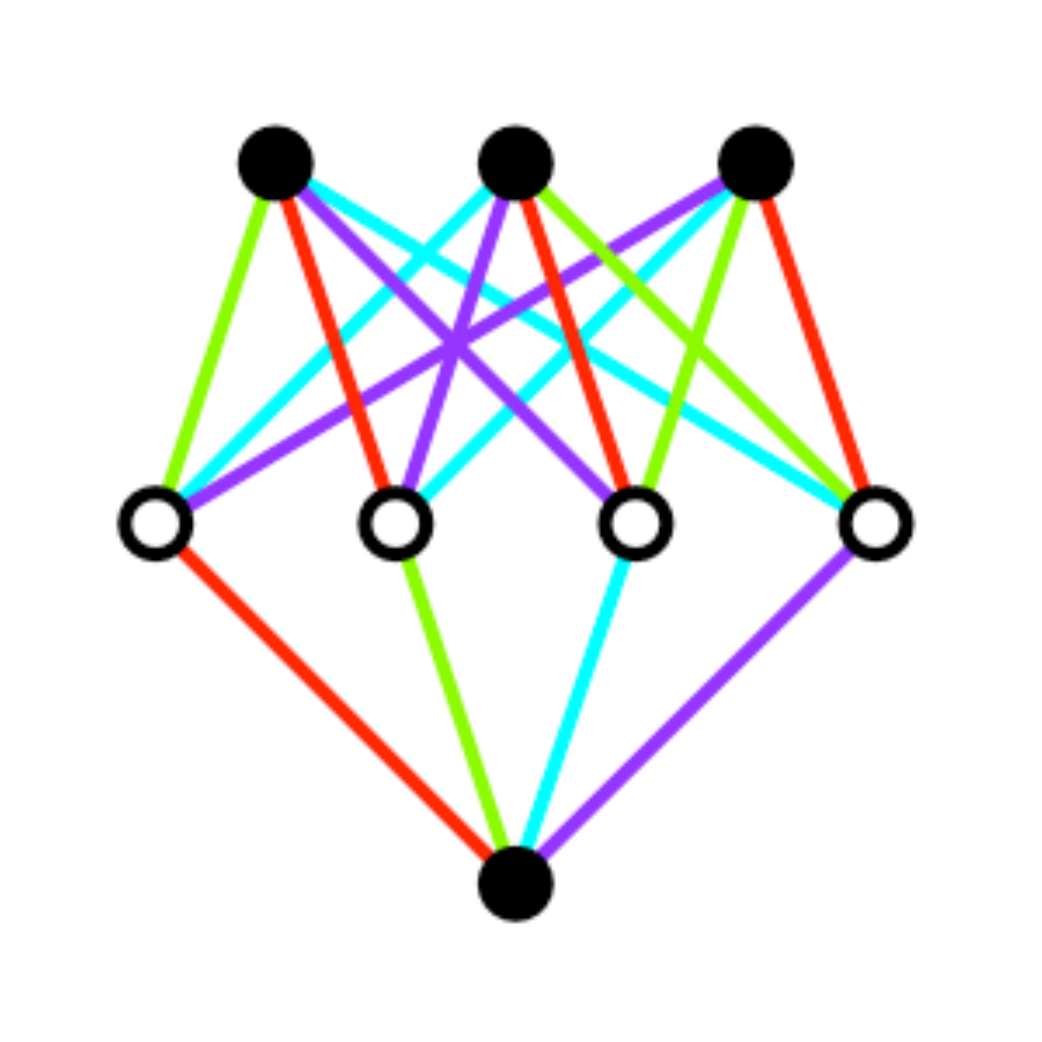}
        \includegraphics[width=1in]{Pix/A_4_5.pdf} \\ 
      &  & 
      & \includegraphics[width=1in]{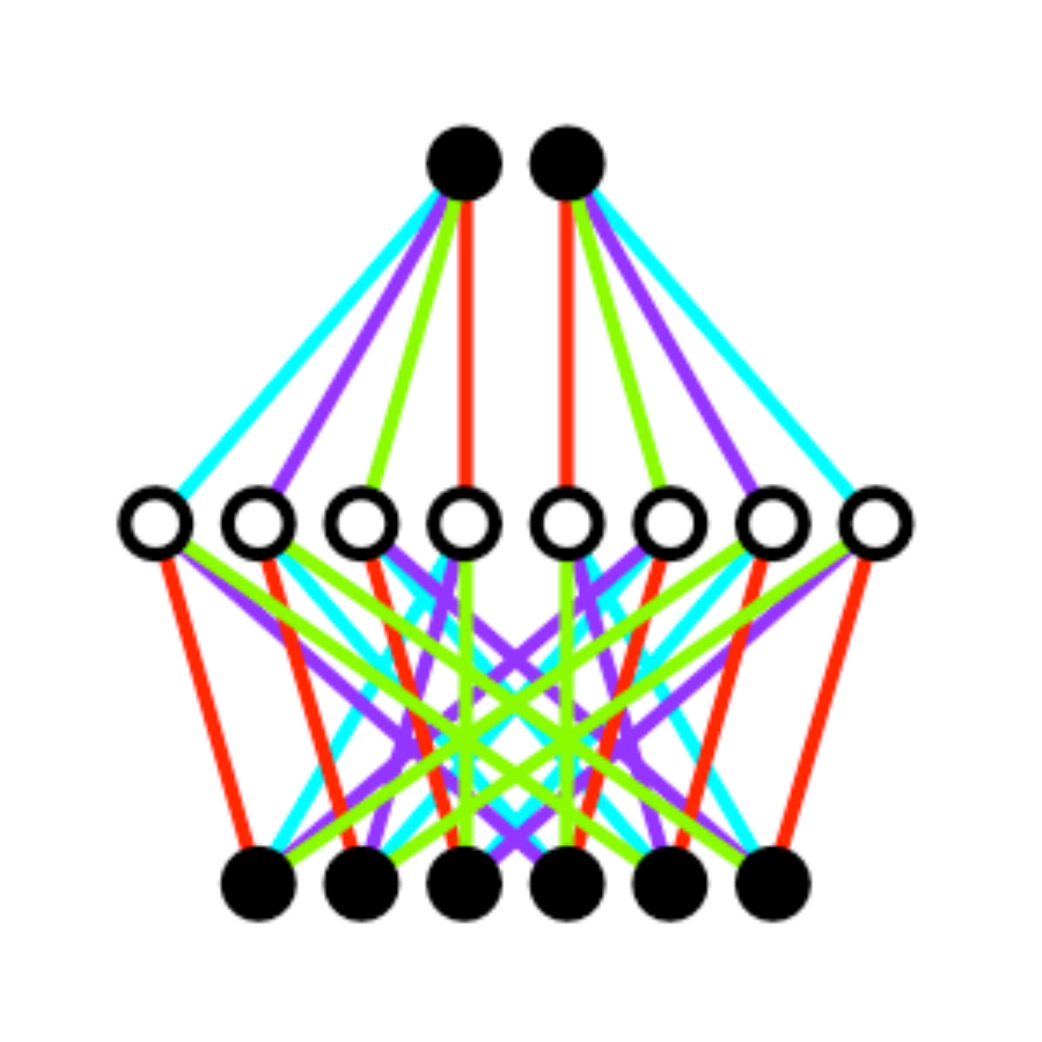}
      & \includegraphics[width=1in]{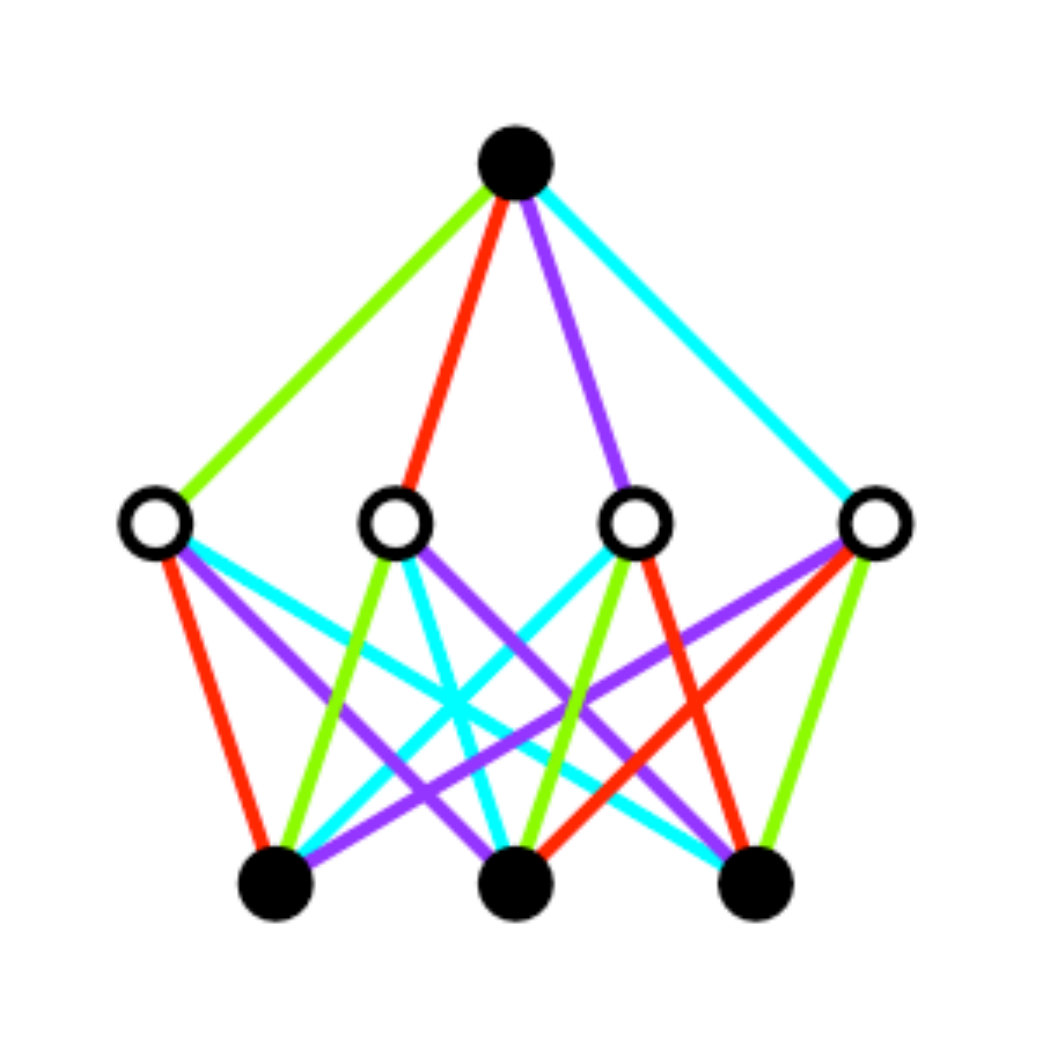}
        \includegraphics[width=1in]{Pix/A_4_6.pdf} \\[1mm]
    \cline{2-5}
      & \raisebox{10mm}[1.05\height]{$\Cd{1&1&1&1}$}
      & \raisebox{10mm}[1.05\height]{$\Cd{\text{|}}$}
      & \raisebox{-1mm}[1.05\height]{\includegraphics[width=1in]{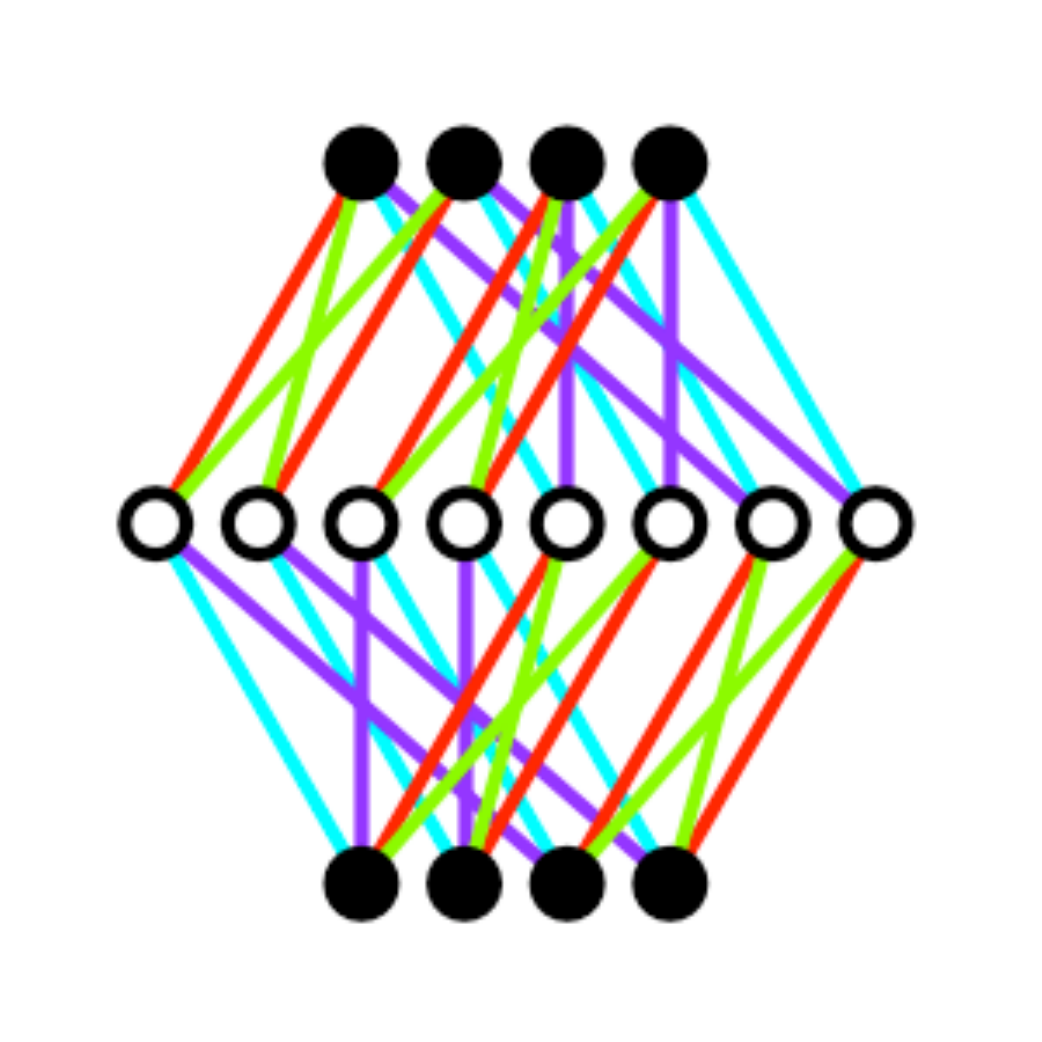}}
      & \raisebox{-1mm}[1.05\height]{\includegraphics[width=1in]{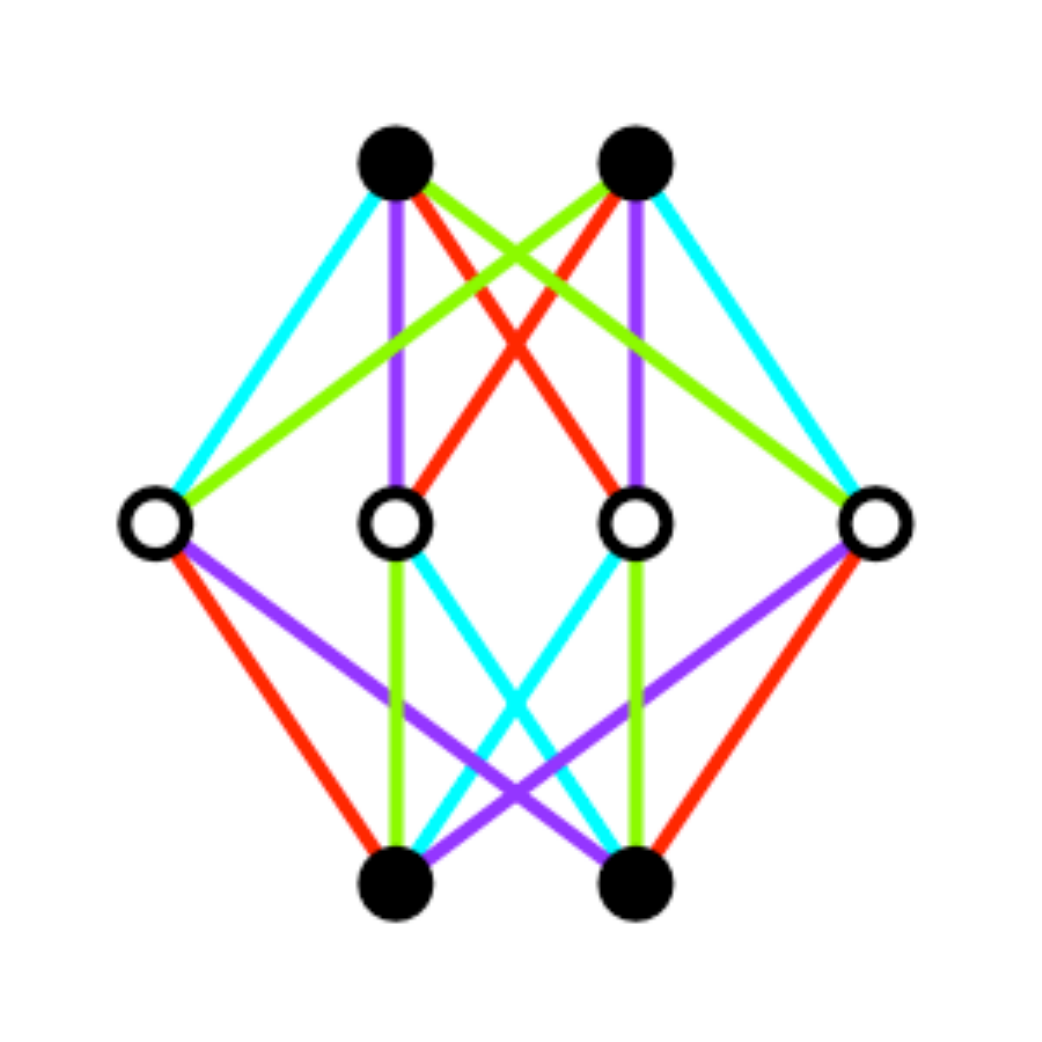}}
        \raisebox{-1mm}[1.05\height]{\includegraphics[width=1in]{Pix/A_4_3.pdf}} \\ 
    \cline{2-5}
   & \raisebox{10mm}[1.05\height]{$\Cd{1&1&0&0\\[1pt]1&0&1&0\\[1pt]1&1&1&1}$}
   & \raisebox{10mm}[1.05\height]{$\Cd{\text{|}}$}
   & \raisebox{-1mm}[1.05\height]{\includegraphics[width=1in]{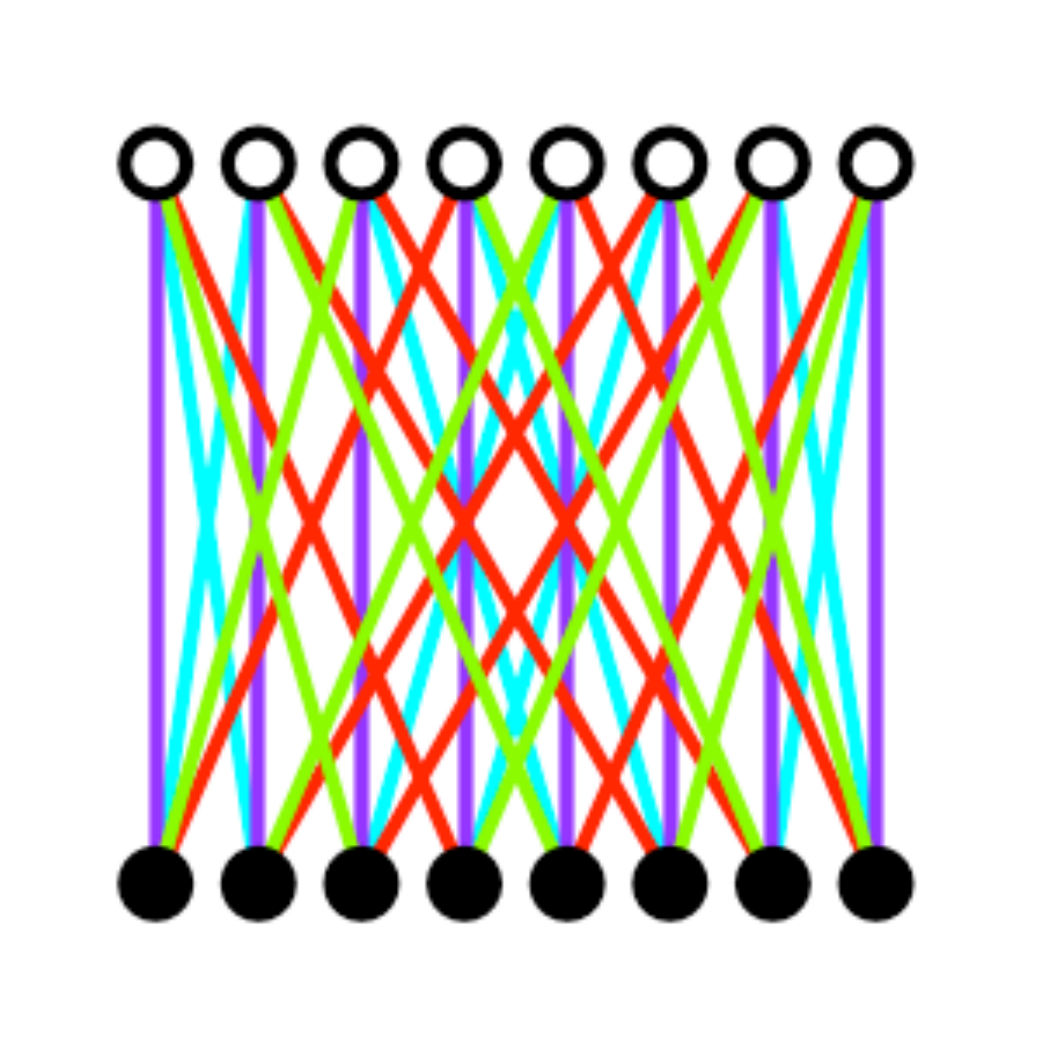}}
   & \raisebox{-1mm}[1.05\height]{\includegraphics[width=1in]{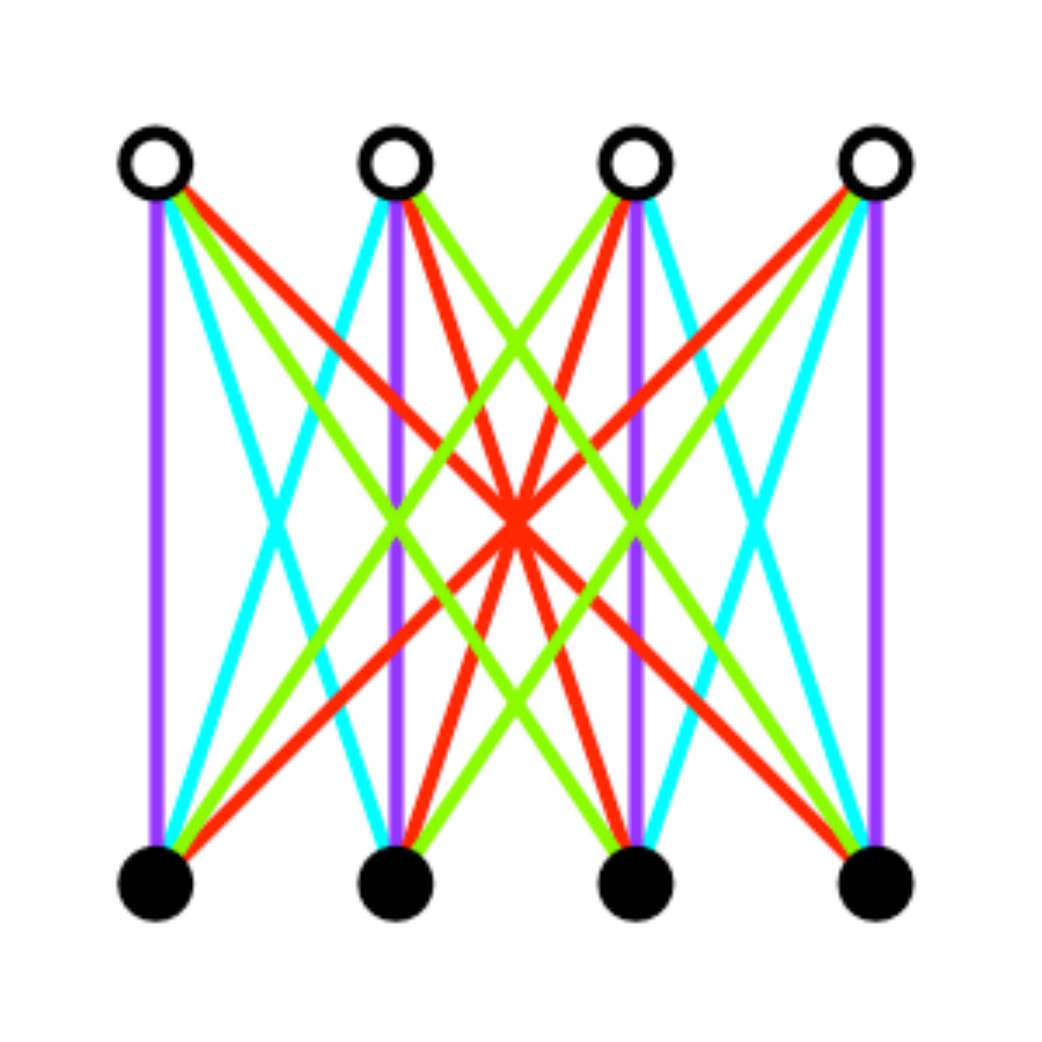}}
     \raisebox{-1mm}[1.05\height]{\includegraphics[width=1in]{Pix/A_4_1.pdf}} \\
    \hline \noalign{\vglue2mm}
 \multicolumn{5}{l}{\parbox{\linewidth}{\footnotesize $^*$ Here, the two components in each row appear identical, but in fact admit two inequivalent choices of edge-dashing. In general, this determination involves cohomology and will be discussed separately.}}\\\noalign{\vglue2mm}
  \end{tabular}
  \caption{Decomposable Adinkraic Supermultiplets for $N \leq 4$}
  \label{redtable}
\end{table}

\begin{longtable}{p{.03\linewidth}|>{\centering\arraybackslash}m{.15\linewidth}|>{\centering\arraybackslash}m{.15\linewidth}|>{\arraybackslash} m{.50\linewidth}}
 \boldmath$N$ & \baselineskip=10pt \bf\boldmath NCG:~$H$ Generators
              & \baselineskip=10pt \bf\boldmath DE~$C\subset H$ Generators
              & \bf Adinkras Depicting the Supermultiplet \endhead
 \hline\hline
 1 & $\Cd{\text{|}}$ & $\Cd{\text{|}}$
   & \raisebox{-1mm}[1.02\height]{\includegraphics[width=1in]{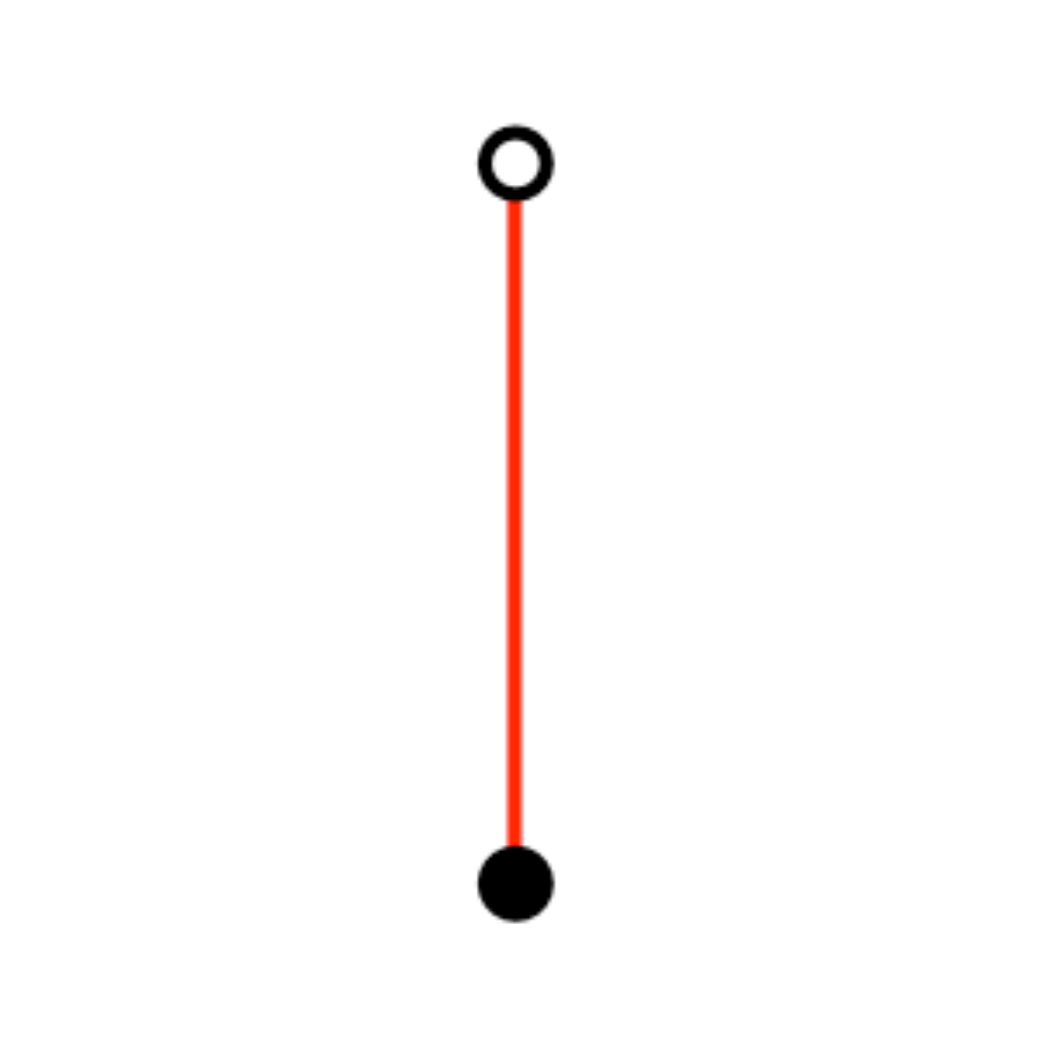}} \\
 \hline
 2 & $\Cd{\text{|}}$ & $\Cd{\text{|}}$
   & \raisebox{-1mm}[1.02\height]{\includegraphics[width=1in]{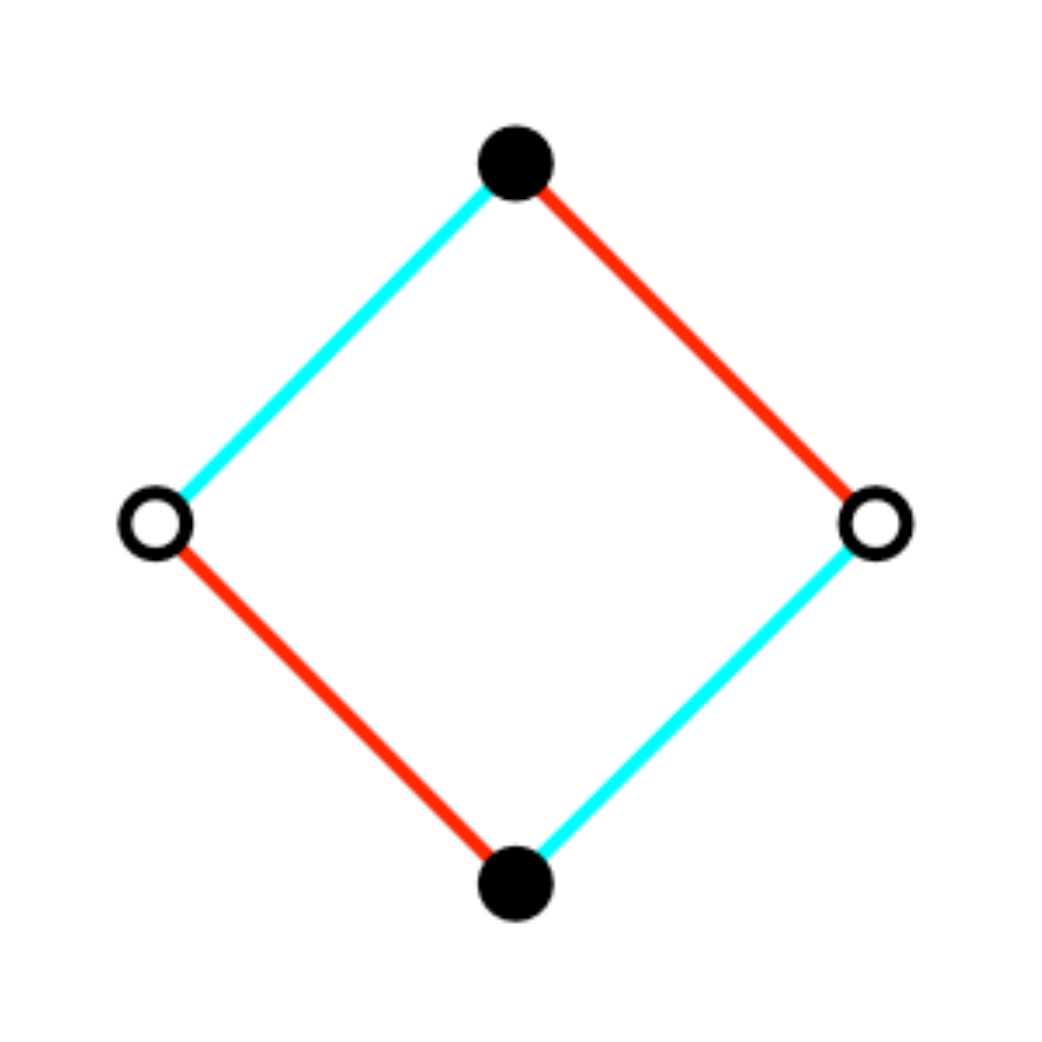}} \\
 \cline{2-4}
   & $\Cd{1&1}$ & $\Cd{\text{|}}$
   & \raisebox{-1mm}[1.02\height]{\includegraphics[width=1in]{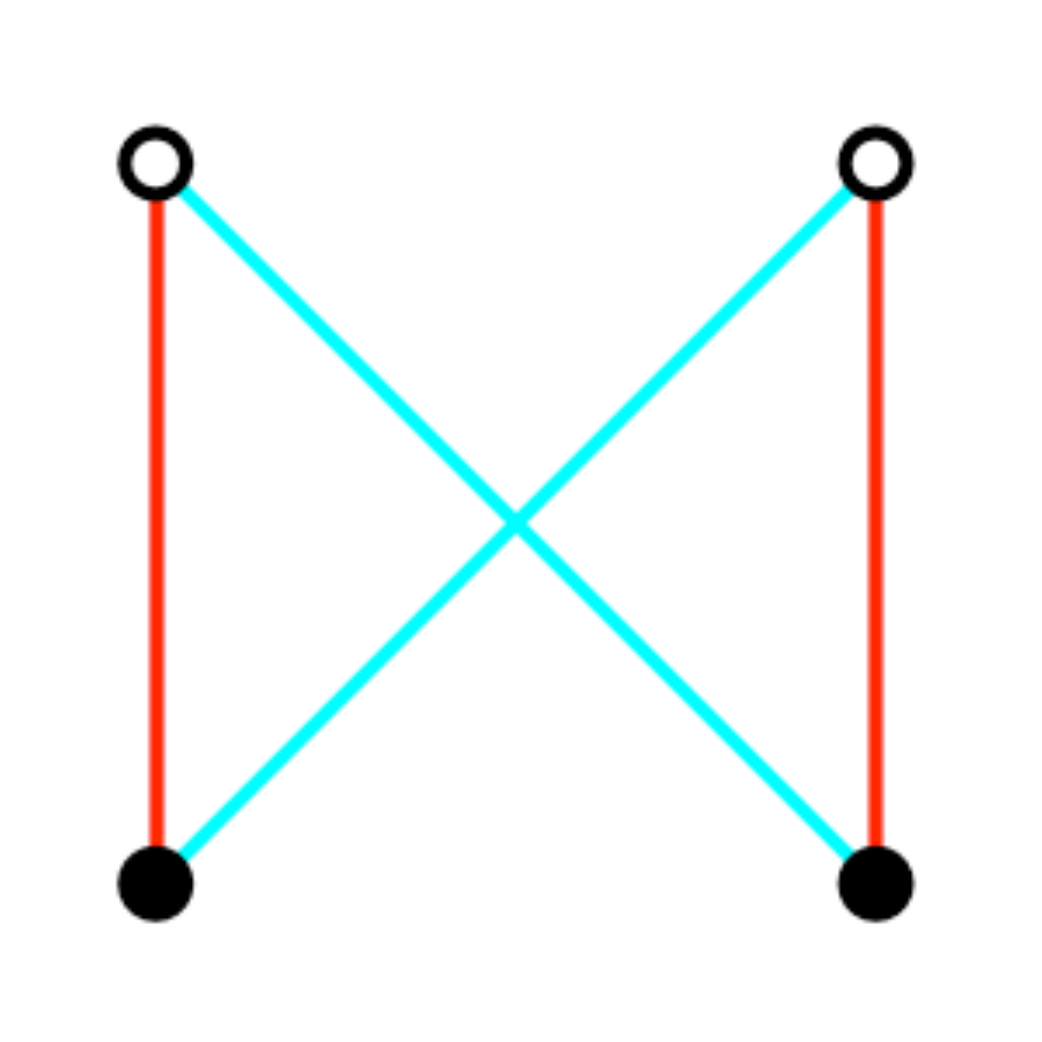}} \\
 \hline
 3 & $\Cd{\text{|}}$ & $\Cd{\text{|}}$
   & \raisebox{-1mm}[1.02\height]{\includegraphics[width=1in]{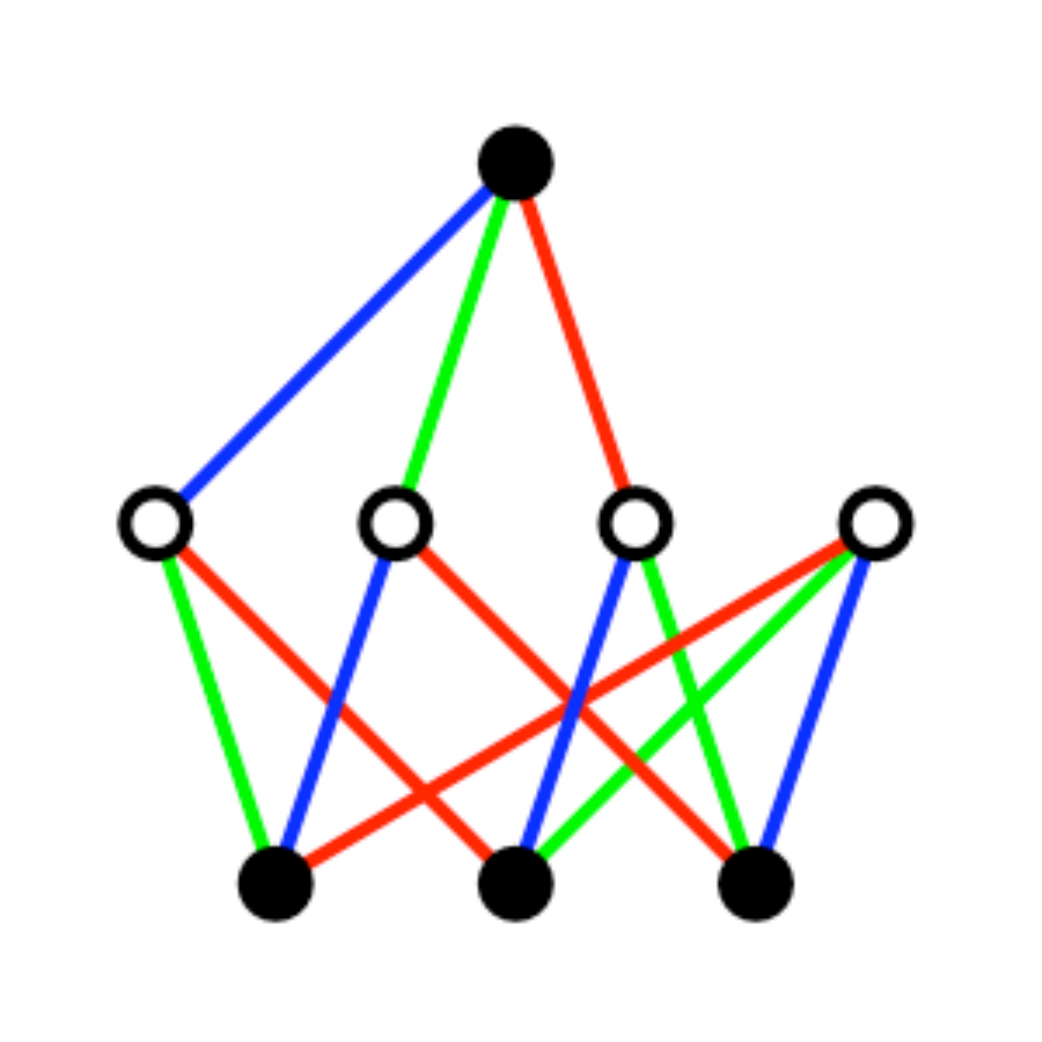}}
     \raisebox{-1mm}[1.02\height]{\includegraphics[width=1in]{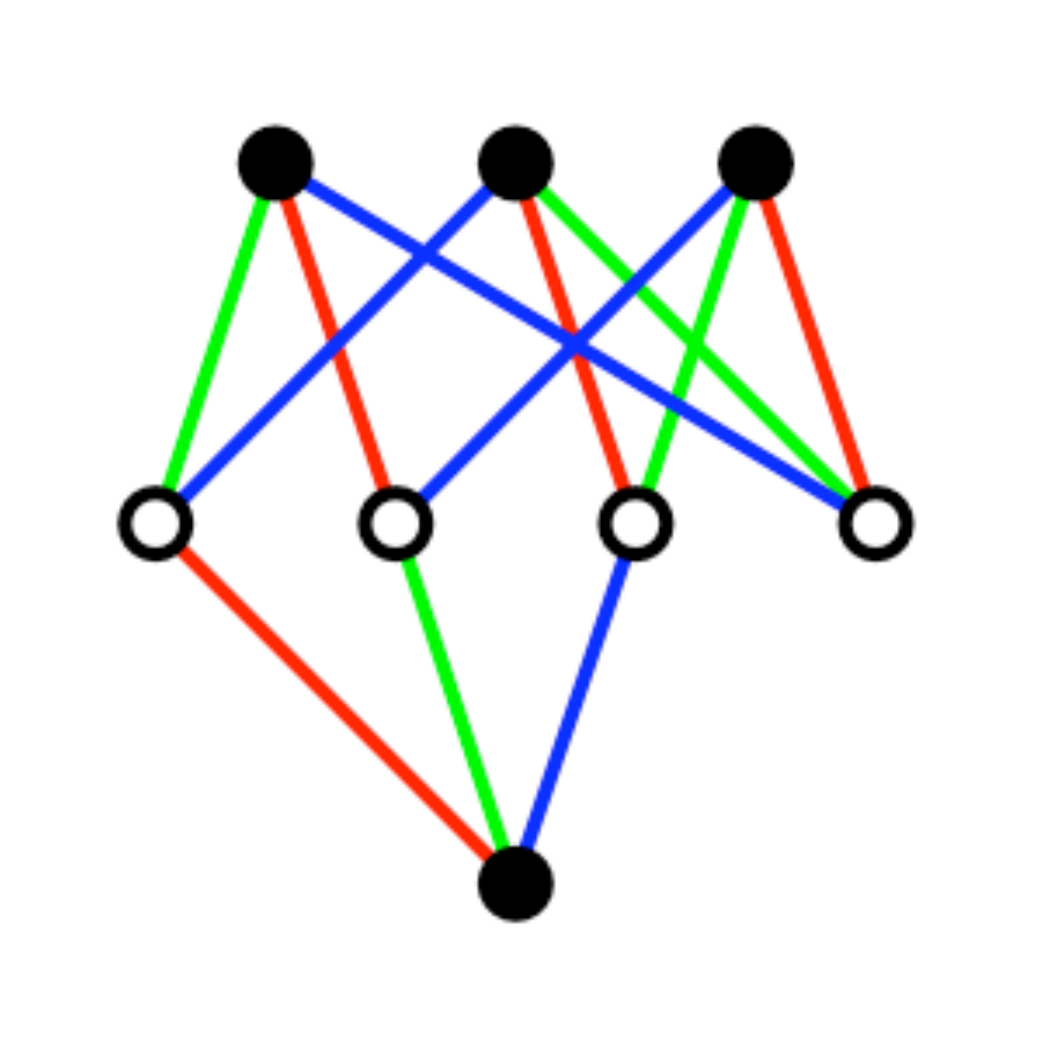}}
     \raisebox{-1mm}[1.02\height]{\includegraphics[width=1in]{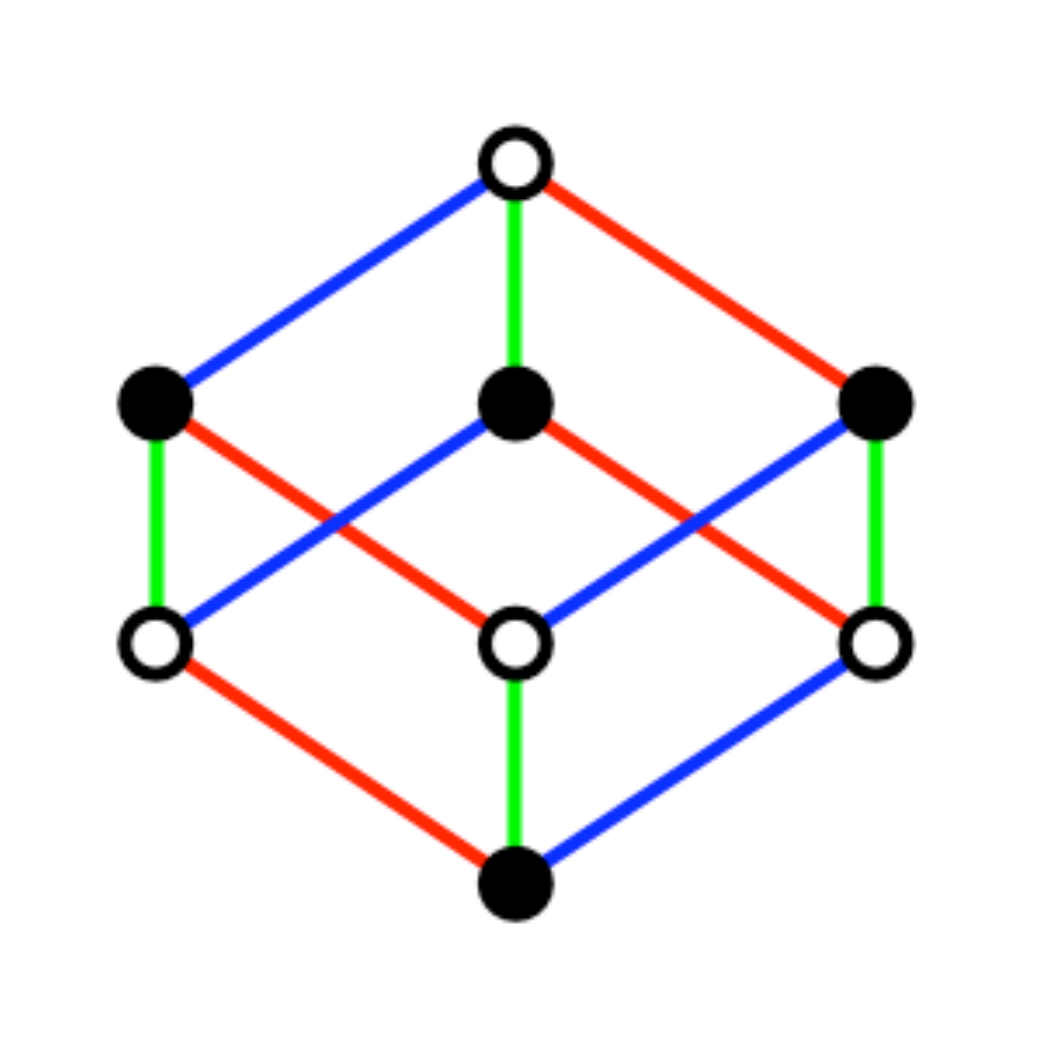}} \\
 \cline{2-4}
   & $\Cd{1&1&0}$ & $\Cd{\text{|}}$
   & \raisebox{-1mm}[1.02\height]{\includegraphics[width=1in]{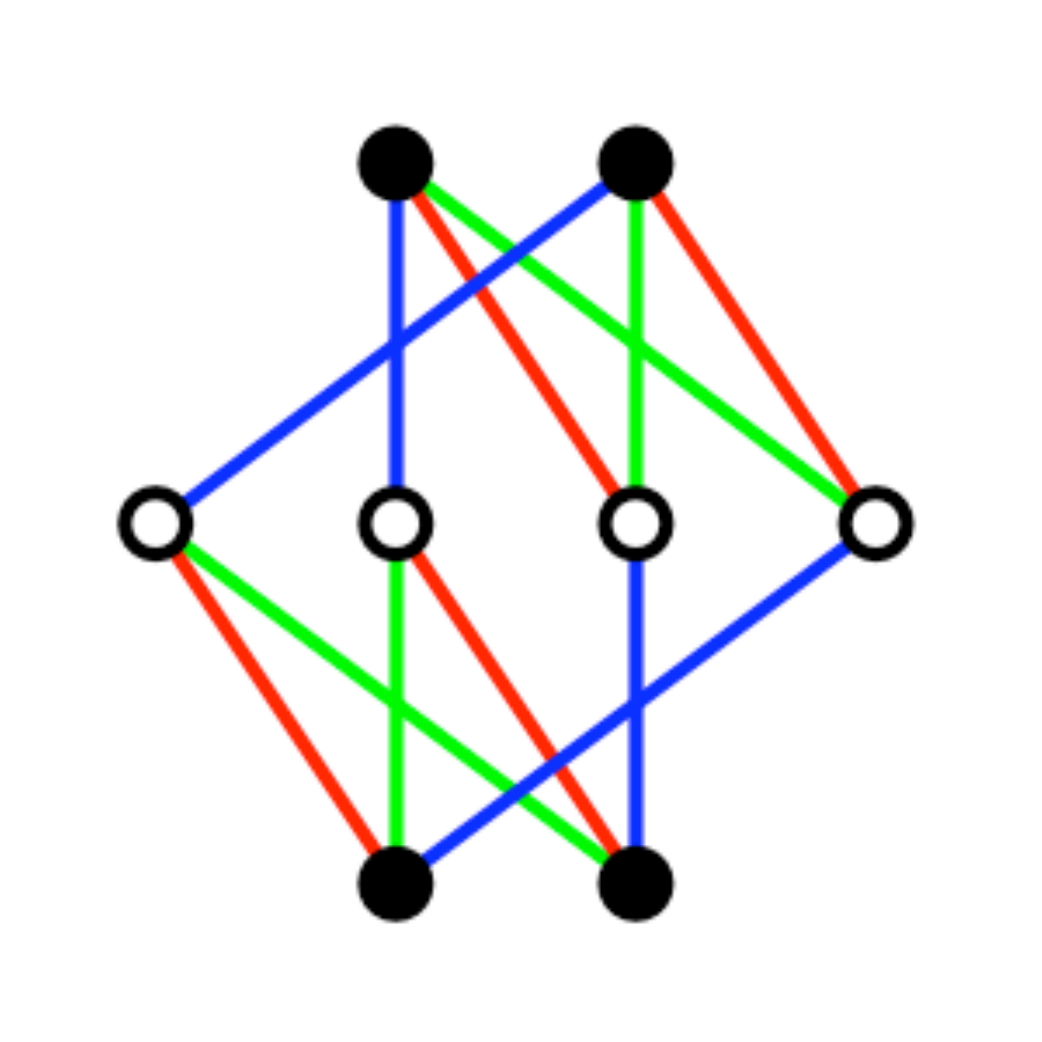}} \\
 \cline{2-4}
   & $\Cd{1&1&0\\[1pt]1&0&1}$ & $\Cd{\text{|}}$
   & \raisebox{-1mm}[1.02\height]{\includegraphics[width=1in]{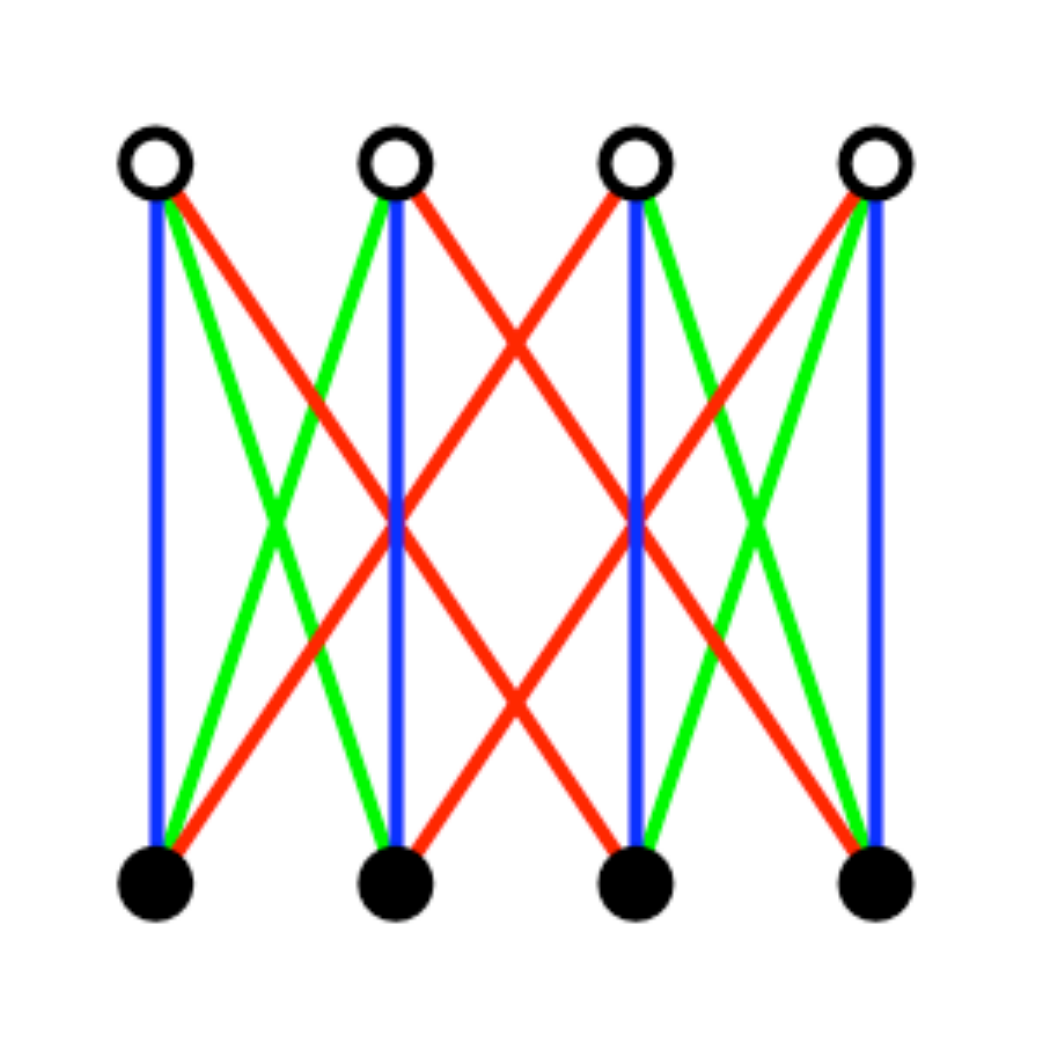}} \\
 \hline
 \multirow{6}{*}{4} & \multirow{6}{*}{$\Cd{\text{|}}$} & \multirow{6}{*}{$\Cd{\text{|}}$}
   & \raisebox{-1mm}[1.02\height]{\includegraphics[width=1in]{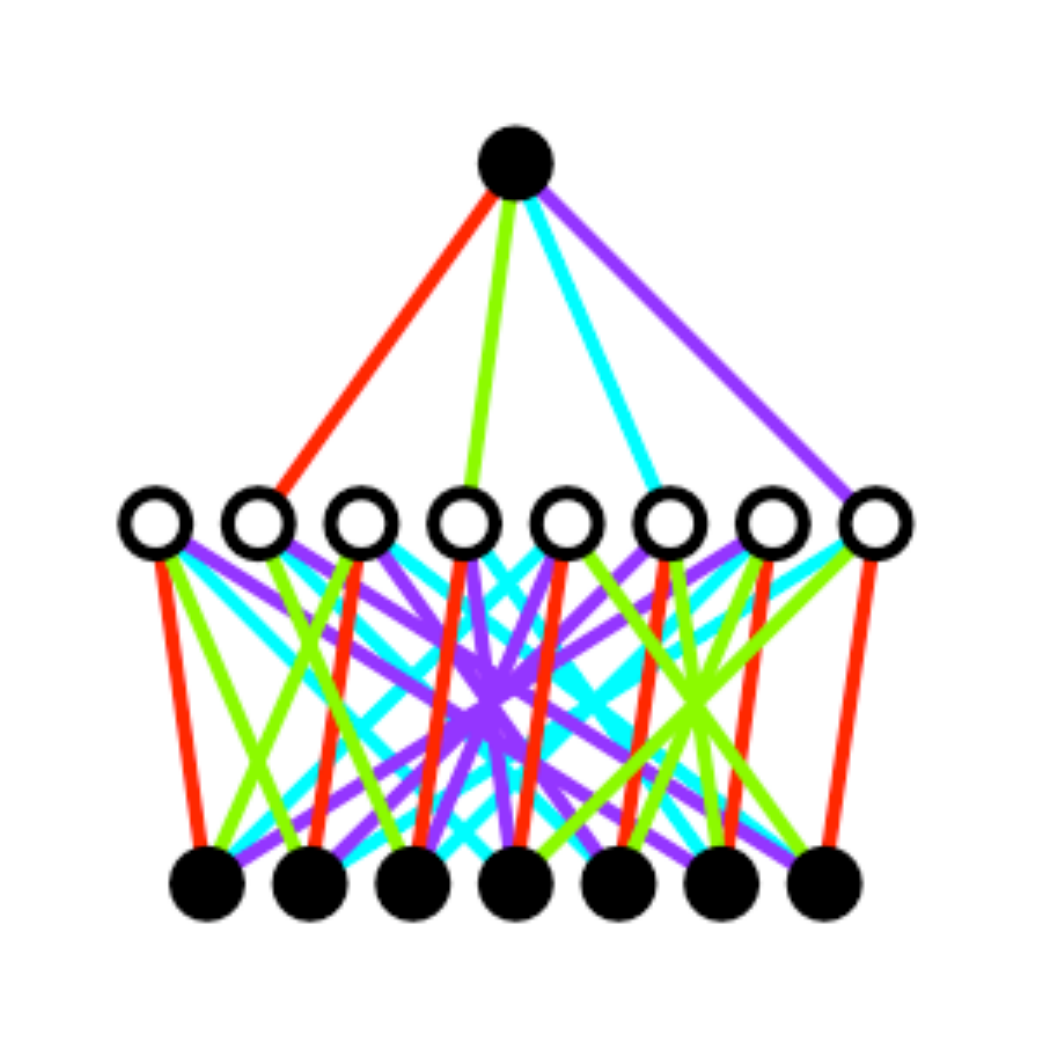}}
     \raisebox{-1mm}[1.02\height]{\includegraphics[width=1in]{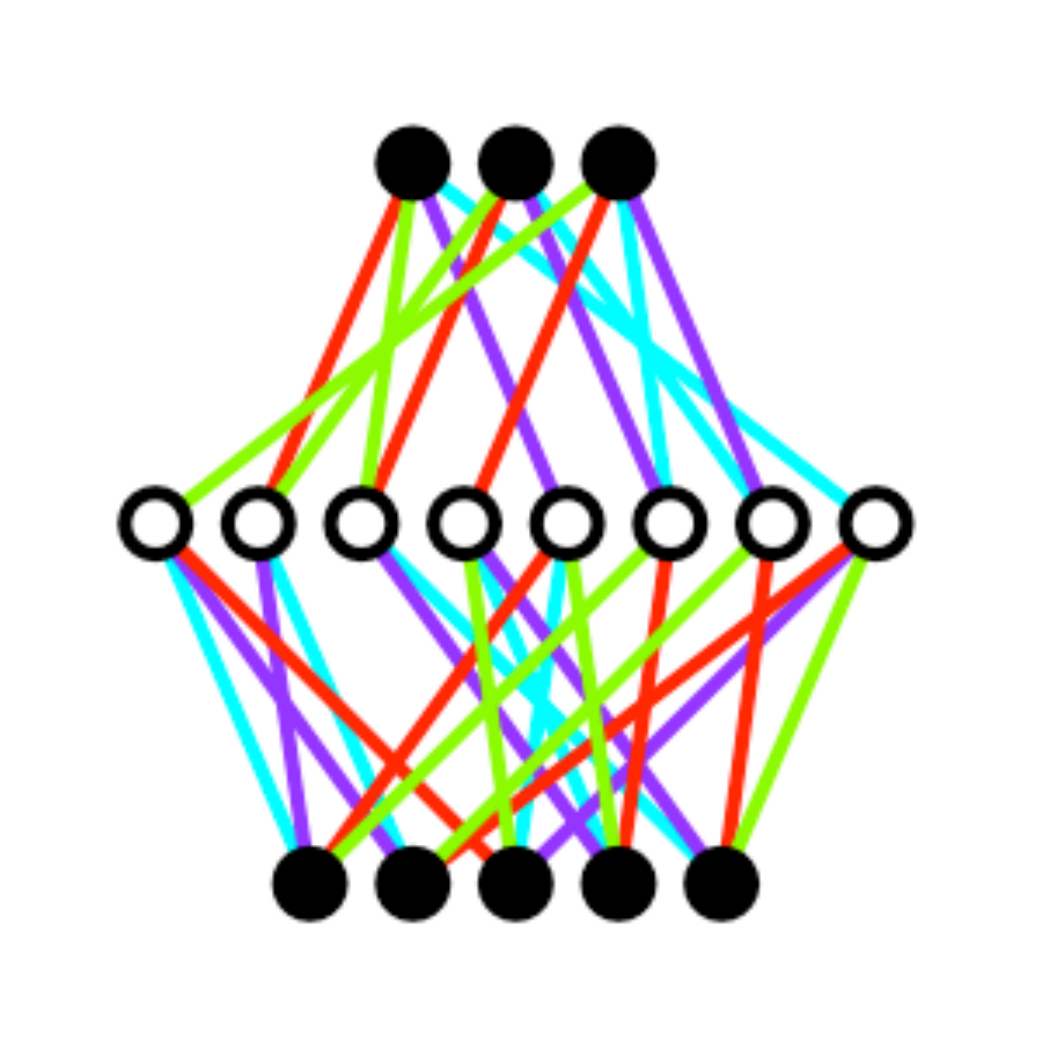}}
     \raisebox{-1mm}[1.02\height]{\includegraphics[width=1in]{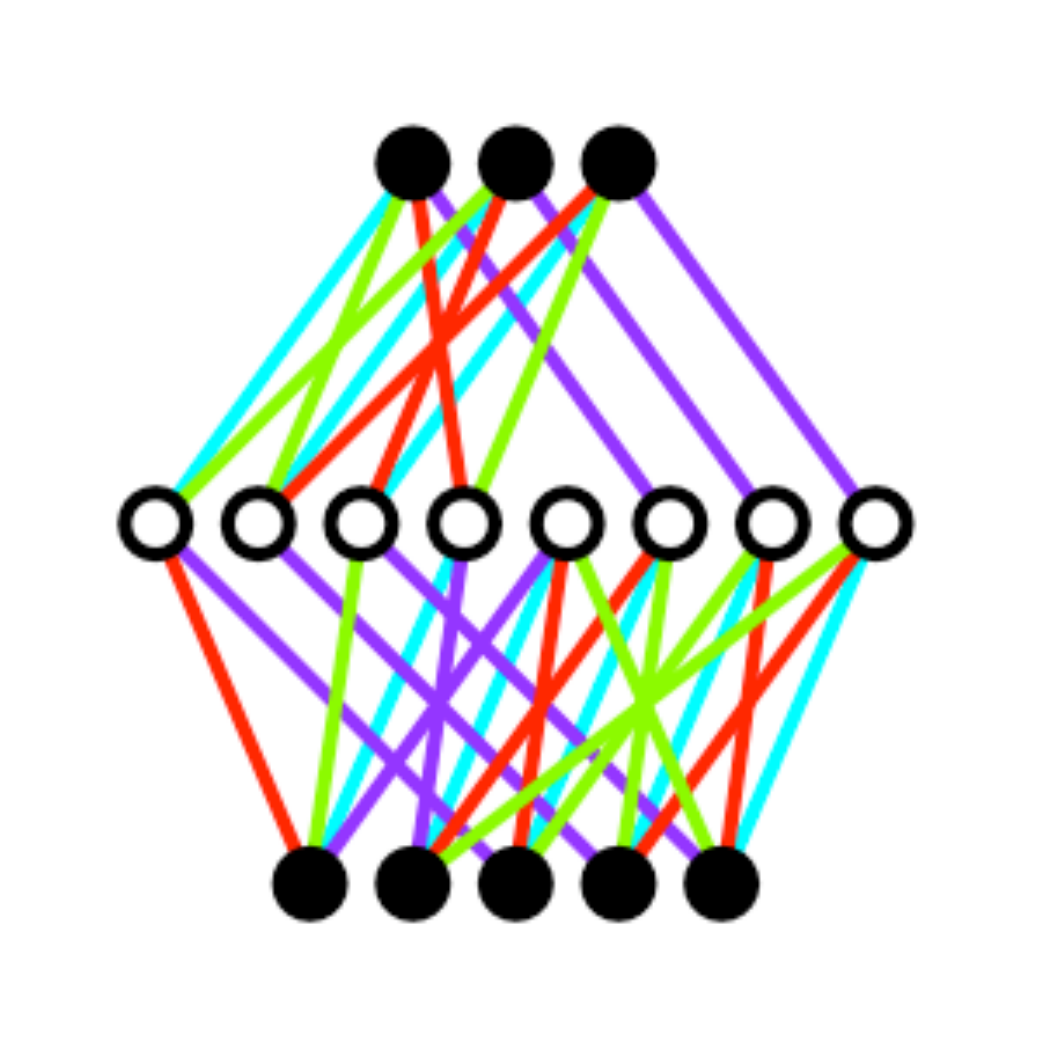}} \\
 & & & \includegraphics[width=1in]{Pix/A_4_16.pdf}
       \includegraphics[width=1in]{Pix/A_4_17.pdf}
       \includegraphics[width=1in]{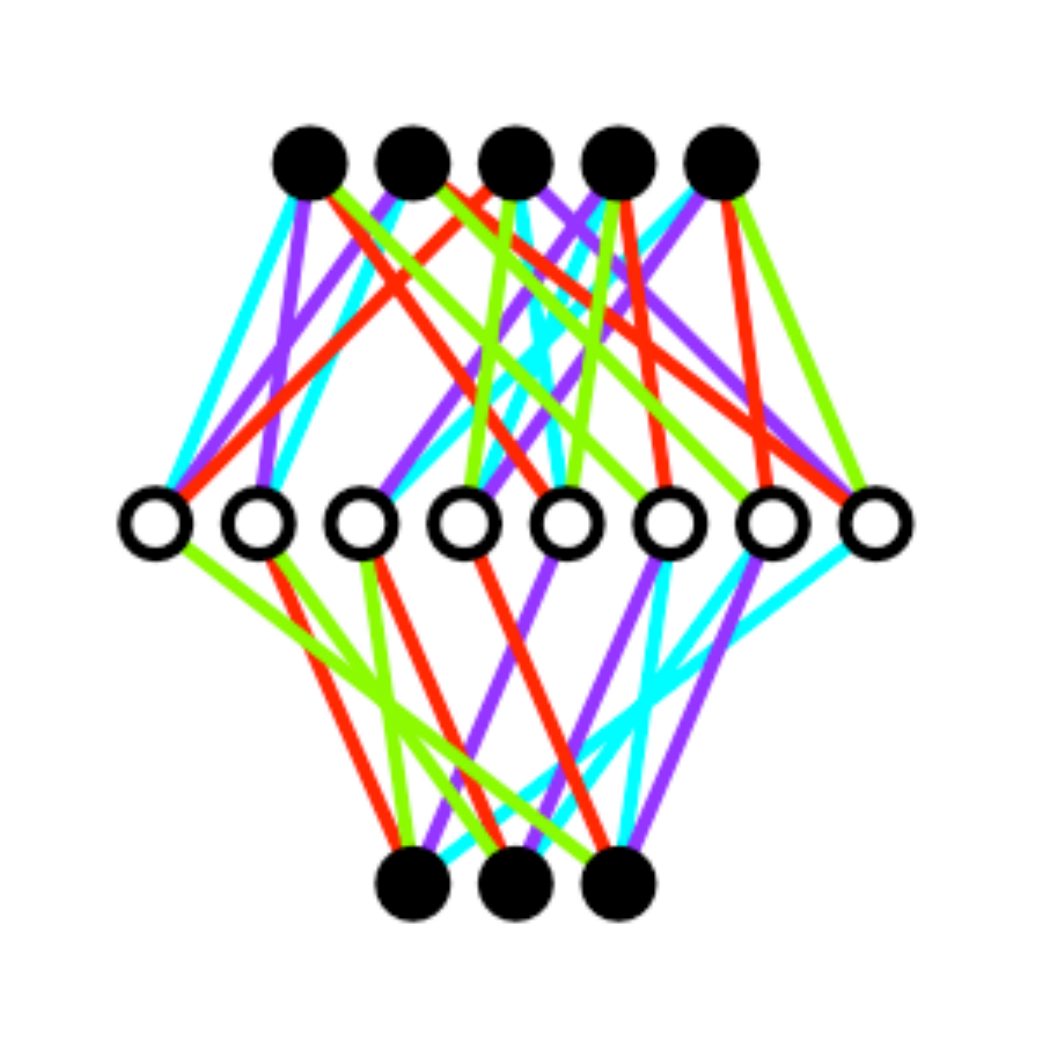} \\
 & & & \includegraphics[width=1in]{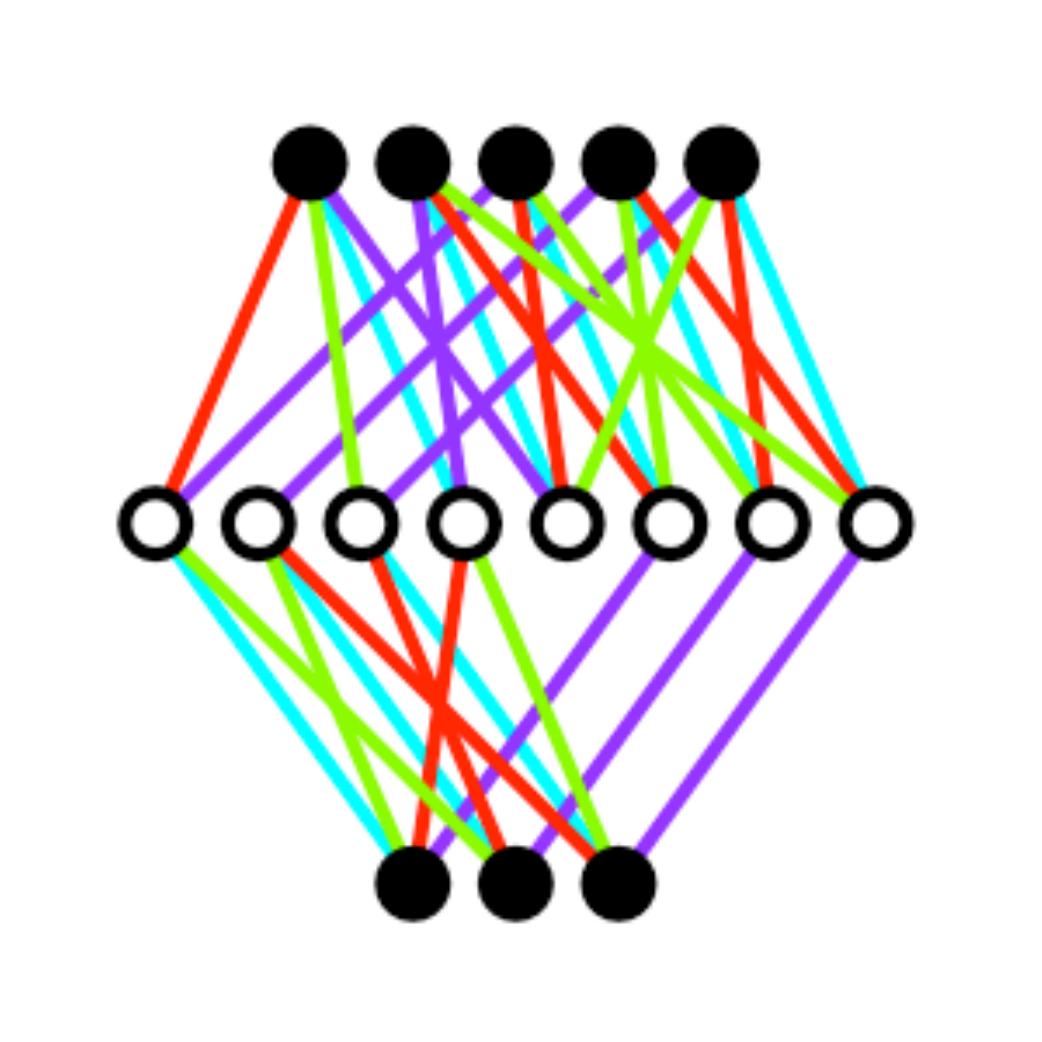}
       \includegraphics[width=1in]{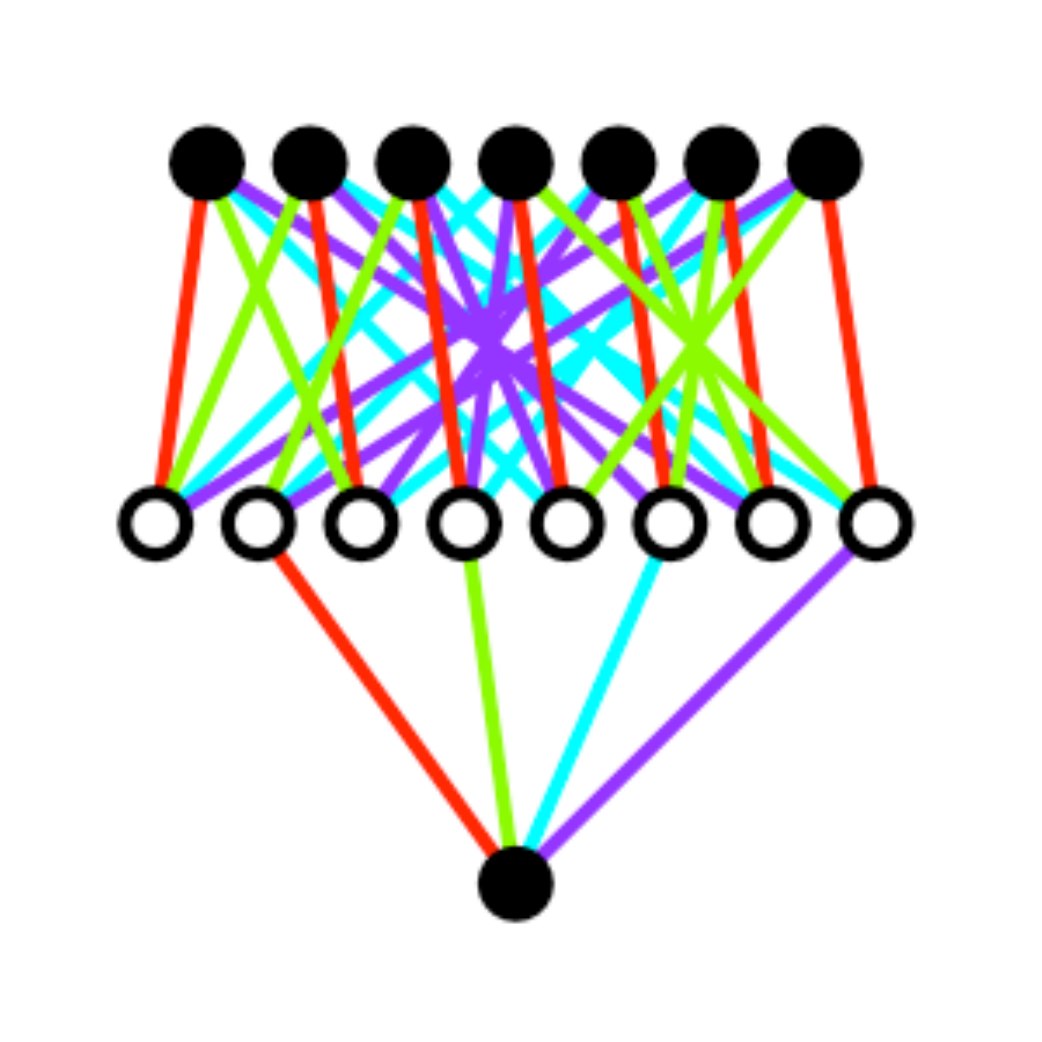}
       \includegraphics[width=1in]{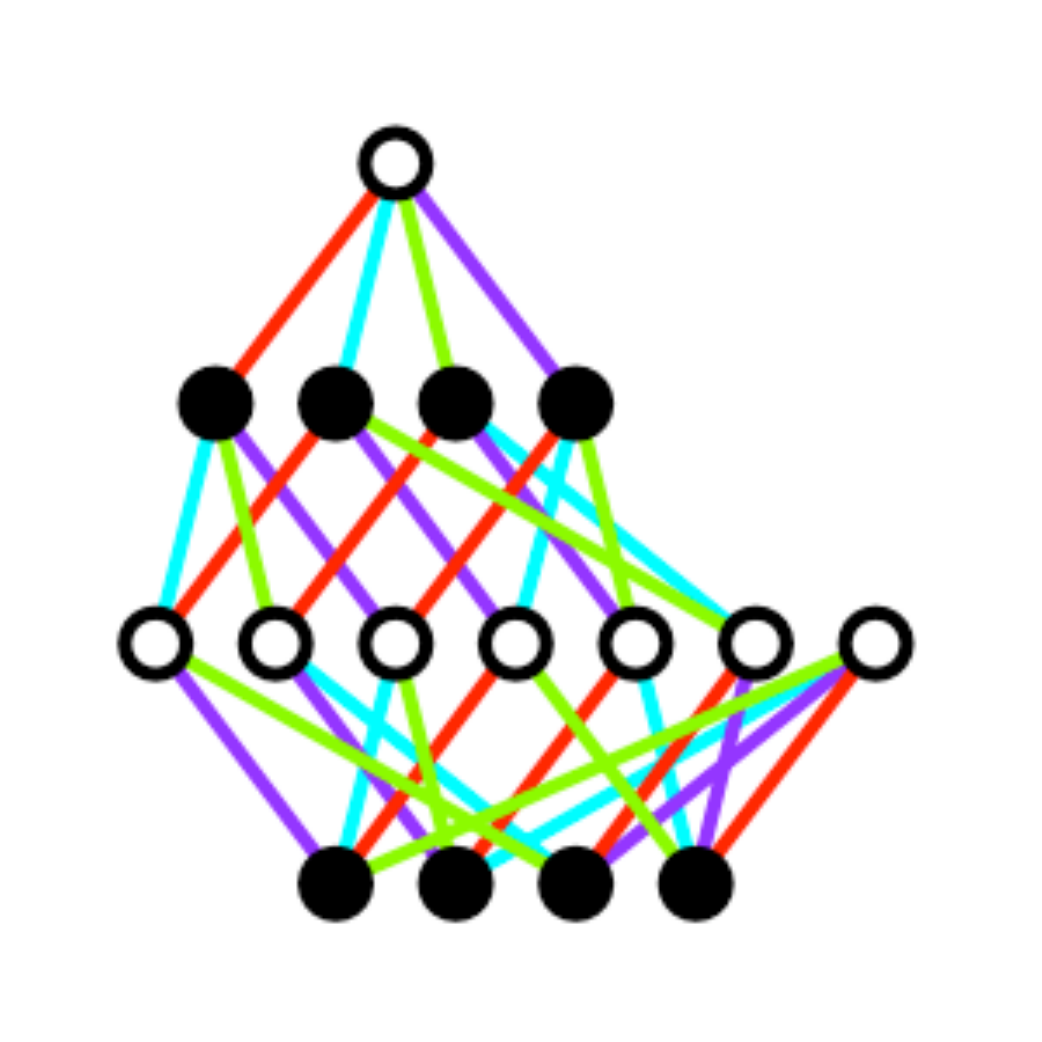} \\
 & & & \includegraphics[width=1in]{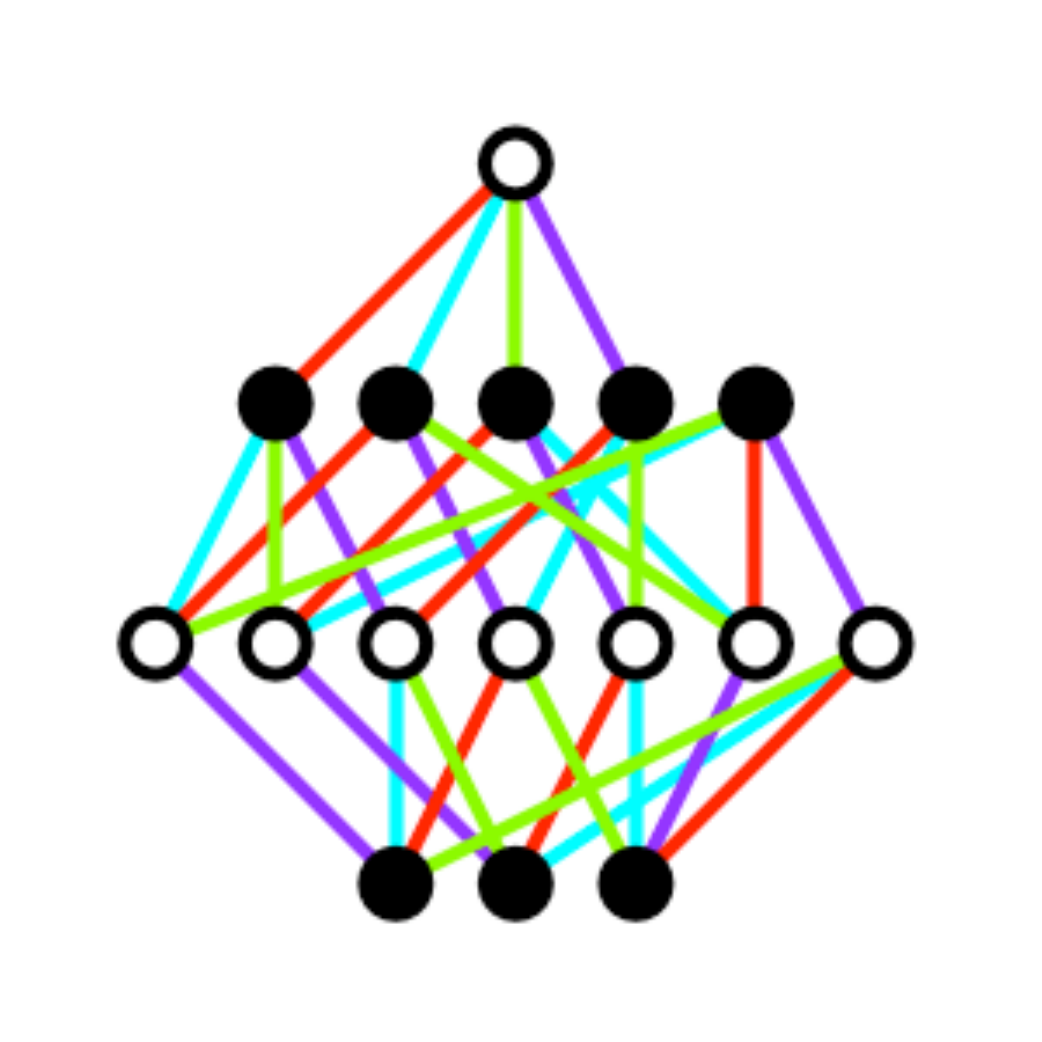}
       \includegraphics[width=1in]{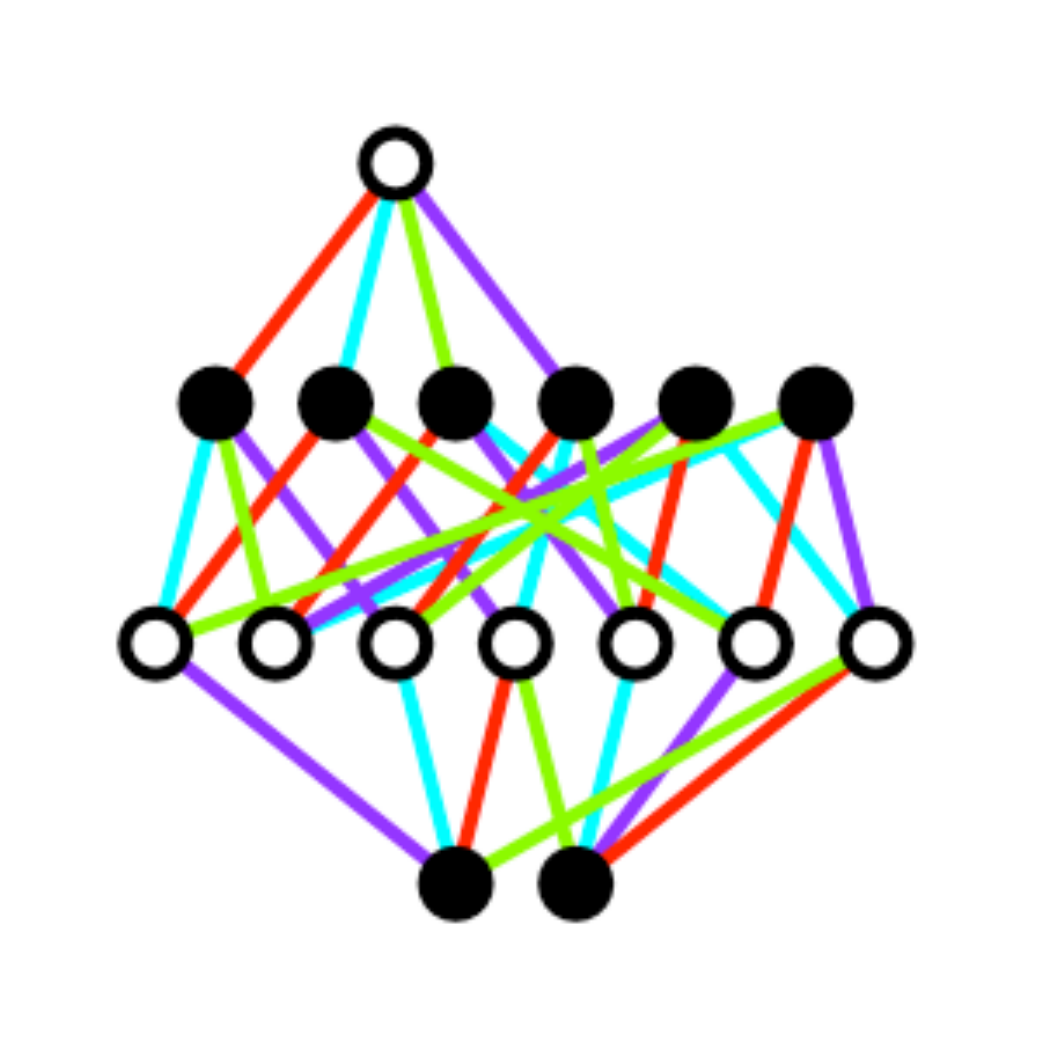}
       \includegraphics[width=1in]{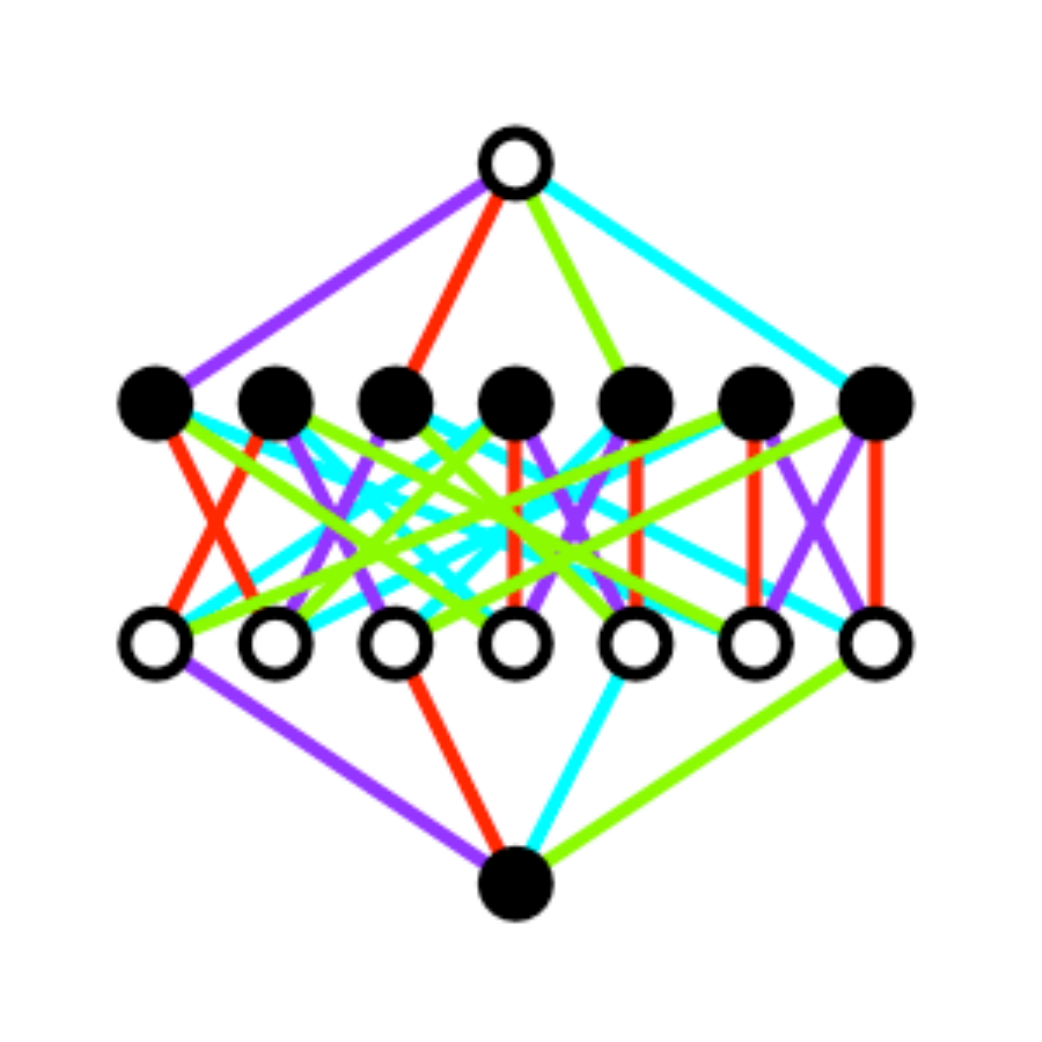} \\
 & & & \includegraphics[width=1in]{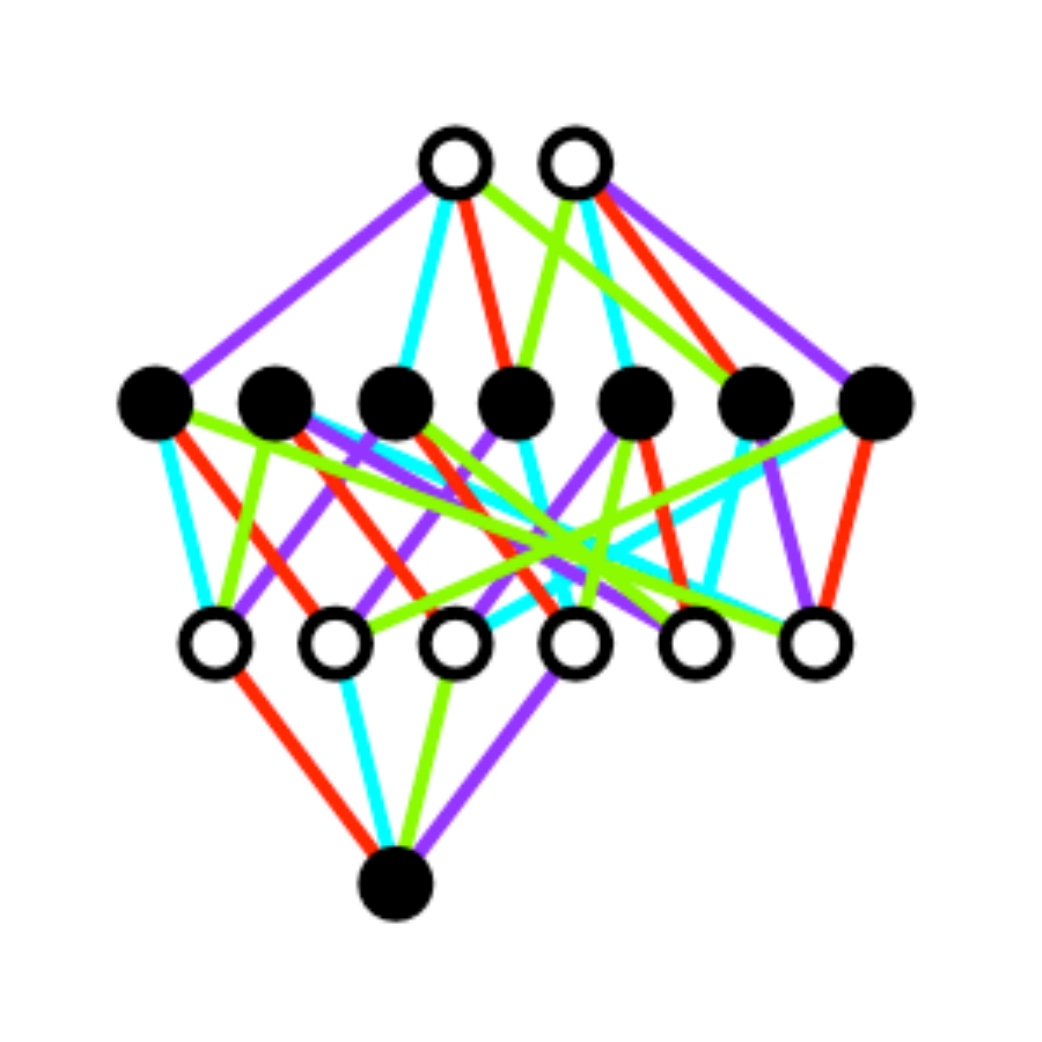}
       \includegraphics[width=1in]{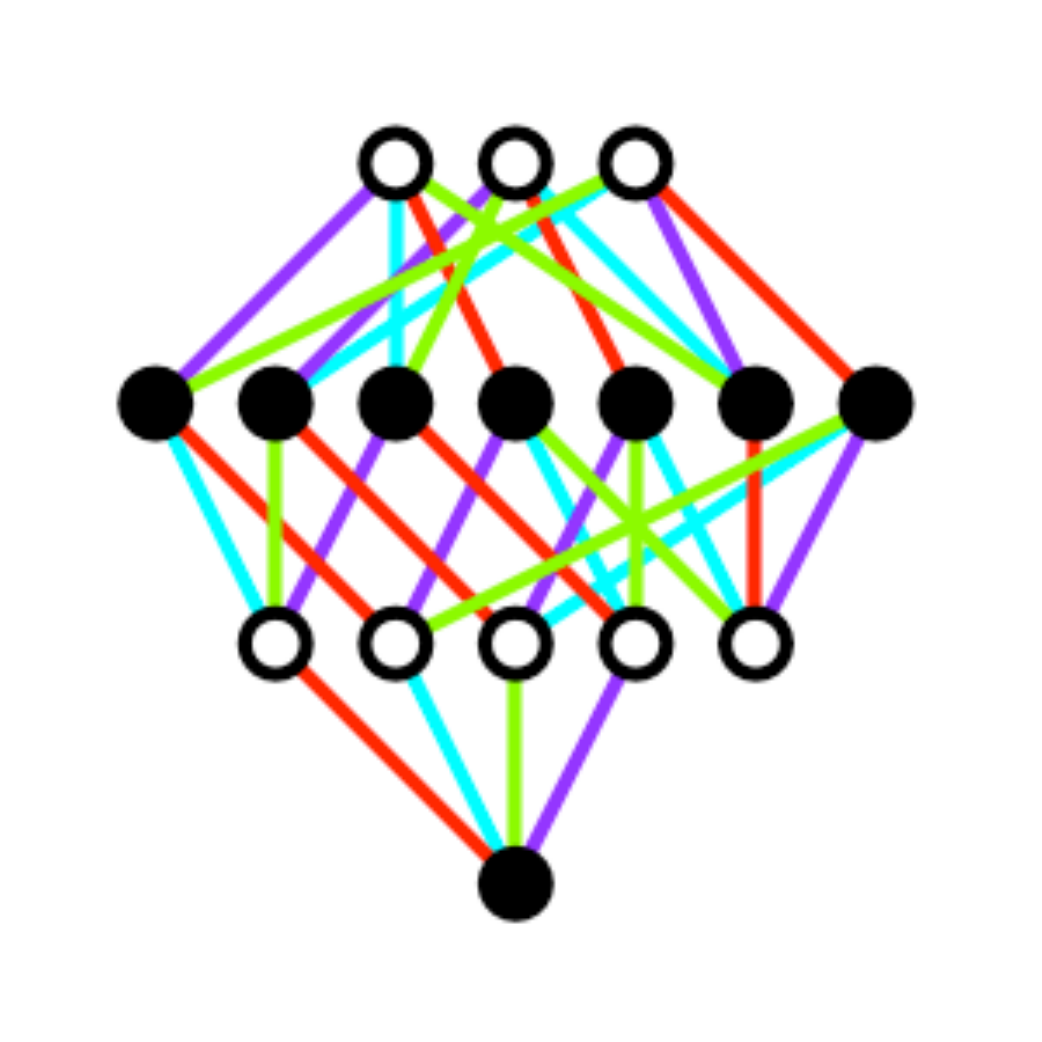}
       \includegraphics[width=1in]{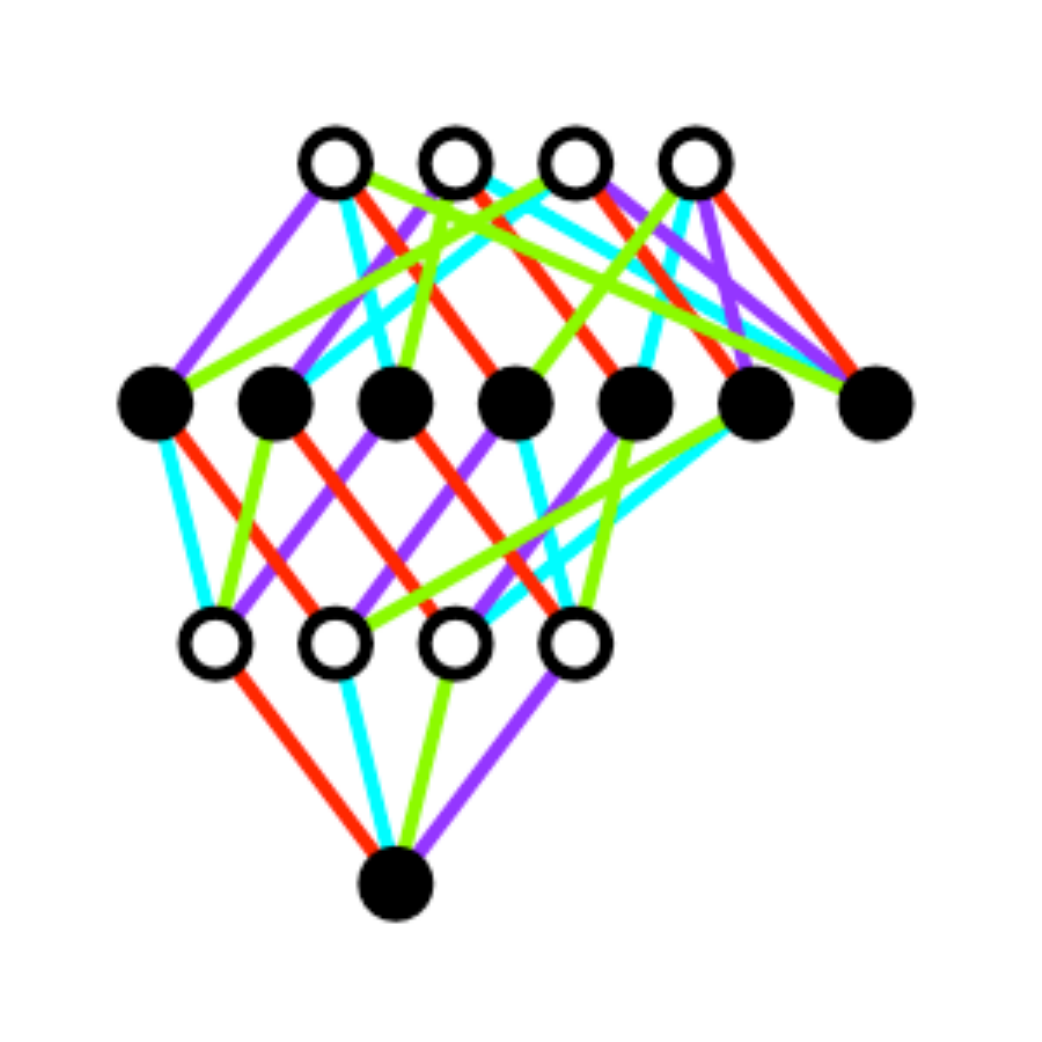} \\
 & & & \includegraphics[width=1in]{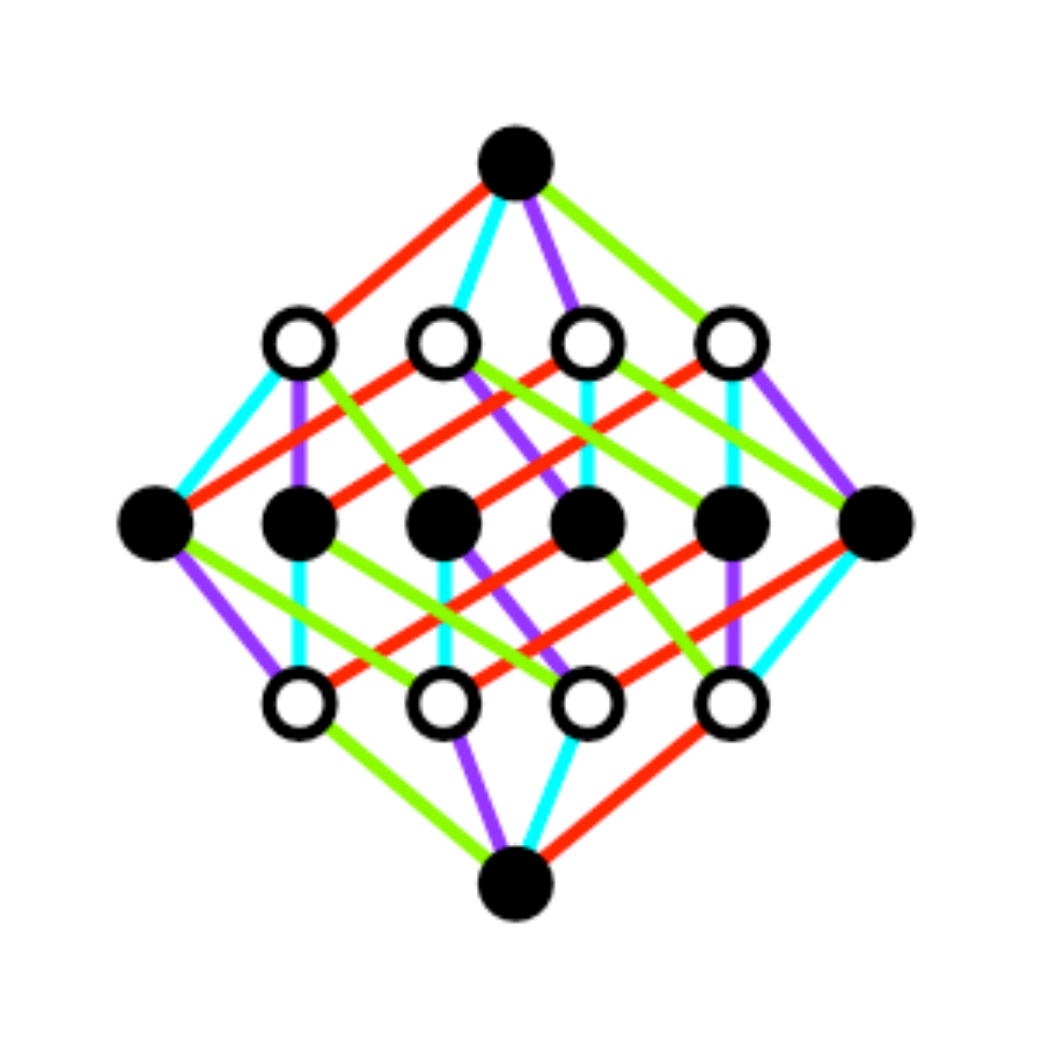} \\
 \cline{2-4}
   & $\Cd{1&1&0&0}$ & $\Cd{\text{|}}$
   & \raisebox{-1mm}[1.02\height]{\includegraphics[width=1in]{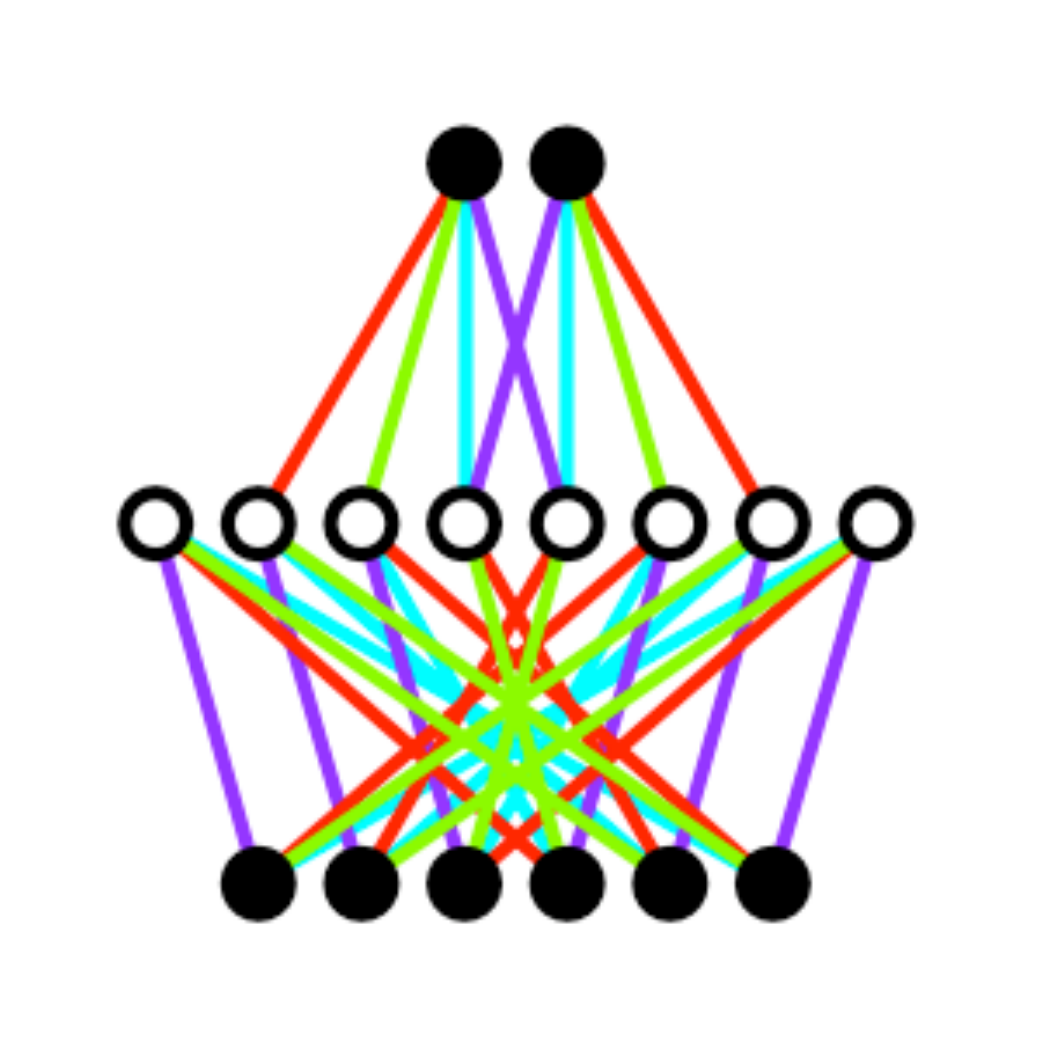}}
      \includegraphics[width=1in]{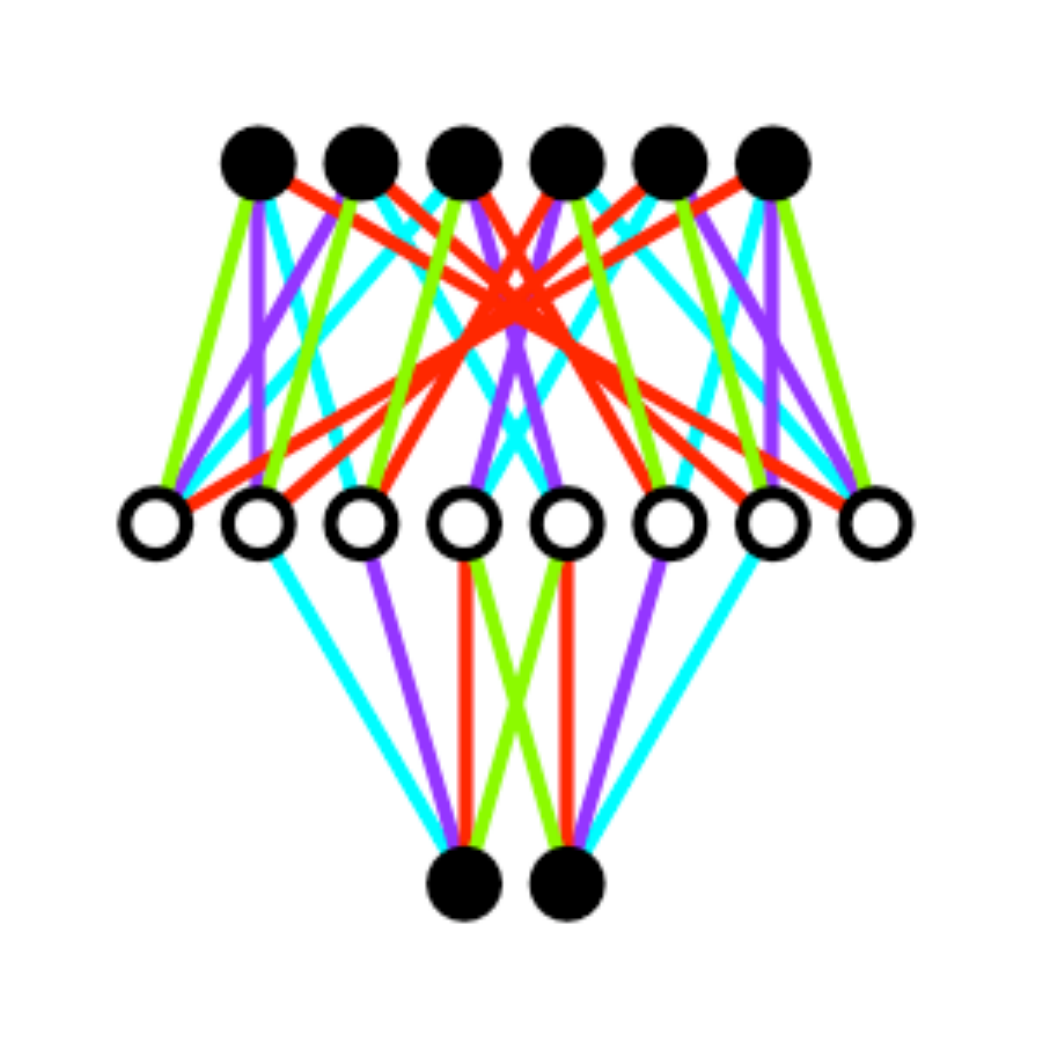}
      \includegraphics[width=1in]{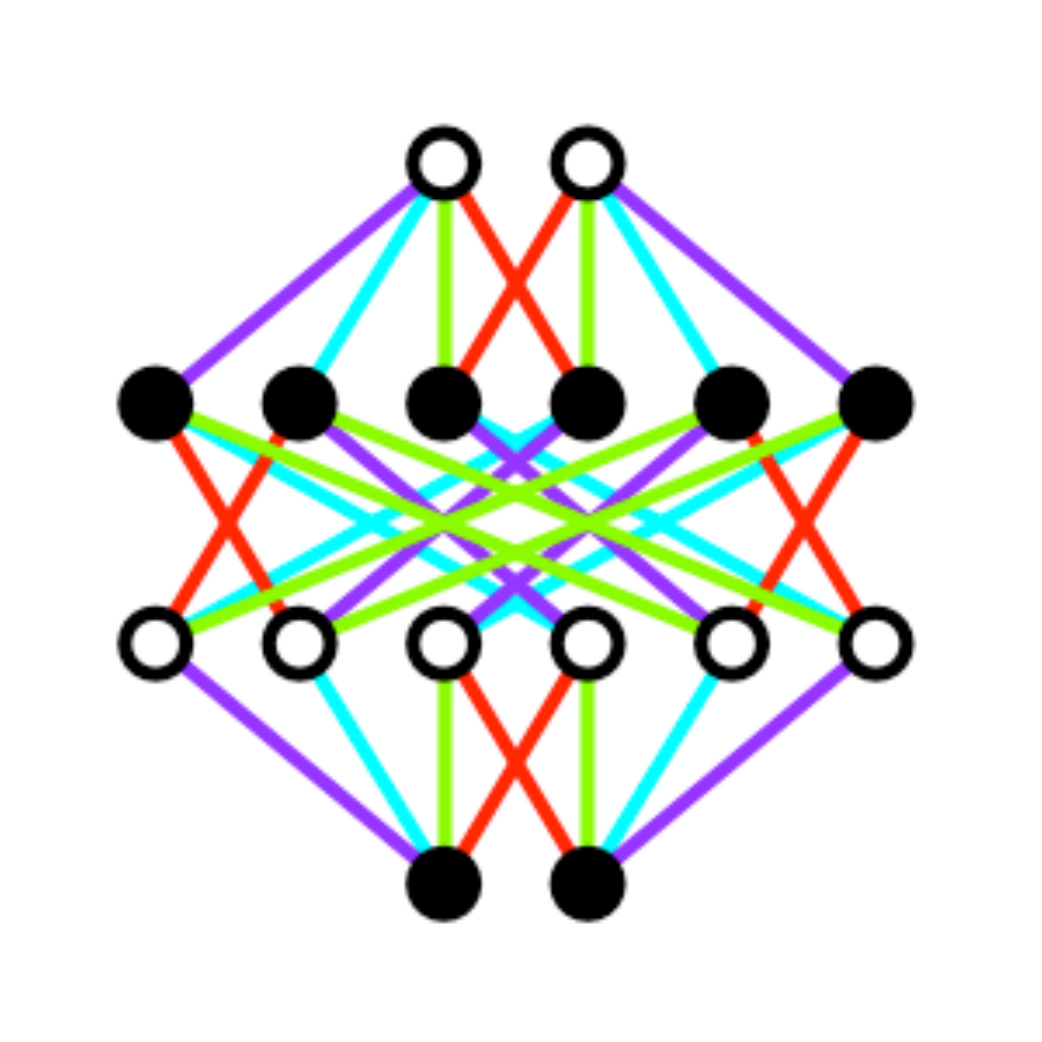} \\
 \cline{2-4}
   & $\Cd{1&1&1&1}$ & $\Cd{1&1&1&1}$
   & \raisebox{-1mm}[1.02\height]{\includegraphics[width=1in]{Pix/A_4_5.pdf}}
      \includegraphics[width=1in]{Pix/A_4_6.pdf} \\
 \cline{2-4}
   & $\Cd{1&1&0&0\\[1pt]1&0&1&0}$ & $\Cd{\text{|}}$
   & \raisebox{-1mm}[1.02\height]{\includegraphics[width=1in]{Pix/A_4_9.pdf}} \\
 \cline{2-4}
   & $\Cd{1&1&0&0\\[1pt]1&1&1&1}$ & $\Cd{1&1&1&1}$
   & \raisebox{-1mm}[1.02\height]{\includegraphics[width=1in]{Pix/A_4_3.pdf}} \\
 \cline{2-4}
   & $\Cd{1&1&0&0\\[1pt]1&0&1&0\\[1pt]1&1&1&1}$ & $\Cd{1&1&1&1}$ 
   & \raisebox{-1mm}[1.02\height]{\includegraphics[width=1in]{Pix/A_4_1.pdf}} \\
 \hline \noalign{\vglue2mm}
\caption{Indecomposable Adinkraic Supermultiplets for $N \leq 4$, their node choice groups (NCG) and doubly-even (DE) maximal subgroups thereof.}
\label{ncstable}
\end{longtable}

\begin{table}[htbp]
  \centering
  \begin{tabular}{>{\centering\baselineskip=10pt\arraybackslash}p{20mm}
                 |>{\centering\baselineskip=10pt\arraybackslash}p{20mm}
                 |>{\centering\baselineskip=10pt\arraybackslash}p{10mm}
                 |>{\centering\baselineskip=10pt\arraybackslash}p{10mm}}
  \bf\boldmath Node Choice Group
 &\bf\boldmath Doubly Even Subgroup
 &\qquad \bf\#
 &\qquad \bf D\,?\\
 \hline\hline
 \rule{0pt}{15pt}%
 $\Cd{\text{|}}$ & $\Cd{\text{|}}$ & 395 & No \\[1mm]
 $\Cd{1&1&0&0&0}$ & $\Cd{\text{|}}$ & 17 & No \\[1mm]
 $\Cd{1&1&1&1&0}$ & $\Cd{\text{|}}$ & 7 & Yes \\[1mm]
 $\Cd{1&1&1&1&0}$ & $\Cd{1&1&1&1&0}$ & 13 & No \\[2mm]
 $\Cd{1&1&0&0&0\\[1pt]1&0&1&0&0}$ & $\Cd{\text{|}}$ & 3 & No \\[2mm]
 $\Cd{1&1&0&0&0\\[1pt]1&1&1&1&0}$ & $\Cd{\text{|}}$ & 3 & Yes \\[2mm]
 $\Cd{1&1&0&0&0\\[1pt]1&1&1&1&0}$ & $\Cd{1 & 1 & 1 & 1 & 0}$ & 3 & No \\[2mm]
 $\Cd{1&1&1&1&0\\[1pt]1&1&1&0 &1}$ &$\Cd{\text{|}}$ & 4 & Yes \\[2mm]
 $\Cd{1&1&1&1&0\\[1pt]1&1&1&0 &1}$ &$\Cd{1&1&1&1&0}$ & 2 & No \\[2.3mm]
    \hline
 \end{tabular}
 \qquad
 \begin{tabular}{>{\centering\baselineskip=10pt\arraybackslash}p{20mm}
                 |>{\centering\baselineskip=10pt\arraybackslash}p{20mm}
                 |>{\centering\baselineskip=10pt\arraybackslash}p{10mm}
                 |>{\centering\baselineskip=10pt\arraybackslash}p{10mm}}
  \bf\boldmath Node Choice Group
 &\bf\boldmath Doubly Even Subgroup
 &\qquad \bf\#
 &\qquad \bf D\,?\\
 \hline\hline
 \rule{0pt}{19pt}%
 $\Cd{1&1&0&0&0\\[1pt]1&0&1&0&0\\[1pt]1&1&1&1&0}$ & $\Cd{\text{|}}$ & 1& Yes \\[4mm]
 $\Cd{1&1&0&0&0\\[1pt]1&0&1&0&0\\[1pt]1&1&1&1&0}$ & $\Cd{1&1&1&1&0}$ &1& No \\[4mm]
 $\Cd{1&1&0&0&0\\[1pt]1&0&1&0&0\\[1pt]0&0&0&1&1}$ & $\Cd{\text{|}}$ &3 & Yes \\[4mm]
 $\Cd{1&1&0&0&0\\[1pt]1&0&1&0&0\\[1pt]0&0&0&1&1}$ & $\Cd{1&1&0&1&1}$ &1& No \\[5mm]
 $\Cd{1&1&0&0&0\\[1pt]0&1&1&0&0\\[1pt]1&1&1&1&0\\[1pt]0&0&0&1&1}$ &$\Cd{\text{|}}$ &1&No
  \\[5mm]
 $\Cd{1&1&0&0&0\\[1pt]0&1&1&0&0\\[1pt]1&1&1&1&0\\[1pt]0&0&0&1&1}$ &$\Cd{1&1&1&1&0}$ &1&Yes
  \\[5mm]
    \hline
\end{tabular}
  \caption{Number (\#) and decomposability (D) of adinkraic supermultiplets for $N = 5$}
\label{N_5_table}
\end{table}
\FloatBarrier

~
\vfill
\begin{flushright}\sl
If you do not expect the unexpected, you will not find it.\\
 -- Aristotle
\end{flushright}
\vfill
\bigskip\paragraph{\bfseries Acknowledgments:}
This research was supported in part by the endowment of the John S.~Toll Professorship, the University of Maryland Center for String \& Particle Theory, National Science Foundation Grant PHY-0354401, and Department of Energy Grant DE-FG02-94ER-40854.  Some Adinkras were drawn with the aid of the {\em Adinkramat\/}~\copyright\,2008 by G.~Landweber.
\vfill

\clearpage
\appendix
\section{Some Linear Algebra Facts About Codes}
\label{s:linalg}
The facts in this appendix are elementary and come from linear algebra, but since linear algebra over $\ZZ_2$ is not as well-known as linear algebra over $\IR$ or $\IC$, it is helpful to point out some properties that are used in this paper.  For the purposes of this paper, all codes are linear binary block codes, even when this is not stated. Codes consist of codewords, which for us are binary $N$-bit numbers, or equivalently, $N$-tuples of 1-bit binary numbers.

First, several results have proofs that are identical to the corresponding ones over $\IR$ or $\IC$, so it will be sufficient to state them, and let the Reader work them out or else consult a linear algebra text\cite{rLinAlg}.
\begin{proposition}
 \label{prop:givens}
Let $C$ be a linear binary code of length $N$.
\begin{itemize}\itemsep=-3pt\vspace{-3mm}
 \item If $g_1,\ldots,g_m$ is a set of linearly independent codewords in $C$, and do not span $C$, then there is another codeword $h\in C$ so that $g_1,\ldots,g_m,h$ is linearly independent.
 \item There exists a linearly independent set that spans $C$.  Such a set is called a {\em basis}, or a {\em generating set}.
 \item Every basis for $C$ has the same number of generators, and this is less than or equal to $N$.  The number of generators is called the {\em dimension} of $C$.
 \item If $g_1,\ldots,g_k$ is a basis for $C$, then every codeword in $C$ can be uniquely written as a linear combination of the $g_1,\ldots,g_k$.
 \item If a subset $S$ of $C$ is linearly independent, the number of elements of $S$ is less than or equal to the dimension of $C$, with equality if and only if $S$ is a basis for $C$.
 \item If $C'$ is a subcode of $C$ and is not equal to $C$, then the dimension of $C'$ is less than the dimension of $C$.
\end{itemize}
\end{proposition}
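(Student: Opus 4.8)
The statement to prove is Proposition~\ref{prop:givens}, a list of elementary facts about linear binary codes. The plan is to treat a linear binary code $C\subseteq(\ZZ_2)^N$ exactly as one treats a finite-dimensional vector space over a field, the only subtlety being that the field is $\ZZ_2$, where the only scalars are $0$ and $1$, so a ``linear combination'' is simply a (bitwise) sum of a subset of the given vectors. I would dispatch the six bullet points essentially in the order listed, since each later one leans on the earlier ones.

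First I would handle the extension property: given linearly independent $g_1,\dots,g_m$ that do not span $C$, pick any $h\in C$ not in their span; then $g_1,\dots,g_m,h$ is independent, because a nontrivial dependence would either express $h$ in terms of the $g_i$ (contradicting the choice of $h$) or, if $h$ does not appear, give a nontrivial dependence among the $g_i$ alone. Here the $\ZZ_2$-specific remark is just that ``$h$ appears with coefficient $1$'' is the only way $h$ can appear. Next, existence of a basis: start from the empty set and repeatedly apply the extension property; since each step adds one element and $C\subseteq(\ZZ_2)^N$ has at most $2^N$ elements, the process terminates, yielding a spanning independent set. For the third bullet (all bases have the same size, bounded by $N$), I would invoke the standard Steinitz exchange / replacement argument: any independent set has size at most that of any spanning set, hence any two bases have equal size; and since the standard basis vectors $e_1,\dots,e_N$ span $(\ZZ_2)^N\supseteq C$, the dimension is at most $N$. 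Uniqueness of representation (fourth bullet) is immediate: if $\sum a_i g_i=\sum b_i g_i$ then $\sum(a_i-b_i)g_i=0$ (note $a_i-b_i=a_i+b_i$ over $\ZZ_2$), so independence forces $a_i=b_i$ for all $i$. The fifth bullet follows by combining the third bullet with the extension property: an independent $S\subseteq C$ has $|S|\le\dim C$, and if $|S|=\dim C$ but $S$ failed to span, the extension property would produce an independent set of size $\dim C+1$ inside $C$, contradicting the third bullet. Finally, for a proper subcode $C'\subsetneq C$, take a basis of $C'$; it is independent in $C$ but cannot span $C$ (else it would be a basis of $C$ of size $\dim C$ yet equal to a spanning set of $C'\ne C$), so by the extension property it extends to a strictly larger independent set in $C$, giving $\dim C' < \dim C$.

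The main obstacle, such as it is, is purely expository rather than mathematical: one must be careful that every step that over $\IR$ or $\IC$ would involve dividing by a nonzero scalar is rephrased so that it uses only the operations available over $\ZZ_2$. In practice this means observing (i) that to ``solve for'' a vector appearing in a dependence relation one needs its coefficient to be nonzero, which over $\ZZ_2$ means the coefficient is $1$ and no division is needed, and (ii) that $-v=v$, so subtraction and addition coincide. With those two remarks in hand, every argument above is word-for-word the classical one. Given that the paper explicitly states these are ``elementary and come from linear algebra'' and invites the reader to consult a linear algebra text, I would in fact write the proof tersely — essentially the Steinitz exchange lemma plus the six one-line deductions above — rather than reproving the basic theory of vector spaces from scratch.
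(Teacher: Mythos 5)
Your proof is correct and matches the paper's intent exactly: the paper itself gives no proof of Proposition~\ref{prop:givens}, stating only that the arguments are identical to the standard ones over $\IR$ or $\IC$ and deferring to a linear algebra text, and your write-up is precisely that standard chain (extension property, basis existence by iteration, Steinitz exchange, and the three one-line corollaries) with the right $\ZZ_2$-specific observations that the only nonzero coefficient is $1$ and that $-v=v$. Nothing to add.
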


If $\vec{x}=(x_1,\ldots,x_N)$ and $\vec{y}=(y_1,\ldots,y_N)$ are binary codewords, we can define a $\ZZ_2$-valued inner product $\langle \vec{x}, \vec{y}\rangle$:
\begin{equation}
\langle \vec{x},\vec{y}\rangle = \sum_{i=1}^N x_i y_i \pmod{2}.
\end{equation}
Two codewords are orthogonal if their inner product is zero.  The inner product is bilinear, so if $\vec{x}$ is orthogonal to $\vec{y}_1,\ldots,\vec{y}_m$, then $\vec{x}$ is orthogonal to any linear combination of the $\vec{y}_i$.

By counting 1's, it is easy to see that\cite{rCHVP}:
\begin{proposition}\label{prop:weights}
If $\vec{x}$ and $\vec{y}$ are codewords, then
\begin{eqnarray}
\wt(\vec{x})&\equiv& \langle \vec{x},\vec{x}\rangle \pmod{2},\\
\wt(\vec{x}+\vec{y})&\equiv&\wt(\vec{x})+\wt(\vec{y})-2\langle \vec{x},\vec{y}\rangle\pmod{4}.
\end{eqnarray}
\end{proposition}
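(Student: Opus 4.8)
The plan is to reduce everything to a coordinate-by-coordinate count, being careful about which sums are taken in $\ZZ$ and which in $\ZZ_2$. Recall that for a bit $x_i\in\{0,1\}$ we have $x_i^2=x_i$, and that $\wt(\vec{x})$ is by definition the integer $\sum_{i=1}^N x_i$ (sum in $\ZZ$), while $\langle\vec{x},\vec{y}\rangle$ is $\sum_{i=1}^N x_iy_i$ reduced mod $2$.

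For the first congruence, I would simply observe that $\langle\vec{x},\vec{x}\rangle=\sum_{i=1}^N x_i^2=\sum_{i=1}^N x_i$ (still in $\ZZ$, using $x_i^2=x_i$), and that this integer is exactly $\wt(\vec{x})$; reducing mod $2$ gives $\wt(\vec{x})\equiv\langle\vec{x},\vec{x}\rangle\pmod 2$.

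For the second congruence, the key identity is that for bits $x_i,y_i\in\{0,1\}$, the XOR satisfies $x_i\oplus y_i = x_i+y_i-2x_iy_i$ as an integer (check the four cases, or note both sides lie in $\{0,1\}$ and agree). Summing over $i$ in $\ZZ$ gives
\begin{equation}
 \wt(\vec{x}+\vec{y})=\sum_{i=1}^N(x_i+y_i-2x_iy_i)=\wt(\vec{x})+\wt(\vec{y})-2\sum_{i=1}^N x_iy_i .
\end{equation}
Now let $n=\sum_{i=1}^N x_iy_i\in\ZZ$ be the number of positions where both $\vec{x}$ and $\vec{y}$ are $1$; by definition $n\equiv\langle\vec{x},\vec{y}\rangle\pmod 2$, so $n=\langle\vec{x},\vec{y}\rangle+2m$ for some integer $m$, whence $2n=2\langle\vec{x},\vec{y}\rangle+4m\equiv 2\langle\vec{x},\vec{y}\rangle\pmod 4$. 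Substituting this into the displayed equation yields $\wt(\vec{x}+\vec{y})\equiv\wt(\vec{x})+\wt(\vec{y})-2\langle\vec{x},\vec{y}\rangle\pmod 4$, as desired.

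There is no real obstacle here: the only thing to be careful about is not conflating the $\ZZ_2$-sum $\vec{x}+\vec{y}$ (which is what appears inside $\wt$) with an integer sum, and tracking the factor of $2$ so that the ``$2\langle\vec{x},\vec{y}\rangle$'' term is only well-defined mod $4$ once $\langle\vec{x},\vec{y}\rangle$ is taken mod $2$, which is precisely the content of the last step.
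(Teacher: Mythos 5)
Your proof is correct, and it is exactly the argument the paper has in mind: the paper dismisses this proposition with ``By counting 1's, it is easy to see that...'' and a citation, and your coordinate-by-coordinate count (using $x_i^2=x_i$ and $x_i\oplus y_i=x_i+y_i-2x_iy_i$) is precisely that counting argument, carried out in full with the mod-2 versus mod-4 bookkeeping handled correctly.
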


From this it follows that the following are equivalent\cite{rCHVP}:
\begin{itemize}\itemsep=-3pt\vspace{-3mm}
\item $C$ is doubly even.
\item Every generating set for $C$ consists of codewords with weight a multiple of 4, and the elements of the generating set are pairwise orthogonal.
\item There exists a generating set for $C$ consisting of codewords with weight a multiple of 4, the elements of which are pairwise orthogonal.
\end{itemize}

\begin{proposition}\label{prop:codeadd}
If $C$ and $D$ are linear binary codes of length $N$, it is possible to choose a generating set $g_1,\ldots,g_k$ for $C$ and a generating set $h_1,\ldots,h_m$ for $D$, so that if $p$ is the dimension of $C\cap D$, then  $g_1=h_1, \ldots, g_p=h_p$, and $g_1,\ldots,g_p$ is a generating set for $C\cap D$.  Also, $g_1,\ldots,g_k,h_{p+1},\ldots, h_m$ is a linearly independent set, and spans a code called $C+D$ of dimension $k+m-p$.
\end{proposition}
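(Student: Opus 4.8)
The plan is to prove this by the standard \emph{extend-a-basis-of-a-subspace} argument, carried out over $\ZZ_2$ using Proposition~\ref{prop:givens}, followed by the dimension count $\dim(C+D)=\dim C+\dim D-\dim(C\cap D)$. The intersection $C\cap D$ is again a linear binary code, so by the second bullet of Proposition~\ref{prop:givens} it has a basis $g_1,\dots,g_p$, where by definition $p=\dim(C\cap D)$. Since $C\cap D$ is a subcode of $C$, repeated application of the first bullet of Proposition~\ref{prop:givens} extends this to a basis $g_1,\dots,g_p,g_{p+1},\dots,g_k$ of $C$. Likewise, since $C\cap D\subset D$, I extend $g_1,\dots,g_p$ to a basis of $D$, call the added generators $h_{p+1},\dots,h_m$, and set $h_i\Defl g_i$ for $1\le i\le p$, so that $h_1,\dots,h_m$ is a basis of $D$ with $g_i=h_i$ for $i\le p$, exactly as the statement requires.

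Next I would verify linear independence of the pooled set $g_1,\dots,g_k,h_{p+1},\dots,h_m$. Suppose $\sum_{i=1}^k a_i g_i+\sum_{j=p+1}^m b_j h_j=0$ with all $a_i,b_j\in\ZZ_2$, and set $w\Defl\sum_{j} b_j h_j=\sum_{i} a_i g_i$. The first expression for $w$ shows $w\in D$ (each $h_j\in D$), the second shows $w\in C$, hence $w\in C\cap D$ and so $w$ is a $\ZZ_2$-linear combination of $g_1,\dots,g_p$. But $g_1,\dots,g_p,h_{p+1},\dots,h_m$ is a basis of $D$, and the unique-expansion bullet of Proposition~\ref{prop:givens} then forces $b_{p+1}=\dots=b_m=0$. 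Consequently $\sum_i a_i g_i=0$, and since $g_1,\dots,g_k$ is a basis of $C$ we get $a_1=\dots=a_k=0$.

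Finally I would identify the span. Each $g_i$ lies in $C$ and each $h_j$ in $D$, so $\Span(g_1,\dots,g_k,h_{p+1},\dots,h_m)\subseteq C+D$; conversely $g_1,\dots,g_k$ already span $C$ and $g_1,\dots,g_p,h_{p+1},\dots,h_m$ span $D$, so every element of $C+D$ is in that span. Hence $g_1,\dots,g_k,h_{p+1},\dots,h_m$ is a basis of $C+D$, and counting generators with the ``same number of generators'' bullet of Proposition~\ref{prop:givens} gives $\dim(C+D)=k+(m-p)=k+m-p$.

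The whole argument is routine and identical to the characteristic-zero case; the one place deserving a moment's care is the linear-independence step, where one must observe that the combined relation forces the ``$D$-part'' $w$ to land in $C\cap D$ and then invoke that $g_1,\dots,g_p,h_{p+1},\dots,h_m$ (and not merely $h_{p+1},\dots,h_m$) is linearly independent. There is no genuine obstacle here — working over $\ZZ_2$ changes nothing — so this result is best recorded simply as a lemma with the above three-step proof.
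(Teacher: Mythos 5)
Your proposal is correct and follows essentially the same route as the paper: choose a basis of $C\cap D$, extend it separately to bases of $C$ and of $D$ via the basis-extension property, and show linear independence of the pooled set by observing that the common value $w$ of the two partial sums lands in $C\cap D$ and then invoking uniqueness of expansion. The only cosmetic difference is that you kill the $h$-coefficients first using the basis of $D$, while the paper argues symmetrically on both sides before concluding $w=0$; the content is identical.
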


\begin{proof}
Take for $g_1,\ldots,g_p$ a basis for $C\cap D$.  Let $h_1=g_1, \ldots, h_p=g_p$.  Use item 1 in Proposition~\ref{prop:givens} to extend $g_1,\ldots,g_p$ to a basis $g_1,\ldots,g_k$ for $C$, and similarly extend $h_1,\ldots,h_p$ to a basis $h_1,\ldots,h_m$ for $D$.

Now to see that $g_1,\ldots,g_k,h_{p+1},\ldots,h_m$ is linearly independent, suppose there existed some $x_1, \ldots, x_k, y_{p+1}, \ldots, y_m$, each in $\{0,1\}$, so that
\begin{equation}
\sum_{i=1}^k x_i g_i  + \sum_{i=p+1}^m y_i h_i = 0.
\end{equation}
We write
\begin{equation}
\sum_{i=1}^k x_i g_i  = \sum_{i=p+1}^m y_i h_i
\end{equation}
(noting that in $\ZZ_2$, $-1=1$).  Define $w$ to be this sum.  By the left hand side, $w$ is in the span of the $g_i$, and so $w\in C$.  By the right hand side, $w$ is in the span of the $h_i$ and so $w\in D$.  Thus, $w\in C\cap D$.  Since this is spanned by $g_1,\ldots,g_p$, and since $g_1,\ldots, g_k$ is a linearly independent set, we see that $x_{p+1},\ldots,x_k$ are all equal to zero.  Likewise, $y_{p+1},\ldots,y_m$ are all zero.  By the right hand side, this means $w=0$.  By the left hand side, and the linear independence of the $g_i$, we have that all the $x_1,\ldots,x_k$ are zero.  Thus, the linear independence of $g_1,\ldots,g_k,h_{p+1},\ldots,h_m$ is established.

The span of it is thus a linear binary code of dimension $k+m-p$ with $g_1,\ldots,g_k,h_{p+1},\ldots,h_m$ as a basis.
\end{proof}

\begin{lemma}
Let $B$ be a linear binary code of length $N$.  Let $C$ be a doubly even subcode of $B$, with generating set $\{c_1,\ldots,c_k\}$.  Suppose $D$ is a doubly even subcode of dimension $m>k$.  Then there is a codeword $d\in D$ so that $\{c_1,\ldots,c_k,d\}$ is linearly independent and spans a doubly even code.
\end{lemma}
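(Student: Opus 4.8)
The plan is to produce the required $d$ not merely in $D$ but in $D\cap C^{\perp}$ (orthogonal complement in $(\ZZ_2)^N$) and outside $C$; the ambient code $B$ plays no role. First I would check that this suffices. Every codeword of $\Span\{c_1,\dots,c_k,d\}$ is either a codeword of $C$, hence of weight $\equiv0\pmod4$, or of the form $c+d$ with $c\in C$, and then by Proposition~\ref{prop:weights} together with the double-evenness of $c$ and of $d$,
\begin{equation*}
 \wt(c+d)\equiv\wt(c)+\wt(d)-2\langle c,d\rangle\equiv-2\langle c,d\rangle\pmod4 .
\end{equation*}
So if $d$ is orthogonal to every element of $C$ then $\wt(c+d)\equiv0\pmod4$ too, whence $\Span\{c_1,\dots,c_k,d\}$ is doubly even (one may instead invoke the ``equivalent conditions'' following Proposition~\ref{prop:weights}: $\{c_1,\dots,c_k,d\}$ is a generating set of weight-$0\bmod4$, pairwise orthogonal codewords). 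Linear independence of $\{c_1,\dots,c_k,d\}$ is exactly the condition $d\notin C$. Thus the whole lemma reduces to the assertion $D\cap C^{\perp}\not\subseteq C$.

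Next I would set up a rank count. Fix a basis $d_1,\dots,d_m$ of $D$, and let $M$ be the $k\times m$ matrix over $\ZZ_2$ with entries $M_{ij}=\langle c_i,d_j\rangle$; put $r=\operatorname{rank}M$. The linear map $D\to(\ZZ_2)^k$ sending $d$ to $(\langle c_1,d\rangle,\dots,\langle c_k,d\rangle)$ has image the column space of $M$ and kernel $D\cap C^{\perp}$, so $\dim(D\cap C^{\perp})=m-r$; symmetrically, the map $C\to(\ZZ_2)^m$ built from $M^{T}$ has kernel $C\cap D^{\perp}$, giving $\dim(C\cap D^{\perp})=k-r$.

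The crux is then to sandwich $C\cap D$ between these two spaces. Since $C$ is doubly even, Proposition~\ref{prop:weights} forces $C\subseteq C^{\perp}$, so $C\cap D\subseteq D\cap C^{\perp}$; likewise $D$ doubly even gives $D\subseteq D^{\perp}$, so $C\cap D\subseteq C\cap D^{\perp}$. Now suppose, for contradiction, that $D\cap C^{\perp}\subseteq C$. As trivially $D\cap C^{\perp}\subseteq D$, this forces $D\cap C^{\perp}\subseteq C\cap D$, and combined with the reverse inclusion just noted, $D\cap C^{\perp}=C\cap D$. Comparing dimensions,
\begin{equation*}
 m-r=\dim(D\cap C^{\perp})=\dim(C\cap D)\le\dim(C\cap D^{\perp})=k-r ,
\end{equation*}
so $m\le k$, contradicting $m>k$. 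Hence $D\cap C^{\perp}\not\subseteq C$, and any $d\in(D\cap C^{\perp})\setminus C$ satisfies the conclusion.

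The only genuine obstacle is that middle inclusion argument: one must notice that self-orthogonality of doubly even codes (a consequence of Proposition~\ref{prop:weights}) pins $C\cap D$ simultaneously inside $C\cap D^{\perp}$ and $D\cap C^{\perp}$, whose dimensions $k-r$ and $m-r$ differ by exactly $m-k>0$; everything else is the routine translation of ``spans a doubly even code'' into the single orthogonality requirement $d\perp C$.
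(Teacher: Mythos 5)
Your proof is correct, and it takes a genuinely different route from the paper's. The paper first coordinates bases of $C$ and $D$ along $C\cap D$ (via Proposition~\ref{prop:codeadd}), restricts the search to $d\in\Span\{d_{p+1},\dots,d_m\}$ so that linear independence and $\wt(d)\equiv0\pmod 4$ are automatic, observes that orthogonality to the first $p$ generators is free because they lie in the self-orthogonal code $D$, and then solves an explicit underdetermined system of $k-p$ homogeneous equations in $m-p$ unknowns by Gauss--Jordan elimination. You instead bypass the basis coordination entirely: you reduce the lemma to the single assertion $D\cap C^{\perp}\not\subseteq C$, compute $\dim(D\cap C^{\perp})=m-r$ and $\dim(C\cap D^{\perp})=k-r$ by rank--nullity applied to the Gram matrix $M$ and its transpose, and use the self-orthogonality of both doubly even codes to sandwich $C\cap D$ inside both intersections, so that the containment $D\cap C^{\perp}\subseteq C$ would force $m-r\le k-r$. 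The two arguments rest on the same underlying dimension count, but yours is more symmetric in $C$ and $D$, needs neither Proposition~\ref{prop:codeadd} nor an explicit choice of complement, and isolates cleanly the one place where double evenness enters (the sufficiency check via Proposition~\ref{prop:weights} and the inclusions $C\subseteq C^{\perp}$, $D\subseteq D^{\perp}$); the paper's version is more constructive, handing you a concrete recipe for $d$ as a solution of an explicit linear system, which is closer in spirit to how the codes are manipulated elsewhere in the paper.
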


\begin{proof}
Use the previous proposition to find a generating set $\hat{c}_1,\ldots,\hat{c}_k$ for $C$ and a generating set $d_1,\ldots,d_m$ for $D$ so that $\hat{c}_1=d_1,\ldots,\hat{c}_p=d_p$ is a generating set for $C\cap D$.

We next note that if $d$ is any codeword in $D$, then $\{c_1,\ldots,c_k,d\}$ is linearly independent if and only if $\{\hat{c}_1,\ldots,\hat{c}_k,d\}$ is; and furthermore, their spans are equal.  This is because each $c_i$ can be written uniquely as a linear combination of the $\hat{c}_1,\ldots,\hat{c}_k$, and vice-versa.  Thus, it suffices to find a codeword $d\in D$ so that $\{\hat{c}_1,\ldots,\hat{c}_k,d\}$ is linearly independent and spans a doubly even code.

\goodbreak
As mentioned above, any generating set for a doubly even code has the following characteristics:
\begin{enumerate}\itemsep=-3pt\vspace{-3mm}
\item Each element has weight a multiple of 4.
\item Each pair of generators is orthogonal.
\end{enumerate}

We look for elements of $D$ of the form
\begin{equation}
d=\sum_{i=p+1}^m x_i d_i,
\end{equation}
where $x_{p+1},\ldots,x_m$ are each in $\{0,1\}$.  Every such linear combination has weight a multiple of 4 since $D$ is doubly even.  Furthermore, if $d$ is any such linear combination, then $\hat{c}_1,\ldots,\hat{c}_k,d$ is linearly independent, by the previous proposition.  So it remains to choose $d$ so that it is orthogonal to each $\hat{c}_i$.

Since codewords in $D$ are orthogonal to each other, and $\hat{c}_1,\ldots,\hat{c}_p$ are in $D$, we have that every codeword in $D$ is orthogonal to $\hat{c}_1,\ldots,\hat{c}_p$.  The condition of being orthogonal to all of the $\hat{c}_i$ for $i>p$ provides $k-p$ homogeneous linear equations in the $m-p$ variables $x_{p+1},\ldots,x_m$.  Since $k-p < m-p$, Gauss--Jordan elimination shows that there must be at least one non-zero solution to this system of equations, and thus, some non-zero
\begin{equation}
d=\sum_{i=p+1}^m x_i d_i
\end{equation}
that is orthogonal to all $\hat{c}_1,\ldots,\hat{c}_k$.
\end{proof}

By applying this inductively, we can prove
\begin{theorem}\label{thm:maxcode}
Given a binary linear code $B$ of length $N$, and $C$ and $D$ doubly even subcodes of $B$, both maximal in the sense that they are not contained in any larger doubly even subcode of $B$, then $C$ and $D$ have the same dimension.
\end{theorem}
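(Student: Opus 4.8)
The plan is to derive the theorem from the preceding Lemma by a short contradiction argument; essentially all of the work has already been absorbed into that Lemma, so what remains is bookkeeping. Suppose, for contradiction, that $C$ and $D$ do not have the same dimension. Relabelling if necessary, we may assume $\dim C < \dim D$; set $k=\dim C$ and $m=\dim D$, so $k<m$, and fix a generating set $\{c_1,\ldots,c_k\}$ for $C$.

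Now apply the Lemma with the code $B$, the doubly even subcode $C$ (with its generating set), and the doubly even subcode $D$, whose dimension $m$ exceeds $k$ by hypothesis. The Lemma produces a codeword $d\in D$ such that $\{c_1,\ldots,c_k,d\}$ is linearly independent and spans a doubly even code, which we call $C'$. The point to check is that $C'\subseteq B$: this is immediate, since each $c_i$ lies in $C\subseteq B$, the codeword $d$ lies in $D\subseteq B$, and $B$ is linear, so $B$ contains the span of $\{c_1,\ldots,c_k,d\}$. Moreover $\dim C' = k+1 > k = \dim C$, so by the last item of Proposition~\ref{prop:givens} we have $C\subsetneq C'$. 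Thus $C'$ is a doubly even subcode of $B$ strictly containing $C$, contradicting the maximality of $C$.

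Hence $\dim C \geq \dim D$. Exchanging the roles of $C$ and $D$ (the hypotheses on the two codes are symmetric) gives $\dim D \geq \dim C$, and therefore $\dim C = \dim D$, as claimed. The phrase ``by applying this inductively'' in the statement refers to the observation that iterating the Lemma lets one build a maximal doubly even subcode of $B$ starting from any doubly even subcode, but for the theorem itself a single invocation of the Lemma is all that is needed.

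\textbf{Anticipated obstacle.} There is no real obstacle at the level of the theorem; the genuine content is inside the Lemma, where one must simultaneously arrange that the new generator $d$ has weight divisible by $4$ (automatic, since $d\in D$ and $D$ is doubly even) \emph{and} is orthogonal to all of $c_1,\ldots,c_k$. The latter is what forces the counting argument $k-p<m-p$ (with $p=\dim(C\cap D)$) so that the homogeneous linear system over $\ZZ_2$ has a nonzero solution. If anything needs care in writing up the theorem, it is only the verification that $C'$ lands inside $B$ and that it strictly contains $C$; both follow from linearity and from Proposition~\ref{prop:givens}.
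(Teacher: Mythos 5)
Your proof is correct and follows essentially the same route as the paper: assume without loss of generality that $\dim C<\dim D$, invoke the preceding lemma to extend $C$ by a codeword of $D$ to a strictly larger doubly even subcode of $B$, and contradict the maximality of $C$. The extra verifications you supply (that the span lies in $B$ and strictly contains $C$) are correct and merely make explicit what the paper leaves implicit.
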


\begin{proof}
Let $B$, $C$, and $D$ be given as above.  Suppose the dimension of $C$ and $D$ are different.  Without loss of generality, we may suppose the dimension of $C$ is less than the dimension of $D$.  Then by the previous lemma, $C$ is not maximal.
\end{proof}

Some readers may recognize the formal similarity with matroids\cite{rOxley}.  The set of doubly even subcodes of a given code is not a matroid, but the proof methods are very similar.  In fact, Theorem~\ref{thm:maxcode} corresponds to the well-known statement in matroid theory that every maximal independent set has the same number of elements\cite{rOxley}.\footnote{It may be tempting to define $I$ to be the set of generating sets of doubly even subcodes of $B$, but $(B,I)$ would not then necessarily satisfy the axioms of a matroid.  For instance, take $B$ to be generated by $111100, 001111, 101110$.  There is a doubly even subcode generated by $111100, 001111$.  There is a doubly even subcode generated by $101110$.  But there is no way to adjoin either $111100$ or $001111$ to $101110$ to create a doubly even subcode.  We can, however, take the sum $111100+001111=110011$ and adjoin it to $101110$ and the result generates a doubly even subcode.}

\section{Reducibility {\itshape vs.\/} Decomposability}
 \label{s:red}
Example~\ref{ExCli4d} demonstrates in detail that the supermultiplet $\sM^\diamond_{I^4}$, depicted by the Adinkra\eq{eB14641}, cannot {\em\/decompose\/} into two separate supermultiplets; it is indecomposable. However, it is not {\em\/irreducible\/}: it does contain a sub-supermultiplet, and it is possible to restrict the component fields of $\sM^\diamond_{I^4}$ so that the remaining component fields form a complete $Q$-orbit and so a sub-representation. The complement, however, will turn out not to be definable as a separate supermultiplet. Thus, we show herein that $\sM^\diamond_{I^4}$ may be reduced to a smaller sub-supermultiplet, and so is reducible.

To start, we {\em define\/} a $(1|4|3)$-dimensional supermultiplet of $N=4$ extended supersymmetry:
\begin{subequations}
 \label{e143SS}
\begin{alignat}{5}
&& Q_I\, X_{JK}&= 2\d_{I[J}\dot\c_{K]} + \ve_{IJK}{}^L\,\dot\c_L,\\
 \IX&:\smash{\hbox{$\left\{\vrule width0pt height5.5ex\right.$}}~&
   Q_I\,\c_J&=i\,\d_{IJ}\,\dot x + i \,X_{IJ},
  \smash{\begin{picture}(40,0)
          \put(20,-10){\includegraphics[width=23mm]{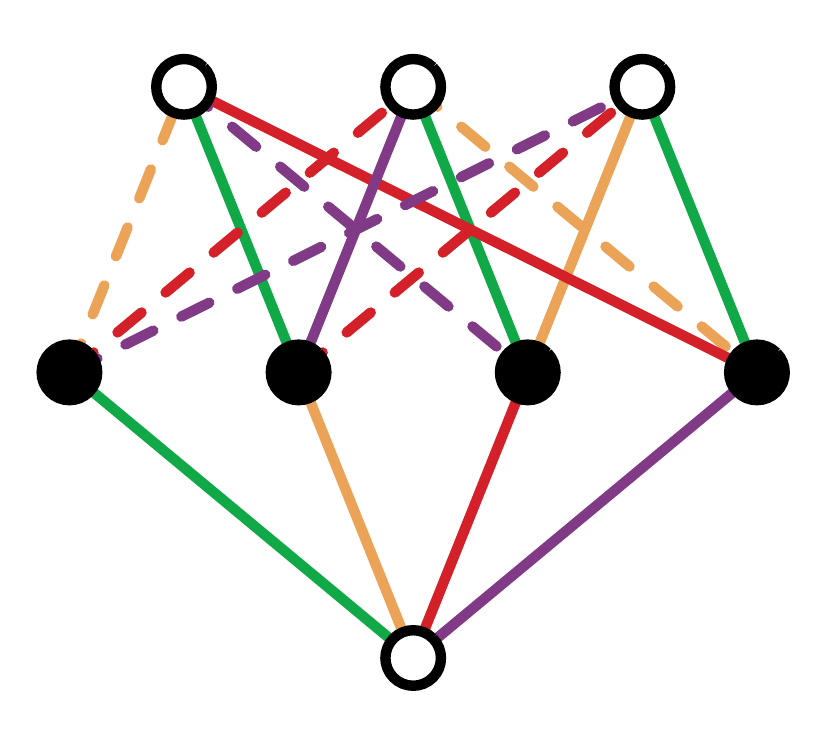}}
          \put(35,-9){\small$x$}
          \put(44,0){\small$\c_I$}
          \put(40,7){\small$X_{IJ}$}
         \end{picture}}\\
&& Q_I\, x&=\c_I
\end{alignat}
\end{subequations}
are the supersymmetry transformation rules, and it is understood that\Ft{It is also possible to use a version of Eq.~(B.2) with a minus on the right-hand side, which would in turn require adjusting a few corresponding signs in Eqs.~(B.1) and~(B.4).}
\begin{equation}
  X_{IJ}=\inv2\ve_{IJ}{}^{KL}X_{KL}.
 \label{e143N=4}
\end{equation}

We also recall the Adinkra\eq{eB14641} and use Eqs.\eq{eN4V->1h} to obtain:
\begin{subequations}
 \label{eM|}
\begin{align}
Q_I\,\cY_{JKLM}&=\d_{I[J}\dot\Y_{KLM]},\\
Q_I\,\Y_{JKL}&=i\,\d_{I[J}\,\dot Y_{KL]} + i\,\cY_{IJKL},\\
 \IY:\smash{\hbox{$\left\{\vrule width0pt height10ex\right.$}}\qquad
Q_I\,Y_{JK}&= \d_{I[J}\dot\h_{K]} + \h_{IJK},
  \smash{\begin{picture}(40,0)
          \put(20,-15){\includegraphics[width=23mm]{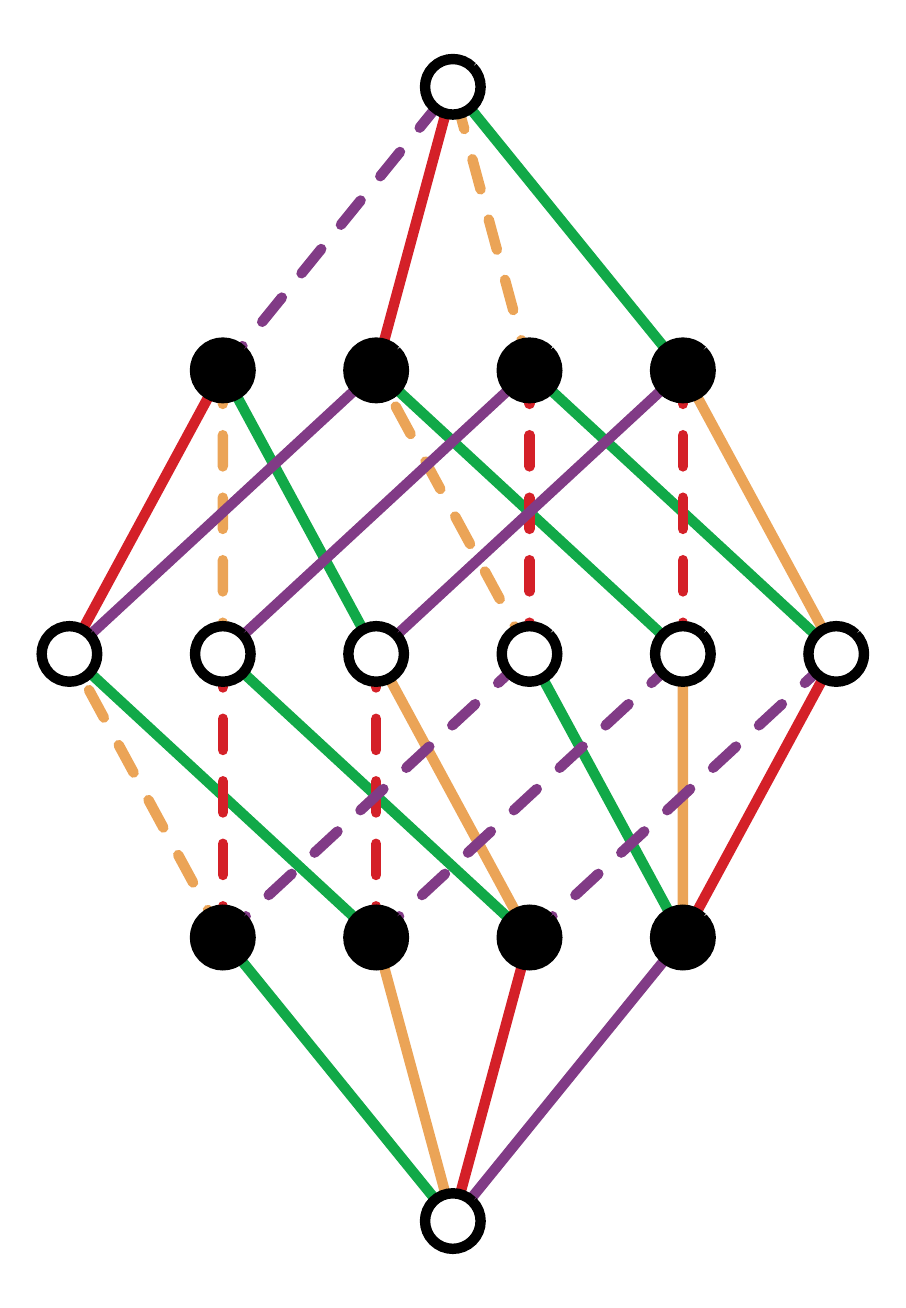}}
          \put(35,-14){\small$y$}
          \put(42,-7){\small$\h_I$}
          \put(44,1){\small$Y_{IJ}$}
          \put(42,8){\small$\Y_{IJK}$}
          \put(35,16){\small$\cY_{IJKL}$}
         \end{picture}}\\
Q_I\,\h_J&=i\,\d_{IJ}\,\dot{y} + i\,Y_{IJ},\\
Q_I\,y&=\h_I,
\end{align}
\end{subequations}
from Eqs.\eq{eM=}. In both Eqs.\eqs{e143N=4}{e143SS} and Eqs\eq{eM|}, we listed the supersymmetry transformation rules for fields in a raising engineering dimension order, so that these would correspond to the nodes at the corresponding heights.

Comparing the two, \eq{e143SS} and\eq{eM|}, we see that it is possible to identify two lowest levels of fields, and that the differences begin at the middle level of\eq{eM|}, as the four $\h_I$ connect to the six $Y_{IJ}$ differently than do the four $\c_I$ connect to the three $X_{IJ}$. To remedy this, we consider imposing the constraint $Y_{IJ}=\inv2\ve_{IJ}{}^{KL}Y_{KL}$ on\eq{eB14641}, and find that the supersymmetry action on\eq{eB14641} then induces the following complete set of constraints:
\begin{subequations}
 \label{eD+}
 \begin{align}
   Y_{IJ}     &=\inv2\ve_{IJ}{}^{KL}\,Y_{KL},\label{eD+2}\\
   \m~:\quad\smash{\hbox{$\left\{\rule{0pt}{5.5ex}\right.$}}\quad
   \Y_{IJK}   &=\ve_{IJK}{}^L\,\dot\h_L,\\
   \cY_{IJKL} &= -\ve_{IJKL}\,\ddot{y}.
 \end{align}
\end{subequations}
On one hand, we note that the so constrained sub-supermultiplet $\m(\IY)$ may be identified with $\IX$. On the other, we note that $\m(\IY)$ is effectively spanned by $(y|\h_I|Y_{IJ})$ with the constraints $Y_{23}=Y_{14}$, $Y_{24}=Y_{13}$ and $Y_{34}=Y_{12}$ imposed to eliminate half of the $Y_{IJ}$. Therefore, $\IX$ may be identified with the sub-supermultiplet $\m(\IY)\subset\IY$, which realizes the embedding $\IX\into\IY$. It is, however, not possible to specify a system of local constraints akin to\eq{eD+}, which would define a sub-supermultiplet effectively spanned by $(Y_{IJ}|\Y_{IJK}|\cY_{IJKL})$, the complement of $\IX\subset\IY$.

Since $\IY$ contains a sub-supermultiplet, it is reducible, albeit indecomposable.

We pause to note that the system of constraints\eq{eD+} has the following graphical equivalent: The six mid-level nodes, representing $Y_{IJ}$, pair up according to the constraint\eq{eD+2}. Each node ``brings along'' its edges from the fermions with the $\inv2$ lower engineering dimension, and so each of the three linearly independent $\m(Y_{IJ})$ connects to each of the four $\h_I$---unlike in Eqs.\eq{eM|}. Since the $\Y_{IJK}$ and $\cY_{IJKL}$ are identified with time-derivatives of $\h_I$ and $y$, respectively, all the edges going upward from the mid-level in $\m(\IY)$ merely duplicate the edges below the mid-level.

Since it is possible to embed $\m:\IX\into\IY$, it is then possible to define a ``gauge transformation'',
\begin{equation}
  \IY \longrightarrow \IY + \x\,\IX,
\end{equation}
and so also the gauge-equivalence quotient,
\begin{equation}
  \IY/\IX \Defl \{\IY\>:~ \IY\simeq\IY+\x\,\IX\},
 \label{eWZ}
\end{equation}
which may be identified with worldline shadow of the ``vector'' supermultiplet of ${\cal N}{=}1$ supersymmetry in 4-dimensional spacetime, in the Wess-Zumino gauge.

The resulting constructions are fairly well known in both superfield and component form. It is gratifying that adinkrammatics---representation in terms of Adinkras---is well capable of describing this. 
 See also Ref.\cite{r6-4.2} for a related but intrinsically 4-dimensional analysis in terms of Adinkras.

Finally, we clarify some of the mathematical details involving sub-supermultiplets, reducibility, and decomposability. First, we need to refine the notion of a map between two supermultiplets. Clearly, such maps must be equivariant with respect to the $N$-extended supersymmetry algebra. However, this condition alone is not sufficient for our purposes. In particular, we could consider the map that takes a each component field of a supermultiplet to its derivative, \ie, $\phi_{i}(\tau) \mapsto \partial_{\tau}\phi_{i}(\tau)$ and $\psi_{i}(\tau) \mapsto \partial_{\tau}\psi_{i}(\tau)$. Such a map would exhibit any supermultiplet as a proper sub-supermultiplet of itself, which is not desirable. In this case, the corresponding quotient supermultiplet would have only finitely many degrees of freedom and no longer be an off-shell supermultiplet. Indeed, the derivative of each component field in the quotient would vanish, violating our off-shell condition that the component fields must not be constrained by any differential equations. To avoid such possibilities, we introduce \emph{strict} maps of off-shell supermultiplets, corresponding to strict homomorphisms of filtered modules via the identification given in Ref.\cite{r6--1}.

\begin{definition}\label{linalg}
A \emph{strict} homomorphism of off-shell supermultiplets is a linear map $\kappa: \mathcal{M}_{1} \to \mathcal{M}_{2}$, which is equivariant with respect to the supersymmetry algebra, and for which the quotient $\mathcal{M}_{2} / \kappa \mathcal{M}_{1}$ is also an off-shell supermultiplet.
\end{definition}

Note that this definition rules out the differentiation map above since the corresponding quotient is not an off-shell supermultiplet. More generally, the strict condition is equivalent to the condition that if a field in $\mathcal{M}_{2}$ is not in the image of $\kappa$, then its derivative cannot be in the image of $\kappa$, either. In order words, a strict homomorphism does not leave orphan degrees of freedom in its cokernel.

\Remk
If a homomorphism $\kappa:\mathcal{M}_{1}\hookrightarrow\mathcal{M}_{2}$ of off-shell supermodules is \emph{injective}, \ie, we have an embedding of $\mathcal{M}_{1}$ in $\mathcal{M}_{2}$, then the strict condition is equivalent to the condition that no component field of $\mathcal{M}_{1}$ can map to the derivative of a field in $\mathcal{M}_{2}$ given by the linear combination of the component fields of $\mathcal{M}_{2}$ and their derivatives. In Adinkra language, this means roughly that vertices of one Adinkra map to linear combinations of vertices of the other Adinkra.

The injective case is of greatest interest to us, as provides us with a suitable notion of sub-supermultiplets.

\begin{definition}
A \emph{strict} sub-supermultiplet of an off-shell supermultiplet $\mathcal{M}$ is an off-shell sub-supermultiplet $\mathcal{N} \subset \mathcal{M}$ such that the quotient $\mathcal{M}/\mathcal{N}$ is also an off-shell supermultiplet, or equivalently, such that the inclusion map $\mathcal{N}\hookrightarrow \mathcal{M}$ is a strict homomorphism.
\end{definition}

Given a strict sub-supermultiplet $\mathcal{N}$ contained in $\mathcal{M}$, we obtain a short exact sequence of off-shell supermultiplets
\begin{equation}\label{eq:exact}
0 \xrightarrow{\phantom{\;\alpha\;}} \mathcal{N} \xrightarrow{\;\alpha\;} \mathcal{M} \xrightarrow{\;\beta\;} \mathcal{M}/\mathcal{N} \xrightarrow{\phantom{\;\alpha\;}}
 0 \,.
\end{equation}
We say that the off-shell supermultiplet $\mathcal{M}$ is \emph{reducible} because it contains a strict sub-supermultiplet. In the representation theory of Lie groups or Lie algebras, reducible implies decomposable, \ie, if a representation $V$ has an invariant subspace or subrepresentation $U$, then there exists a complementary subrepresentation $W\subset V$ such that $V = U \oplus W$. We get such complements for free in Lie theory since these representations admit ad-invariant inner products, with respect to which the orthogonal complement of an invariant subspace is also an invariant subspace. However, this does not hold in general for representations of superalgebras, and we have no such inner product to determine our complements. So, as we saw earlier in this appendix, one can have reducible off-shell supermultiplets which nevertheless are not decomposable. In the decomposable case, we have an additional piece of information: a splitting of the exact sequence \eqref{eq:exact}, which is a map $\gamma:\mathcal{M}/\mathcal{N} \to \mathcal{M}$ such that the composition
$$ \mathcal{M}/\mathcal{N} \xrightarrow{\;\gamma\;} \mathcal{M} \xrightarrow{\;\beta\;} \mathcal{M}/\mathcal{N} $$
is the identity map, or in other words the quotient map $\beta$ has a right inverse $\gamma$. Such a splitting guarantees that we have a decomposition
$$\mathcal{M} = \mathcal{N} \oplus \,\text{im}(\mathcal{M}/\mathcal{N}),$$
or eqivalently that $\mathcal{N}$ has a complement in $\mathcal{M}$. For a general discussion of complemented submodules, see \cite[Chapter 4]{rLinAlg}.

Given the combinatorially growing number of chromotopologies\cite{r6-3}, and the combinatorially growing number of inequivalent ``hangings'' of Adinkras with any of the given chromotopologies, it may seem bewildering that Definition~\ref{linalg} indicates an even further increase in the number of off-shell supermultiplets which can be described in terms of Adinkras. In this sense, the Adinkras and adinkraic supermultiplets, as plentiful as they turn out to be, are merely to be regarded as the simpler building blocks in the toolbox of supersymmetry.

\vfill\clearpage
\bibliographystyle{elsart-numX}
\bibliography{Refs}
\end{document}